\def\boxit#1{\vbox{\hrule\hbox{\vrule\kern6pt\vbox{\kern6pt#1\kern6pt}\kern6pt\vrule}\hrule}}
\DeclarePairedDelimiterX{\inp}[2]{\langle}{\rangle}{#1, #2}
\begin{document}

\newcommand{\argmax}[0]{\mbox{argmax}}
\newcommand{\argmin}[0]{\mbox{argmin}}
\newcommand{\ol}[1]{\overline{#1}}

\theoremstyle{plain}
\newtheorem{thm}{Theorem}
\newtheorem{lem}{Lemma}
\newtheorem{prop}[thm]{Proposition}
\newtheorem{cor}[thm]{Corollary}

\theoremstyle{definition}
\newtheorem{defn}{Definition}[section]
\newtheorem{conj}{Conjecture}[section]
\newtheorem{exmp}{Example}[section]
\newtheorem{condition}{Condition}[section]

\theoremstyle{remark}
\newtheorem{rmk}{Remark}
\newtheorem{note}{Note}
\newtheorem{case}{Case}

\newcommand{\var}{\mbox{var}}
\newcommand{\Var}{\mbox{Var}}
\newcommand{\bb}{\mbox{\bf b}}
\newcommand{\bff}{\mbox{\bf f}}
\newcommand{\bx}{\mbox{\bf x}}
\newcommand{\by}{\mbox{\bf y}}
\newcommand{\bg}{\mbox{\bf g}}
\newcommand{\bA}{\mbox{\bf A}}
\newcommand{\ba}{\mbox{\bf a}}
\newcommand{\bB}{\mbox{\bf B}}
\newcommand{\bC}{\mbox{\bf C}}
\newcommand{\bD}{\mbox{\bf D}}
\newcommand{\bE}{\mbox{\bf E}}
\newcommand{\bF}{\mbox{\bf F}}
\newcommand{\bG}{\mbox{\bf G}}
\newcommand{\bH}{\mbox{\bf H}}
\newcommand{\bI}{\mbox{\bf I}}
\newcommand{\bU}{\mbox{\bf U}}
\newcommand{\bV}{\mbox{\bf V}}
\newcommand{\bQ}{\mbox{\bf Q}}
\newcommand{\bR}{\mbox{\bf R}}
\newcommand{\bS}{\mbox{\bf S}}
\newcommand{\bX}{\mbox{\bf X}}
\newcommand{\bY}{\mbox{\bf Y}}
\newcommand{\bZ}{\mbox{\bf Z}}
\newcommand{\br}{\mbox{\bf r}}
\newcommand{\bv}{\mbox{\bf v}}
\newcommand{\bL}{\mbox{\bf L}}
\newcommand{\bbP}{\mathbb{P} }
\newcommand{\bone}{\mbox{\bf 1}}
\newcommand{\bsone}{\mbox{\scriptsize \bf 1}}
\newcommand{\bzero}{\mbox{\bf 0}}
\newcommand{\bveps}{\mbox{\boldmath $\varepsilon$}}
\newcommand{\bet}{\mbox{\boldmath $\eta$}}
\newcommand{\bxi}{\mbox{\boldmath $\xi$}}
\newcommand{\bzeta}{\mbox{\boldmath $\zeta$}}
\newcommand{\beps}{\mbox{\boldmath $\varepsilon$}}
\newcommand{\bbeta}{\mbox{\boldmath $\beta$}}
\newcommand{\balpha}{\mbox{\boldmath $\alpha$}}
\newcommand{\bPsi}{\mbox{\boldmath $\Psi$}}
\newcommand{\bomega}{\mbox{\boldmath $\omega$}}
\newcommand{\hbbeta}{\hat{\bbeta}}
\newcommand{\hbeta}{\hat{\beta}}
\newcommand{\Bbeta}{ \bar{\beta}}
\newcommand{\bmu}{\mbox{\boldmath $\mu$}}
\newcommand{\bgamma}{\mbox{\boldmath $\gamma$}}
\newcommand{\mv}{\mbox{V}}
\newcommand{\bSigma}{\mbox{\boldmath $\Sigma$}}
\newcommand{\cov}{\mbox{cov}}
\newcommand{\beq}{\begin{eqnarray}}
\newcommand{\eeq}{\end{eqnarray}}
\newcommand{\beqn}{\begin{eqnarray*}}
\newcommand{\eeqn}{\end{eqnarray*}}
\newcommand{\Mstar}{{\mathcal{M}_{\star}}}
\newcommand{\lammax}{{\lambda_{\max}}}
\newcommand{\A}{{\bbP_n \bPsi_jY}}
\newcommand{\B}{{(\bbP_n \bPsi_j\bPsi_j^T)^{-1}}}
\newcommand{\ea}{{E \bPsi_jY}}
\newcommand{\eb}{{(E \bPsi_j\bPsi_j^T)^{-1}}}


\title{\large{\bf Boosting High Dimensional Predictive Regressions with Time Varying Parameters}}

\author{\textsc{Kashif Yousuf}
\thanks{ Department of Statistics, Columbia University, New York, NY 10027 (Email: ky2304@columbia.edu)}
\and \textsc{Serena Ng}
\thanks{(Corresponding author) Department of Economics and NBER, Columbia University, 420 W. 118 St. MC 3308, New York, NY 10027.(Email: sn2294@columbia.edu) 
\newline
Financial support from the National Science Foundation (SES-1558623) is gratefully acknowledged.
}
}
\clearpage
\date{\today}
\maketitle
\thispagestyle{empty}

\begin{abstract}
\begin{onehalfspace}
High dimensional predictive regressions are useful in wide range of applications.  However, the theory is mainly developed assuming that the model is stationary with time invariant parameters. This is at odds with the prevalent evidence for parameter instability in economic time series, but theories for parameter instability are mainly developed for models with a small number of covariates. In this paper, we present two $L_2$ boosting algorithms for estimating high dimensional models in which the coefficients are modeled as functions evolving smoothly over time and the predictors are locally stationary.  The first method uses componentwise local constant estimators as base learner, while the second relies on componentwise local linear estimators. We establish consistency of both methods, and address the practical issues of choosing the bandwidth for the base learners and the number of boosting iterations. In an extensive application to macroeconomic forecasting with many potential predictors, we find that the benefits to modeling time variation are substantial and they increase with the forecast horizon. Furthermore, the timing of the benefits suggests that the Great Moderation is associated with substantial instability in the conditional mean of various economic series.

\end{onehalfspace}
\end{abstract}
\noindent {\bf Keywords:} $L_2$ Boosting, Forecasting, Locally Stationary, Parameter Instability, Sparsity.

\noindent {\bf JEL Classification:} C1, C2.

\clearpage
\setcounter{page}{1}

\section{Introduction} \label{sec:introduction}

Due to the rapid improvements in the information technology, high dimensional time series datasets are frequently encountered in a variety of fields in economics and finance (see \cite{fan2011sparse,shapiro_2017} for examples). In these settings, the number of candidate predictors ($p_T$) is much larger than the number of samples ($T$), and accurate estimation and prediction is made possible by relying on some form of dimension reduction. \citet{ng-handbook} puts the methods used in high dimension predictive regressions into two classes: a dense class which assumes that the covariates have a low rank representation that can be exploited for subsequent modeling, and a sparse class which assumes that  the number of relevant predictors is far smaller than the number of predictors available. Research within the first class usually assumes a linear latent factor model which is estimated by principal components or partial least squares.\footnote{\cite{SW2003,bai2002determining} and \cite{Kelly2015}.} The second class treats the problem as one of variable selection in high dimension. Prominent methods in this class include screening, penalized likelihood, lasso, and boosting methods. 

This paper contributes to the literature in the second class. A key assumption made in the vast majority of works on sparse modeling is of a stationary underlying model with time invariant parameters.\footnote{Examples include \cite{MM2016}, \cite{Kock2015}, \cite{han2017high}, and \cite{basu2015} which focus on the Lasso or the adaptive Lasso, and \cite{lutz2006} which focuses on $L_2$ boosting for stationary VAR models.} The assumption is very restrictive in practice, as empirical evidence of parameter instability and time varying effects have been well documented in macroeconomics.\footnote{See \citep{Stock96,rossi2013advances,hamilton1989new}, asset pricing \citep{Welch2003,paye2006instability,rapach2010out, dangl2012predictive}, and exchange rate prediction \citep{schinasi1989}.} Parameter instability can be driven by structural changes in technological advancements, government or monetary policy changes, and preference shifts at the individual level \citep{Bin12}. Ignoring these instabilities can lead to large forecasting errors, with \cite{Clements96} and others even arguing that these instabilities are the main source of error for forecasting models. 

Consider a high dimensional linear time varying parameter (TVP) model:
\begin{equation} Y_t= \bm{\beta}_{t}\bm{x}_{t-h} +\epsilon_t  \textrm{ for } t=1,\ldots,T, \label{htvp}\end{equation} 
where $Y_t$ is the response,
$\bm{x}_{t-h}=(X_{1,t-h},\ldots,X_{p_T,t-h})$ is a $p_T$-dimensional vector of predictors (with $p_T >>T$), $\bm{\beta}=(\beta_{1,t},\ldots,\beta_{p_T,t})$ is a vector of time varying parameters, and $\epsilon_t$ are errors; the precise assumptions on the model will be stated in section \ref{methodology}. Given the evidence for parameter instability, the question remains on how to best represent and model this change, especially when dealing with high dimensional predictors. Parameter instability is most commonly represented in the econometrics literature by random walks or by one or more discrete structural breaks.\footnote{The first approach has a long history in macroeconomics, some examples include \cite{cogley2001evolving,primiceri2005time,koop2013large}. For the literature on structural breaks, see \cite{perron2006dealing,casini2018structural} for surveys.} Modeling variations by random walks can be quite restrictive as it imposes a specific structure on the evolution of the parameters. Discrete breaks  require knowledge of the break dates, and not all time variations are well characterized by discrete shifts. Technology and taste shifts are arguably evolving slowly over time. Smooth transition models as in \cite{terasvirta:94} are still tightly parameterized. Furthemore, these methods are mainly designed for a fixed $p_T$.  A third approach is to use rolling-window estimation to capture the  smooth change in the parameters. As will soon be clear, rolling-window estimation is a special case of our proposed approach with a particular choice of kernel and bandwidth. 

In this paper, we model these high-dimensional parameters as smooth functions of time whose functional forms are unknown and are estimated non-parametrically. We present two $L_2$ boosting algorithms which differ in their choice of base learners; the first uses componentwise local constant estimators as base learners, while the second relies on componentwise local linear estimators as base learners. We consider the use of local linear estimators since they have been shown to be a superior estimator theoretically, with smaller asymptotic bias at the boundaries of the sample \citep{cai2007}. We establish consistency of both our methods when dealing with high dimensional locally stationary predictors and errors with only polynomially decaying tails. Although we focus on linear time varying parameter models, $L_2$ boosting methods can easily be adapted to fit more general non-linear models by considering alternative base learners such as regression trees with varying degrees of depth. This makes the $L_2$ boosting framework more flexible than the often used $\ell_1$ penalized likelihood approaches.

The smooth TVP model considered in this paper has been studied in the econometrics literature for the case when the number of  predictors is fixed and assumed known. Under this assumption, \cite{robinson1989,robinson1991} studied the asymptotic properties of the local constant estimator of the coefficient functions. The theory was further developed in several directions.\footnote{ Some examples include: \cite{orbe2005,orbe2006} considered shape restricted estimation. \cite{cai2007} analyzed the asymptotic properties of the local linear estimator. \cite{inoue2017} considered the question of optimal bandwidth selection for the local constant estimator when using the uniform kernel. \cite{zhang2015time}, \cite{hu2018estimation}, and \cite{vogt2012nonparametric} allow for non-stationary predictors and non-linear time varying functions of these predictors. \cite{zhou2009local,zhou2010nonparametric} considered local linear quantile estimation, \cite{phillips2017estimating} obtained results for cointegration models, and \cite{chen2015modeling} dealt with models with endogenous predictors.}  To our knowledge, there were only two attempts at modeling sparse high dimensional smooth TVP models, both dealing with locally stationary sub-Gaussian predictors, and rely on $l_1$ regularization methods along with kernel smoothing to estimate the coefficient functions. In particular, \cite{ding2017} deals with locally stationary sparse VAR processes, and proposes a hybrid estimator which combines $l_1$ regularization with local constant estimation. \citet{lee2016} deals with models where the set of non-zero coefficient functions does not change with over time, and proposes a computationally intensive penalized local linear estimation method. Our work adds to this line of research by proposing $L_2$ boosting algorithms for high dimensional smooth TVP models characterized by (\ref{htvp}).  

Our methods compare favorably to more commonly used alternatives for modeling time varying parameters such as assuming the coefficients are stochastic and generated by a random walk, or using a rolling window estimator with a fixed window length. These models are typically estimated via MCMC, or other computationally intensive methods, which excludes the use of high dimensional datasets. Rolling window forecasts, although they are usually not presented this way, are actually equivalent to using a local constant estimator using a uniform kernel and a fixed bandwidth. This choice of fixed bandwidth is arbitrary and can lead to larger forecast errors vs using the optimal bandwidth \citep{inoue2017}. Additionally, local constant estimators have higher asymptotic bias at the boundary of the sample vs local linear estimators. In contrast, our $L_2$ boosting algorithms are capable of variable selection and estimation simultaneously at a very low computational cost even for very high dimensional data. Also, using non-parametric methods to estimate the time varying coefficient functions allows our method to perform well even under model misspecifications such as discrete breaks, stochastic coefficients generated by a random walk, and time invariant coefficients; see \cite{giraitis2013adaptive,inoue2017} and our simulations section for more details.

On the empirical side we include an extensive application to macroeconomic forecasting. Although parameter instability has long been established in the econometrics literature \citep{stock2003forecasting,stock2009forecasting,breitung2011testing}, the question of whether one can exploit this instability to improve macroeconomic forecasts is far less clear (see section \ref{Empirical} or \cite{rossi2013advances} for more details). Some issues which have hindered the utility of modeling time variation are: 1) the bias-variance tradeoff encountered when using a reduced sample for modeling, 2) misspecification and/or estimation error incurred when trying to estimate the nature of time variation, and 3) computational constraints restricting the use of high dimensional predictors when estimating traditional TVP models with stochastic coefficients. 

To analyze the effectiveness of modeling time variation with our methods, we use a panel of 123 monthly series from the FRED-MD database and focus on forecasting 8 major macroeconomic series over a range of forecast horizons. Using an out of sample period of over 47 years, we find that: \textbf{1)} the benefits of modeling time variation with our methods are substantial, especially when considering longer forecast horizons, \textbf{2)} the timing of these benefits suggests that the Great Moderation is associated with large amounts of parameter instability in the conditional mean of various economic series, and \textbf{3)} the benefits of modeling time variation appear to be confined to the high dimensional setting, as we confirm the results in \cite{Stock96} that modeling time variation in AR models offers little to no benefits for the majority of series.


The rest of the paper is organized as follows. Section \ref{sec:section 2} reviews the locally stationary framework, along with the functional dependence measure which will be used to quantify dependence. We also discuss the assumptions placed on the structure of the covariate and response processes; these assumptions are very mild, allowing us to represent a wide variety of stochastic processes which arise in practice. Section \ref{methodology} introduces our boosting algorithms for both local constant or local linear least squares base learners, and studies the asymptotic properties of these procedures. The asymptotic properties, and the number of predictors allowed depend on the strength of dependence, and the moment conditions of the underlying processes. Section \ref{simulations} presents results from Monte Carlo simulations, and sections \ref{Empirical} and \ref{results} contain our application to macroeconomic forecasting. Lastly, concluding remarks are in section \ref{disucussion}.




\section{The Econometric Framework}\label{sec:section 2}

We first start with a review of locally stationary processes which were first introduced by \cite{dahlhaus1996kullback,dahlhaus1997fitting} using a time varying spectral representation. This was expanded in \cite{dahlhaus2017towards} to a more general definition which facilitated theoretical results for a large class of non-linear processes; see \cite{dahlhaus2012locally} for a partial survey of the results pertaining to locally stationary processes. Heuristically speaking, a locally stationary process is a non-stationary process which can be well approximated by a stationary process locally in time. This is a convenient framework to model non-stationarity induced by smooth time varying parameters. Consider the model (\ref{htvp}), with $\bm{\beta}_t$ being a vector of unknown deterministic smooth functions of time, as a consequence $Y_t$ in (\ref{htvp}) is clearly non-stationary. Due to this non-stationarity, letting $T \rightarrow \infty$ will not lead to consistent estimates of $\bm{\beta}_t$, since future observations may not contain any information about the probabilistic structure of the process at the present time $t$. Therefore, it is common to work in the infill asymptotics framework with rescaled time $t/T \in [0,1]$, with $\bm{\beta}_t=\bm{\beta}(t/T)$ \citep{dahlhaus1997fitting,robinson1989,cai2007}. Letting $T \rightarrow \infty$ now implies that we observe $\bm{\beta}(t/T)$ on a finer grid within the same interval, thereby increasing the amount of local information available. Although this setting is not commonly seen in forecasting time series, a prediction theory is still possible. For example, we can view our data as having been observed for $t=1,\ldots,T/2$ (i.e. on the interval $[0,1/2]$), and we are forecasting the next few observations (see \cite{dahlhaus1997fitting,dahlhaus1996kullback}).  

For a formal description of locally stationary processes we use the definition and assumptions stated in \cite{dahlhaus2017towards} and \cite{richter2017cross}:

\begin{defn} Let $q>0$, and $||W||_q=(E|W|^{q})^{1/q}$. Let $Y_{t,T}, t=1,\ldots,T$ be a triangular array of stochastic processes. For each $u \in [0,1]$, let $\tilde{Y}_t(u)$ be a stationary and ergodic process satisfying:
\begin{enumerate}
    \item $D_q=\max\{\sup_{u\in [0,1]}||\tilde{Y}_t(u)||_q,\sup_{T\in N}\sup_{t=1,\ldots,T}||Y_{t,T}||_q\} < \infty$

    \item There exists $C_B>0$ such that uniformly in $t=1,\ldots,T$ and $u,v \in [0,1]$:
    \begin{align}
        ||\tilde{Y}_t(u)-\tilde{Y}_t(v)||_q\leq C_B|u-v|, \textrm{ } ||Y_{t,T}-\tilde{Y}_{t}(t/T)||_q \leq C_BT^{-1}
    \end{align}
\end{enumerate}
\label{defL}
\end{defn}
From the second assumption we obtain: $||Y_{t,T}-\tilde{Y}_t(u)||_q\leq O(|t/T-u|+T^{-1})$, thus for rescaled time points $t/T$ near $u$, the process $Y_{t,T}$ can be approximated by a stationary process $\tilde{Y}_t(u)$ with asymptotically negligible error. Consider the model used in \cite{robinson1989,cai2007}: $Y_{t,T}=\bm{\beta}(t/T)\bm{X}_{t}+\epsilon_{t}$, where $\bm{X}_t, \epsilon_t$ are stationary processes, and $\bm{\beta}(\cdot)$ is a lipschitz continuous function. Under these conditions $Y_{t,T}$ is a locally stationary process, with stationary approximation: $\tilde{Y}_{t}(u)=\bm{\beta}(u)\bm{X}_{t}+\epsilon_{t}$. A slightly more complicated example is a tvAR(1) process: $Y_{t,T}=\alpha(t/T)Y_{t-1,T}+\epsilon_t=\sum_{j=0}^{\infty}[\prod_{k=1}^{j-1}\alpha(\frac{t-k}{T})]\epsilon_{t-j}$. Intuitively one can see that if we assume $\alpha(\cdot)$ is lipschitz continuous then the process is locally stationary with stationary approximation: $\tilde{Y}_t(u)=\alpha(u)\tilde{Y}_{t-1}(u)+\epsilon_t$, and $||Y_{t,T}-\tilde{Y}_t(u)||_q \leq O(|t/T-u|+T^{-1})$.\footnote{Under appropriate conditions, more general non-linear time varying processes which satisfy the recursion:  $Y_{t,T}=G_{\epsilon_t}(Y_{t-1,T},\ldots,Y_{t-p,T},\max(t/T,0)), \textrm{ for }t\leq T$, can be shown to be locally stationary \citep{dahlhaus2017towards}. Examples of such processes include time varying ARMA, time varying GARCH, time varying VAR, and time varying random coefficient processes.} The stationary approximation is the key to estimation and formulating an asymptotic theory when dealing with locally stationary processes. Estimation of parameters such as $\alpha(u)$ and local covariances is carried out by assuming, for each rescaled time point $u$, that the process is essentially stationary on a small window around $u$. We then carry out estimation via stationary methods using observations within this window.\footnote{We note that assuming approximate stationarity on a small window is essentially the justification of the commonly used rolling window estimators.}
 
In order to establish asymptotic properties of our $L_2$ boosting procedures, we rely on the functional dependence measure introduced in \cite{Wu2005} and used in the context of locally stationary processes in \cite{dahlhaus2017towards, richter2017cross}. We first introduce the following notation: Let $\{e_t\}_{t \in \mathcal{Z}}$ be a sequence of iid random variables, and let $\mathcal{F}_t=(e_t,e_{t-1},\ldots)$, $\mathcal{F}_t^{*}=(e_t,e_{t-1},\ldots,e_{0}^{*},e_{-1},\ldots)$ with $e_{0}^{*}, e_t, t \in \mathbb{Z}$ being iid. Additionally, let $\mathcal{H}_t= (\bm{\eta}_t,\bm{\eta}_{t-1},\ldots)$, $\mathcal{H}_t^{*}=(\bm{\eta}_t,\bm{\eta}_{t-1},\ldots,\bm{\eta}_{0}^{*},\bm{\eta}_{-1},\ldots)$ with $\bm{\eta}_{0}^{*},\bm{\eta}_{t}, t\in \mathbb{Z}$ being iid random vectors. Throughout this paper, we assume the following structure for the stationary approximation for univariate processes (such as the response and error processes), and multivariate processes (such as the covariate process) respectively: \begin{align} \tilde{Y}_{t}(u)=g(u,\mathcal{F}_t) \textrm{ and } \bm{\tilde{x}}_{t}(u)=\bm{h}(u,\mathcal{H}_t)=(h_{1}(u,\mathcal{H}_t),\ldots,h_{p_T}(u,\mathcal{H}_t)), \label{form}
\end{align}
where $g(\cdot,\cdot)$, and $\bm{h}(\cdot,\cdot)$ are real valued measurable functions. These representations allow us to define the functional dependence measure as: $\delta_{q}^{\tilde{Y}(u)}(t)=||\tilde{Y}_{t}(u)-g(u,\mathcal{F}_t^{*})||_q$, and $\delta_{q}^{\tilde{X}_{j}(u)}(t)=||\tilde{X}_{j,t}(u)-h_{j}(u,\mathcal{H}_t^{*})||_q$. Additionally, we assume short range dependence of the form:
\begin{align}
\Delta_{0,q}^{\tilde{Y}}=\sum_{k=0}^{\infty} \sup_{u \in [0,1]}\delta_{q}^{\tilde{Y}(u)}(k) \leq \infty, 
\textrm{ and } \Phi_{0,q}^{\bm{\tilde{x}}}=\max_{j \leq p_T}\sum_{k=0}^{\infty} \sup_{u \in [0,1]}\delta_{q}^{\tilde{X}_{j}(u)}(k) \leq \infty,
\label{weakdependence0}
\end{align}
for some $q>2$ to be specified in the next section. 

We place assumptions on the stationary approximation rather than directly on the process itself. This leads to results using weaker assumptions, and  to more interpretable dependence measures. For an intuitive explanation of this measure, we consider the stationary approximation at time $u_0$ ($\tilde{Y}_t(u_0)$) and we obtain $\delta_{q}^{\tilde{Y}(u_0)}(k)=||\tilde{Y}_k(u_0)-g(u_0,\mathcal{F}_k^{*})||_q$. We can view $\delta_{q}^{\tilde{Y}(u_0)}(k)$ as measuring the dependence of $\tilde{Y}_k(u_0)$ on the innovation $\epsilon_0$, which for weakly dependent processes decreases suitably quickly as $k \rightarrow \infty$. For a concrete example, consider a stationary AR(1) process  $\tilde{Y}_t(u_0)=\sum_{j=0}^{\infty} a(u_0)^j e_{t-j}$ with $e_i$ iid, then $\delta_{q}^{\tilde{Y}(u_0)}(k)=|a(u_0)^k|||e_0-e_0^{*}||_q$, and $\Delta_{0,q}^{\tilde{Y}(u_0)}=||e_0-e_0^{*}||_q\sum_{k=0}^{\infty}|a(u_0)^k|$. Now in the locally stationary setting, we take the supremum over the rescaled time interval to account for the non-stationarity of the processes, thereby obtaining $\Delta_{0,q}^{\tilde{Y}}=||e_0-e_0^{*}||_q\sup_{u \in [0,1]}\sum_{k=0}^{\infty}|a(u)^k|$. A very wide variety of locally stationary processes encountered in practice including time varying linear processes, tv-ARMA, tv-GARCH, tv-TAR, and tv-VAR, and time varying random coefficient processes have stationary approximations which satisfy (\ref{weakdependence0}), and have geometrically decaying functional dependence measures (see \cite{dahlhaus2017towards}).

Compared to mixing coefficients, functional dependence measures are easier to interpret and compute since they are directly related to the data generating mechanism of the underlying process. For example, consider a stationary ARMA($p,q$) process, $\xi_t=e_t+\sum_{i=1}^{p}\alpha_i\xi_{t-i}+\sum_{j=1}^{q}\beta_je_{t-j}$ with $e_t$ iid, under appropriate conditions we can show this process is $\beta$-mixing with $\beta(t)=O(\rho^{t})$ and $\rho \in (0,1)$ \citep{Doukhan94}. However, there exists no known mapping between the parameters and $\rho$ \citep{mcdonald2015}. In contrast, by using the moving average representation for $\xi_t=\sum_{j=0}^{\infty} f_j e_{i-j}$, we see the functional dependence measure is directly related to the data generating process: $\delta_{q}(\xi_t)=O(f_t)$. Additionally, in many cases using functional dependence measures also requires less stringent assumptions \citep{Yousuf2017,Wu2005}. \footnote{The discussions focus on stationary processes and we note that verifying mixing conditions for locally stationary processes usually requires additional assumptions, as the techniques for showing strong mixing for stationary processes no longer apply \citep{fryzlewicz2011mixing}. In contrast, since we place our dependence assumptions on the stationary approximation we can directly use results for functional dependence measures obtained in the stationary setting.} We do note that functional dependence measures are restricted to a more limited class of processes, specifically those possessing the representation (\ref{form}). Fortunately, this is a very weak restriction as virtually all time series models used in practice have this representation (see \cite{hormann2010weakly,Wu2011} and references therein).

\section{Boosting High Dimensional TVP Models}\label{methodology}

Ever since the introduction of AdaBoost in the 1990's \citep{freund1997decision}, boosting algorithms have been one of the most successful and widely utilized machine learning methods \citep{ESL}. AdaBoost, which was developed for classification, consisted of iteratively fitting a series of weak classifiers or learners onto reweighted data and taking a weighted average of the predictions from each of these simple models. The success of AdaBoost was originally thought to originate from averaging many weak classifiers and from a reweighting scheme which placed large weights on heavily misclassified observations. Later work by \cite{friedman2001greedy}, and \cite{friedman2000additive} established AdaBoost as a gradient descent algorithm in function space using an exponential loss function. This functional gradient descent view connected boosting to the common optimization view of statistical inference, and led to extensions of boosting beyond the realm of classification. \cite{friedman2001greedy} proposed several new boosting algorithms using alternative base learners and loss functions including squared error loss, leading to $L_2$ boosting. Additionally, \cite{efron2004least} and \cite{ESL}, made connections for linear models between $L_2$ boosting and common statistical procedures such as the Lasso and forward stagewise regression.\footnote{For theoretical connections one can consult Chapter 16.2 of \cite{ESL}, and additional works such as \cite{hastie2007forward,rosset2004boosting}.} \footnote{Empirical comparisons between boosting with linear least squares learners and the lasso have shown close performance with boosting performing slightly better in the case of high correlated predictors \citep{hastie2007forward,hepp2016approaches}.} These insights shed light on $L_2$ boosting as a method which performs variable selection and shrinkage leading to sparse models. 
For an excellent survey of the statistical view of boosting and results pertaining to several common boosting algorithms, one can consult \cite{Buhlmann2007}. \footnote{Additionally, one can consult \cite{Buhlmann2006} for extensions of boosting to stationary VAR processes, and \cite{bai2009boosting,NgBoosting2014} for applications to macroeconomic forecasting and recession classification respectively.}


We are interested in estimating the following model:
\begin{align}
    Y_{t,T}=\bm{\beta}'(t/T)\bm{x}_{t-h,T}+\epsilon_{t,T} \textrm{ for } t=1,\ldots,T,\label{htvp2}
\end{align}
where $Y_{t,T}$ is the response, $\bm{x}_{t-h,T}=(X_{1,t-h,T},\ldots,X_{p_T,t-h,T})'$ is a $p_T$-dimensional vector of locally stationary predictors (with $p_T >>T$), $\bm{\beta}'(t/T)=(\beta_{1}(t/T),\ldots,\beta_{p_T}(t/T))$ is a vector of unknown functions of time defined on the grid $[0,1]$, which becomes finer as $T \rightarrow \infty$, and $\epsilon_{t,T}$ denotes the locally stationary error process with $E(\epsilon_{t,T}\bm{x}_{t-h,T})=0 \textrm{ } \forall \textrm{    } t,T$. We denote the stationary approximation of the response as $\tilde{Y}_t(u)=\bm{\beta}'(u)\tilde{\bm{x}}_{t-h}(u)+\tilde{\epsilon}_{t}(u)$. To simplify notation, we discuss estimation at the boundary point $u=T/T=1$. Before we introduce our boosting algorithms, it helps to first introduce the population version of componentwise $L_2$ boosting with linear base learners as applied to the stationary approximations $(\tilde{Y}_{T}(u),\tilde{\bm{x}}_{T-h}(u))$:\\
\\
\textbf{\textit{Algorithm: Population level $L_2$ Boosting}}

\begin{enumerate}
    \item Set $F^{(0)}(u,\tilde{\bm{x}}_{T-h}(u))=E(\tilde{Y}_{T}(u))$
    \item For $m=1,\ldots,M_{T}$, where $M_{T}$ is some stopping iteration, do:
    \begin{enumerate}
        \item Compute $\tilde{U}^{(m)}_{T}(u)=\tilde{Y}_T(u)-F^{(m-1)}(u,\tilde{\bm{x}}_{T-h}(u))$.
        \item Let $\mathcal{S}_m=\argmin_{j \leq p_T}E(\tilde{U}_{T}^{(m)}(u)-\alpha_j^{(m)}(u)\tilde{X}_{j,T-h}(u))^2$,\\
                where $\alpha^{(m)}_j(u)=E(\tilde{X}_{j,T-h}(u)\tilde{U}_{T}^{(m)}(u))/E(\tilde{X}^2_{j,T-h}(u))$.
        \item Update $F^{(m)}(u,\tilde{\bm{x}}_{T-h}(u))=F^{(m-1)}(u,\tilde{\bm{x}}_{T-h}(u))+\upsilon \cdot \alpha^{(m)}_{\mathcal{S}_m}(u)\tilde{X}_{\mathcal{S}_m,T-h}(u)$, where $\upsilon \in (0,1]$ is a step length factor.
    \end{enumerate}
    \item Output ${F}^{(M_T)}(u,\tilde{\bm{x}}_{T-h}(u))=F^{(0)}(u,\tilde{\bm{x}}_{T-h}(u))+\upsilon \sum_{m=1}^{M_T}\alpha^{(m)}_{\mathcal{S}_m}(u)\tilde{X}_{\mathcal{S}_m,T-h}(u)$

\end{enumerate}

Although we use linear base learners, we note that our methods can be extended to a broader class of models by using a more general base learner, such as $g_j(u,\tilde{X}_{j,T-h}(u))=E(\tilde{Y}_{T}(u)|\tilde{X}_{j,T-h}(u))$, and estimating using kernel regressions or smoothing splines. For the corresponding sample version of $L_2$ boosting with linear base learners, it is informative to consider the case of stationary response and predictor processes. In the stationary setting, we can remove the dependence on $T$ and the sample version of our algorithm simplifies to $\mathcal{\hat{S}}_m=\argmin_j\sum_{t=1}^{T}(U^{(m)}_{t}-\hat{\alpha}^{(m)}_j X_{t,j})^2$, where $\hat{\alpha}^{(m)}_j=T^{-1}\sum_{t=1}^{T}X_{j,t-h}U^{(m)}_t$, assuming $E(X_t),E(Y_t)=0$, and $E(X_t^2)=1$. For the case of locally stationary response and predictor processes the situation is more complicated as the above estimator is inconsistent for $\alpha_j^{(m)}(u)$. Intuitively, this inconsistency arises since observations ``far" from rescaled time $u$ contain little information about the probabilistic structure of the processes at time $u$.

To proceed with estimation in the locally stationary setting, $\forall m$ and $j \leq p_T$, we have $U^{(m)}_{t,T}=\alpha^{(m)}_j(t/T)X_{j,t-h,T}+\epsilon_{j,t,T}$, where $\alpha^{(m)}_j(t/T)=E(\tilde{X}_{j,t-h}(t/T)\tilde{U}^{(m)}_{T}(t/T))/E(\tilde{X}^2_{j,t-h}(t/T))$.\footnote{Recall that $E(X_{j,t-h,T}U^{(m)}_{t,T})/E(X_{j,t-h,T}^2)=\alpha^{(m)}_j(t/T)+O(T^{-1})$ by local stationarity.} By local stationarity and assuming appropriate smoothness conditions, we have the following expansion: 
\begin{align}
  \alpha^{(m)}_{j}(t/T)=\alpha^{(m)}_j(u)+\dot{\alpha}_j^{(m)}(u)(t/T-u)+\ddot{\alpha}_j^{(m)}(c)(t/T-u)^2,  
\label{Taylor}
\end{align}
where $\dot{\alpha}(\cdot),\ddot{\alpha}(\cdot)$ denote the first and second derivative respectively of the function, with $c$ between $u$ and $t/T$. To compute the local constant estimate for $\alpha^{(m)}_{j}(u)$, we ignore the linear term in the Taylor expansion to obtain the following approximation: $U^{(m)}_{t,T} \approx \alpha^{(m)}_{j}(u)X_{j,t-h,T}+\epsilon_{j,t,T}$ for $t/T$ near $u$. The local constant estimator for $\alpha^{(m)}_{j}(u)$ is then 
\begin{align}
    \hat{\alpha}^{(m)}_{lc,j}(u)=\frac{\sum_{t=1}^{T}K_b(t/T-u)X_{j,t-h,T}U^{(m)}_{t,T}}{\sum_{t=1}^{T} K_b(t/T-u)X^2_{j,t-h,T}},
\end{align}
where $K_b(x)=b^{-1}K(x/b)$, is a kernel function and $b$ is the bandwidth. Therefore, $\hat{\alpha}^{(m)}_{lc,j}(u)$ is a weighted least squares estimate, with the weights given by the kernel values. For now, one can think of this estimator as aiming to use information from observations ``near" time $T$, while discounting information from distant points. A simple example of the local constant estimate is the rolling window estimate: using the uniform kernel $K(x)=\mathbbm{1}_{|x| \leq 1}$, with a fixed bandwidth $b=b_0$, we obtain a rolling window estimate which uses the last $b_0T$ observations in our sample.

The local constant estimate is widely used for estimating time varying effects, however the Taylor expansion of $\alpha^{(m)}_{j}(t/T)$ suggests we can obtain a better approximation by using the linear term in the expansion (\ref{Taylor}). This was analyzed rigorously in \cite{cai2007}, which showed that for boundary points the local linear estimator is theoretically superior to the local constant estimator. Using the expansion (\ref{Taylor}), we obtain: $U^{(m)}_{t,T} \approx \alpha^{(m)}_{j}(u)X_{j,t-h,T}+\dot{\alpha}^{(m)}_{j}(u)X_{j,t-h,T}(t/T-u)+\epsilon_{j,t,T}$, for $t/T$ near $u$. Let $\bm{Z}_{j,t-h,T}\bm{\theta}^{(m)}_{j}(u)$ where $\bm{Z}_{j,t-h,T}=(X_{j,t-h,T},X_{j,t-h,T}(t/T-u))$, $\bm{\theta}^{(m)}_{j}(u)=(\alpha^{(m)}_{j}(u),\dot{\alpha}^{(m)}_{j}(u))'$. The local linear estimate is obtained by minimizing a weighted least squares criterion:
\begin{align}
   \hat{\bm{\theta}}_j^{(m)}(u)=(\hat{\alpha}^{(m)}_{ll,j}(u),\hat{\dot{\alpha}}^{(m)}_{ll,j}(u))= \argmin_{\bm{\theta}_j^{(m)}(u)}\sum_{t=1}^{T}K_b(t/T-u)(U^{(m)}_{t,T}-\bm{Z}_{t-h,T}\bm{\theta}_j^{(m)}(u))^2 
\end{align}

Using these estimators we can formulate our $L_2$ boosting algorithm for (\ref{htvp2}) using local constant, and local linear  estimators as base learners. We first start with our first algorithm which uses local constant estimators:\\
\\
\textbf{\underline{\textit{Algorithm 1: Local Constant $L_2$ Boosting (LC-Boost)}}}

\begin{enumerate}
    \item Set $\hat{F}_{lc}^{(0)}(u,\bm{x}_{t,T})=T^{-1}\sum_{i=h+1}^{T}K_b(i/T-u)Y_{i,T}$, for $t=1,\ldots,T-h$
    \item For $m=1,\ldots,M_{T}$, where $M_{T}$ is some stopping iteration, do:
    \begin{enumerate}
        \item Compute the residuals $\hat{U}^{(m)}_{i,T}=Y_{i,T}-\hat{F}_{lc}^{(m-1)}(u,\bm{x}_{i-h,T})$ for $i=h+1,\ldots,T$.
        \item Let $\hat{\mathcal{S}}_m=\argmin_{j \leq p_T}\sum_{i=h+1}^{T}K_b(i/T-u)(\hat{U}^{(m)}_{i,T}-\hat{\alpha}_{lc,j}^{(m)}(u)X_{j,i-h,T})^2$
        \item Update $\hat{F}_{lc}^{(m)}(u,\bm{x}_{i-h,T})=\hat{F}_{lc}^{(m-1)}(u,\bm{x}_{i-h,T})+\upsilon \hat{\alpha}^{(m)}_{lc,\mathcal{S}_m}(u)X_{\mathcal{S}_m,i-h,T}$, where $\upsilon \in (0,1]$ is a step length factor.
    \end{enumerate}
    \item Output $\hat{F}_{lc}^{(M_T)}(u,\bm{x}_{T-h,T})=\hat{F}_{lc}^{(0)}(u,\bm{x}_{t,T})+\upsilon\sum_{m=1}^{M_T} \hat{\alpha}^{(m)}_{lc,\mathcal{S}_m}(u)X_{\mathcal{S}_m,T-h,T}$
\end{enumerate}

Let $\bm{z}_{t,T}=(\bm{x}_{t,T},\bm{x}_{t,T}(t/T-u))$, our boosting algorithm using local linear estimates as base learners is:\\
\\
\textbf{\underline{\textit{Algorithm 2: Local Linear $L_2$ Boosting (LL-Boost)}}}

\begin{enumerate}
    \item Set $\hat{F}_{ll}^{(0)}(u,\bm{x}_{i-h,T})=T^{-1}\sum_{i=h+1}^{T}K_b(i/T-u)Y_{i,T}$, for $i=h+1,\ldots,T$
    \item For $m=1,\ldots,M_{T}$, where $M_{T}$ is some stopping iteration, do:
    \begin{enumerate}
        \item Compute the residuals $\hat{U}^{(m)}_{i,T}=Y_{i,T}-\hat{F}_{ll}^{(m-1)}(\bm{x}_{i-h,T})$ for $i=h+1,\ldots,T$.
        \item Let $\hat{\mathcal{S}}_m=\argmin_{j \leq p_T}\sum_{i=1}^{T}K_b(i/T-u)(\hat{U}^{(m)}_{i,T}-\bm{Z}_{j,i-h,T}\hat{\bm{\theta}}_{j}^{(m)}(u))^2$.\\
        \item Update $\hat{F}_{ll}^{(m)}(u,\bm{x}_{i-h,T})=\hat{F}^{(m-1)}_{ll}(u,\bm{z}_{i-h,T})+\upsilon \cdot \bm{Z}_{\mathcal{S}_m,i-h,T}\hat{\bm{\theta}}^{(m)}_{\mathcal{S}_m}(u)$, where $\upsilon \in (0,1]$ is a step length factor.
    \end{enumerate}
    \item Output $\hat{F}_{ll}^{(M_T)}(u,\bm{x}_{T-h,T})=\hat{F}_{ll}^{(0)}(u,\bm{x}_{i-h,T})+\upsilon \sum_{m=1}^{M_T}\bm{Z}_{\mathcal{S}_m,T-h,T}\hat{\bm{\theta}}^{(m)}_{\mathcal{S}_m}(u)$

\end{enumerate}

We see that boosting is a stagewise estimation procedure, where at each stage only one learner is updated and the previously selected terms are unchanged. This stagewise fitting procedure induces regularization through limiting the number of steps ($M_T$), and the step length factor ($\upsilon$). We usually fix the the step-length factor ($\upsilon$) to a low number such as $\upsilon=.1$, making the stopping iteration ($M_T$) akin to the regularization parameter of the Lasso.\footnote{Given that each predictor can be selected multiple times, especially for low values of $\upsilon$, the number of predictors in the estimated model is $\leq M_T$, and all predictors which have not been selected by step $M_T$ have an effect of zero.} In light of this, boosting can be thought of as a close relative of the lasso, with the advantage of being able to approximate the $\ell_1$ penalized solution in situations where it is impossible or computationally burdensome to compute the Lasso solution \citep{friedman2004discussion}. 

By viewing boosting as a general regularized function estimation procedure, we can formulate a generic local constant boosting procedure which can be easily be computed for a wide variety of base learners and (almost everywhere) differentiable loss functions ($L(\cdot,\cdot)$).  \\
\\
\textbf{\underline{\textit{Algorithm 3: Generic Local Constant Boosting}}}

\begin{enumerate}
    \item Set $\hat{F}_{G}^{(0)}(u,\bm{x}_{t,T})=\argmin_{c} T^{-1}\sum_{i=h+1}^{T}K_b(i/T-u)L(Y_{i,T},c)$, for $t=1,\ldots,T-h$
    \item For $m=1,\ldots,M_{T}$, where $M_{T}$ is some stopping iteration, do:
    \begin{enumerate}
        \item Compute the pointwise negative gradient: $U_{i,T}^{(m)}=\frac{d}{dF} L(Y_{i,T},F)\Bigr|_{\substack{F=\hat{F}_{G}^{(m-1)}(u,\bm{x}_{i-h,T})}}$ evaluated at  $i=h+1,\ldots,T$.
        \item Let $\hat{\mathcal{S}}_m=\argmin_{j \leq p_T}\sum_{i=h+1}^{T}K_b(i/T-u)(\hat{U}^{(m)}_{i,T}-\widehat{g}_{j}^{(m)}(u,X_{j,i-h,T}))^2$
        \item Update $\hat{F}_{G}^{(m)}(u,\bm{x}_{i-h,T})=\hat{F}_{G}^{(m-1)}(u,\bm{x}_{i-h,T})+\upsilon \widehat{g}_{\mathcal{S}_m}^{(m)}(u,X_{\mathcal{S}_m,i-h,T})$, where $\upsilon \in (0,1]$ is a step length factor.
    \end{enumerate}
    \item Output $\hat{F}_{G}^{(M_T)}(u,\bm{x}_{T-h,T})=\hat{F}_{G}^{(0)}(u,\bm{x}_{t,T})+\upsilon\sum_{m=1}^{M_T} \widehat{g}_{\mathcal{S}_m}^{(m)}(u,X_{\mathcal{S}_m,T-h,T})$
\end{enumerate}
The algorithm can be modified to allow $g_{j}(u,\cdot)$ to be a function of several variables e.g. a predictor along with a number of its lags.


\section{Implementation}
\label{hyperparameter}

Implementation of these algorithms is very simple and can be carried out using existing software packages. We first discuss the choice of the kernel function $K(\cdot)$, bandwidth ($b$), stopping iteration ($M_{T}$), and step length factor ($\upsilon$). We set $\upsilon=.1$, which is the default choice in statistical software packages and applied work \citep{Buhlmann2007,friedman2001greedy,hofner2014model}. In non-parametric statistics and machine learning the most commonly used kernels are the Gaussian Kernel and the Epanechnikov kernel $K(u)=.75(1-u^2)\mathbbm{1}_{|u| \leq 1}$, while in forecasting the uniform kernel $\mathbbm{1}_{|u| \leq 1}$ is more widely used. Both the uniform kernel and the Epanechnikov Kernel use a subset of the sample, with the Epanechnikov kernel also downweighting more distant observations within this subset. The Gaussian kernel does not truncate the sample, instead it smoothly downweights more distant observations. It has a much smoother downweighting scheme than the Epanechnikov kernel, which can be beneficial in many applications.\footnote{We decide to use the uniform kernel in our applications due to its close connections with the rolling window estimator. Using the Gaussian Kernel gave us similar results.} In general however, the choice of a kernel does not have much impact on the performance, as opposed to the selection of the bandwidth parameter which is crucial.

We first discuss bandwidth selection for an out of sample forecasting exercise. To help with exposition, we use a concrete example: assume we have monthly data ranging from 1960:1 to 2018:8, giving us about $\sim 700$ observations. We begin our forecasts on 1970:1 and move forward utilizing an expanding window framework. We use one-sided kernels to avoid looking into the future. We choose our bandwidth parameter using a cross validation approach. We first form a grid of values $B=(b_1,\ldots,b_n)$ from which to select the bandwidth parameter.  For each forecast, our cross validation procedure uses the last $\omega$ (where $\omega$ is chosen by the researcher) observations of our sample for an out of sample forecasting exercise. We then choose the bandwidth which minimizes the MSFE over this sub-sample. Therefore, the selected bandwidth is:
\begin{align*}
    b^{*}_{T_0}=\argmin_{b_i \in B}\omega^{-1}\sum_{\tau=T_0-\omega}^{T_0-h}(Y_{\tau,T}-\hat{F}_{\tau,b_i}^{(M_T)}(\tau/T,\bm{x}_{\tau-h,T}))^2,
\end{align*}
where $\hat{F}_{\tau,b_i}^{(M_T)}(\tau/T,\bm{x}_{\tau-h,T})$ refers to the LC-Boost or LL-Boost estimate of $\bm{x}_{\tau-h,T}\beta(\tau/T)$ using only observations until time $\tau$, and the bandwidth $b_i$. For our first out of sample forecast we set $T_0=120$, which is the length of the sample available at the time, and for each additional forecast we increment $T_0$ by 1 until we reach the end of the sample.\footnote{In order to simplify implementation, for each out of sample forecast we let $T=T_0$ which is the sample size available for estimation at the time of the forecast. We could alternatively define $T$ as the sample size available at the last out of sample forecast date ($T \approx 700$), and we still achieve the same estimates, but we would have to vary our grid $B$ over time.}  In the special case of using LC-Boost with a one sided uniform kernel, we are selecting the optimal window size at each time point, via cross validation, for a rolling window forecast. With the bandwidths representing the fraction of the sample available at the time of the forecast that we are using for estimation.

For in-sample estimation problems, two sided kernels are used in our algorithms with a weighted leave one out cross validation procedure to select the bandwidth. The procedure is as follows:
\begin{align*}
    b^{*}_{T_0}=\argmin_{b_i \in B}T^{-1}\sum_{\tau=h}^{T}(Y_{\tau,T}-\hat{F}_{lc,-\tau,b_i}^{(M_T)}(\tau/T,\bm{x}_{\tau-h,T}))^2K_{b_i}(\tau/T-T_0/T),
\end{align*}
where $\hat{F}_{lc,-\tau,b_i}^{(M_T)}(\tau/T,\bm{x}_{\tau-h,T})$ refers to the estimate of $\bm{x}_{\tau-h}\beta(\tau/T)$, which uses all observations except $(Y_{\tau,T},\bm{x}_{\tau-h,T})$. The kernel in the above equation discounts errors far away from the time point $t_0$ when selecting the optimal bandwidth. This procedure gives us a bandwidth for each time point in the sample, and if one wants a single bandwidth for all time points, the kernel can be removed. 

To select the stopping iteration $M_{T}$, we specify an upper bound for the number of iterations $M_{upp}$ (we set $M_{upp}=100$), where $M_{T} \leq M_{upp}$. The stopping iteration is then selected using the corrected AIC ($AIC_{c}$) statistic given in \cite{Buhlmann2006}:
$$M_{T}=\argmin_{m \leq M_{upp}} AIC_c(m),$$ where $AIC_c(m)$ is the AIC of the model using $m$ iterations.\footnote{Alternatively, we can jointly select $M_{T}$ and the bandwidth $b^{*}_{T_0}$ by forming a two dimensional grid and selecting the optimal combination using the cross validation procedure described earlier. We decide to use the $AIC_c$ statistic in this work. We note that when dealing with very large sample sizes and/or more complicated base learners which are a function of more than one variable, using cross validation to select $M_T$, using a moderately sized grid, can often be quicker since calculation of the corrected AIC requires computing the trace of the Hat matrix.} 

Our methods can be computed extremely quickly using the existing R package \textbf{mboost}. Our base learners are univariate or bivariate weighted least squares estimates which can be implemented through existing functions in the package once we specify the kernel values as weights. We can also  implement the generic local constant boosting algorithm  for wide a variety of base learners and loss functions such as absolute loss, Huber loss and quantile loss.\footnote{We refer the reader to \cite{hofner2014model} which provides an excellent introduction and tutorial to the \textbf{mboost} package. It also lists the wide variety of base learners and loss functions supported by the package.} As an example, to obtain quantiles for our forecasts, we specify the quantile loss for a given quantile\footnote{See \cite{fenske2011identifying} for more details on the quantile boosting algorithm.}, and compute the optimal bandwidth for our base learners by using the cross validation procedure mentioned above. A density forecast can be obtained from these estimated quantiles by using the procedure outlined in \cite{adrian2019vulnerable}.

\section{Asymptotic Theory}
In order to prove our asymptotic results, we need the following assumptions:

\begin{condition}\label{ConditionA}
Assume $\sup_{u \in [0,1]}|\bm{\beta}(u)|_1 < \infty$ 
\end{condition}
\begin{condition}\label{ConditionB}
Assume the error and the covariate processes are locally stationary and have representations given in (\ref{form}). Additionally, we assume the following decay rates $\Phi_{m,r}^{\bm{x}} = O(m^{-\alpha_x}), \Delta_{m,q}^{\epsilon} =O(m^{-\alpha_{\epsilon}})$, for some $\alpha_x, \alpha_{\epsilon} > 0$, $q > 2, r > 4 $ and $\tau=\frac{qr}{q+r} > 2$. 
\end{condition}

\begin{condition}\label{ConditionD}
Let $\Sigma_{\tilde{x}}(u)=E(\tilde{\bm{x}}'_{t}(u)\tilde{\bm{x}}_{t}(u))$ be the covariance matrix function. For $u \in [0,1]$, assume that $\bm{\beta}(u),\Sigma_{\tilde{x}}(u) \in \mathcal{C}^2[0,1]$, where $\mathcal{C}^2[0,1]$ denotes the class of functions defined on $[0,1]$ that are twice differentiable with bounded derivatives.  
\end{condition}
\begin{condition}\label{ConditionE}
The kernel function $K(u)$ is bounded and symmetric, and of bounded variation with compact support. Additionally, the bandwidth ($b$) satisfies $bT=S_T=O(T^{\psi})$, where $\psi \in (0,1)$.
\end{condition}

Condition \ref{ConditionA} requires $\ell_1$ sparsity of the time varying coefficients, and allows the active set of predictors to change over time. Our asymptotic results do not require sparsity in the number of non-zero coefficients ($\ell_0$ sparsity). Condition \ref{ConditionB} assumes the covariate and error processes are locally stationary, and presents the dependence and moment conditions on these processes, where higher values of $\alpha_x,\alpha_{\epsilon}$ indicate weaker temporal dependence. We assume our predictor and error processes have at least $r>4$ and $q>2$ finite moments respectively. Examples of processes satisfying condition \ref{ConditionB} were given in section \ref{sec:section 2}. 

Given that $\bm{x}_{t-h}$ can contain lags of $Y_{t,T}$, an example of a model which satisfies the above conditions is as follows: Let $\bm{W}_{t,T}=(Y_{t,T},\bm{z}_{t,T})$, where $\bm{z}_{t,T}$ represents our exogenous series, and $\bm{W}_{t,T}=\sum_{i=1}^{\ell}\bm{A}_i(t/T)\bm{W}_{t-i,T}+\bm{\eta}_{t}$. Then the stationary approximation is $\widetilde{\bm{W}}_t(t/T)=\sum_{i=1}^{\ell}\bm{A}_i(t/T)\widetilde{\bm{W}}_{t-i}(t/T)+\bm{\eta}_{t}$, with cumulative functional dependence measure $\Phi_{0,r}^{\bm{\tilde{W}}}=\sup_{u \in [0,1]}\sum_{k=0}^{\infty}O(\lambda_{\max}(\bm{A^}{*}(u))^k)$\citep{chen2013}, where $\bm{A}^{*}(u)$ is the companion matrix. We can then define $\bm{x}_{t-1,T}=(\bm{W}_{t-1,T},\ldots,\bm{W}_{t-l,T})$, and $\bm{\beta}(t/T)$ as the first row of the companion matrix $\bm{A}^{*}(u)$. We weaken the assumptions placed in the works \cite{cai2007, robinson1989,Bin12} which restricted the predictors and errors to be stationary, thus ruling out models with lagged dependent variables. Compared to previous works on high dimensional TVP models, such as \cite{ding2017,lee2016}, we use a different dependence framework, and allow the predictors and errors to have polynomially decaying tails.

Condition \ref{ConditionD} is a sufficient condition to guarantee that the expansion (\ref{Taylor}) exists, i.e: $\alpha_{j}^{(m)}(u) \in \mathcal{C}_{2}[0,1], \forall m$ and $j \leq p_T$. Sufficient conditions needed for smoothness of the covariance matrix function were given in \cite{ding2017} for the case of locally stationary VAR processes, and one can consult \cite{dahlhaus2017towards} for sufficient conditions for more general processes. Condition \ref{ConditionE} is a standard condition and it includes the commonly used Epanechnikov ($K(u)=.75(1-u^2)\mathbbm{1}_{|u| \leq 1}$) and uniform ($K(u)=\mathbbm{1}_{|u| \leq 1}$) kernels. It also places the standard conditions on the effective sample size $S_T$. Let
\begin{align*}
a_T&=\Bigg[TS_T^{-\tau+\tau\kappa}+p_T TS_T^{-r/2+r\kappa} +p_T\exp\left(-S_T^{1-2\kappa}\right)+ \exp\left(-S_T^{1-2\kappa}\right)\Bigg]
\end{align*}
The following theorem presents the uniform consistency, over all rescaled time points, of LC-Boost and LL-Boost. 
\begin{thm}{}
Let $\bm{x}^{*}_{t-h,T}$ denote a new predictor variable, independent of and with the same distribution as $\bm{x}_{t-h,T}$. Let $\kappa \in (0,1/2)$ be such that  $\kappa < \psi^{-1}-1$,
 Suppose that conditions \ref{ConditionA}, \ref{ConditionB}, \ref{ConditionD}, and \ref{ConditionE} hold. Then 
 \begin{itemize}
    \item[a.] on a set with probability at least $1-O(p_Ta_T)$, our LC-Boost estimate $\hat{F}^{(M_T)}_{lc}(\cdot,\cdot)$ satisfies: 
$\sup_{u \in [0,1]} E(|\hat{F}^{(M_T)}_{lc}(u,\bm{x}^{*}_{uT-h,T}) -\bm{\beta}'(u)\bm{x}^{*}_{uT-h,T}|^2)=o_p(1) \quad (T \rightarrow \infty)$ for some sequence $M_T \rightarrow \infty$ sufficiently slowly,
\label{theorem1} 
\item[b.] on a set with probability at least $1-O(p_Ta_T)$,  our LL-Boost estimate $\hat{F}^{(M_T)}_{ll}(\cdot,\cdot)$ satisfies  
$\sup_{u \in [0,1]} E(|\hat{F}^{(M_T)}_{ll}(u,\bm{x}^{*}_{uT-h,T}) -\bm{\beta}'(u)\bm{x}^{*}_{uT-h,T}|^2)=o_p(1) \quad (T \rightarrow \infty)$
for some sequence $M_T \rightarrow \infty$ sufficiently slowly,
\label{theorem2} 
\end{itemize}
\end{thm}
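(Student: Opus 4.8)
The plan is to follow the three-step template for analyzing $L_2$ boosting in \cite{Buhlmann2006}, adapted to the locally stationary, kernel-weighted base learners used here: (i) show that the \emph{population} boosting recursion applied to the stationary approximations converges to the true regression function as the iteration count grows; (ii) show that the kernel-weighted \emph{sample} moments concentrate, uniformly over $u \in [0,1]$ and over $j \le p_T$, around their population ($u$-indexed, stationary-approximation) counterparts; and (iii) propagate this concentration through the $M_T$ iterations so that the sample boosting path stays close to the population path, then let $M_T \to \infty$ slowly enough that the accumulated error vanishes.

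For step (i), at the population level the normalized components $\{\tilde X_{j,T-h}(u)/\|\tilde X_{j,T-h}(u)\|_2\}_{j \le p_T}$ form a dictionary in $L_2$, the algorithm performs weak greedy (matching-pursuit) updates with step factor $\upsilon \in (0,1]$, and the target $\bm{\beta}'(u)\tilde{\bm{x}}_{T-h}(u)$ lies in the $\ell_1$-hull of this dictionary with radius $\sup_{u}|\bm{\beta}(u)|_1 < \infty$ by Condition \ref{ConditionA}. Standard relaxed-greedy convergence then gives that the population residual $L_2$-norm, call it $\rho(m)$, tends to $0$ as $m \to \infty$, and the convergence is uniform in $u$ because the $\ell_1$ radius and the moment bound $D_q$ of Definition \ref{defL} are uniform; the minimum eigenvalue of $\Sigma_{\tilde x}(u)$ restricted to any finite selected set stays controlled by Condition \ref{ConditionD}. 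This produces a population approximation error $\rho(M_T) \to 0$.

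Step (ii) is the main obstacle. I would write the numerator and denominator of each base-learner coefficient, e.g. $\hat\gamma_j^{(m)}(u) = S_T^{-1}\sum_i K_b(i/T-u)X_{j,i-h,T}\hat U^{(m)}_{i,T}$ and $\hat\sigma_j^2(u) = S_T^{-1}\sum_i K_b(i/T-u)X_{j,i-h,T}^2$, and bound their deviation from $E(\tilde X_{j,\cdot}(u)\tilde U^{(m)}_{\cdot}(u))$ and $E(\tilde X_{j,\cdot}^2(u))$. The deviation splits into a bias part, from replacing $X_{i,T}$ by $\tilde X_i(u)$ and from the kernel averaging, of order $O(b + T^{-1})$ by the Lipschitz bound $C_B$ of Definition \ref{defL}, the smoothness of Condition \ref{ConditionD}, and $|i/T - u| \le b$ on the kernel support; and a stochastic part, a kernel-weighted sum of products of weakly dependent, locally stationary variables with only $r>4$ (covariates) and $q>2$ (errors, $\tau = qr/(q+r) > 2$) finite moments, for which I would invoke a Fuk--Nagaev / Nagaev-type inequality under the functional dependence measure as in \cite{Wu2005,Yousuf2017}. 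The three pieces of $a_T$ are precisely the polynomial-tail term $T S_T^{-\tau+\tau\kappa}$, the covariate-moment term involving $S_T^{-r/2+r\kappa}$, and the light-tail remainder $\exp(-S_T^{1-2\kappa})$, with $\kappa \in (0,1/2)$, $\kappa < \psi^{-1}-1$ chosen so that $S_T^{1-2\kappa}\to\infty$. Uniformity over $u$ comes from discretizing $[0,1]$ on a grid of cardinality polynomial in $T$ (this explains the outer factor $T$) and using Lipschitz continuity of $K$ and of the coefficient functions; uniformity over $j \le p_T$ and over the finitely many $m \le M_T$ follows by a union bound, yielding the factor $p_T a_T$. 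The delicate point is that $\hat U^{(m)}_{i,T}$ depends on the previously fitted model; this is handled by the inductive coupling in step (iii), which keeps the current boosting coefficients in a bounded $\ell_1$-ball of the dictionary so that the functional dependence measure of the residual process stays uniformly controlled by $\Delta_{m,q}^{\tilde Y}$, $\Phi_{m,r}^{\bm{x}}$, and $\Delta_{m,q}^{\epsilon}$ via Condition \ref{ConditionB}.

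For step (iii), on the good event from step (ii) the selected index $\hat{\mathcal{S}}_m$ and the coefficient $\hat\alpha^{(m)}_{lc,\hat{\mathcal{S}}_m}(u)$ differ from their population analogues by at most the per-step deviation $\delta_T$, and since each update is an $L_2$-contraction with factor governed by $\upsilon \in (0,1]$, the accumulated discrepancy after $M_T$ steps is $O(M_T \delta_T)$; choosing $M_T \to \infty$ slowly enough that $M_T \delta_T \to 0$ gives $\sup_u \|\hat F^{(M_T)}_{lc}(u,\cdot) - F^{(M_T)}(u,\cdot)\|_{L_2} \to 0$. Combining with $\rho(M_T) \to 0$ and writing the out-of-sample error for the independent copy $\bm{x}^{*}_{uT-h,T}$ as the quadratic form $(\hat{\bm b}(u) - \bm{\beta}(u))'\Sigma_{\tilde x}(u)(\hat{\bm b}(u) - \bm{\beta}(u))$ plus an $O(T^{-1})$ stationary-approximation term (using that the LC-Boost output is linear in $\bm{x}^{*}$), yields part (a). Part (b) is the same argument run with the bivariate learner $\bm{Z}_{j,i-h,T}$: one needs the smallest eigenvalue of the $2\times 2$ local Gram matrix bounded away from $0$ uniformly (from Conditions \ref{ConditionD} and \ref{ConditionE}), and the local-linear fit annihilates the first-order kernel bias, so by the Taylor expansion (\ref{Taylor}) and the bounded second derivatives of Condition \ref{ConditionD} the residual bias is $O(b^2) \to 0$.
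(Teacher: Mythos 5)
Your proposal follows essentially the same architecture as the paper's proof: Temlyakov's weak greedy convergence for the population recursion under the $\ell_1$ bound of Condition \ref{ConditionA}, uniform Nagaev-type concentration of the kernel-weighted moments under the functional dependence measure (the source of the three pieces of $a_T$), and a coupling of the sample path to the population path with $M_T\to\infty$ slowly. Two points in your step (iii), however, do not hold up as written. First, the claim that each update is an $L_2$-contraction so that the sample--population discrepancy accumulates linearly as $O(M_T\delta_T)$ is unsubstantiated and, in general, false: the fitted increment at step $m$ is a product of an (already perturbed) residual inner product with a perturbed normalization/selection, so per-step errors are amplified multiplicatively through the recursion. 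In the paper this is made precise by introducing a \emph{semipopulation} remainder sequence (population inner products, sample selectors $\hat{\mathcal{S}}_m$) and a recursive bound of the form $A_T(m)\le A_T(m-1)+C(5/2)^m\zeta_T$, giving an accumulated discrepancy of order $3^m\zeta_T$ — geometric, not linear, in $m$ — which is exactly why $M_T$ must be taken as slowly as $o(\log T)$. Your final conclusion survives only because the theorem permits $M_T$ to grow arbitrarily slowly, but the contraction step as stated is not a proof and would need to be replaced by such a recursion.

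Second, your plan to obtain concentration for sums involving $\hat U^{(m)}_{i,T}$ by union bounding over $m\le M_T$ and controlling ``the functional dependence measure of the residual process'' is circular: $\hat U^{(m)}$ is a data-dependent function of the entire sample (through the selectors and fitted coefficients), so Fuk--Nagaev bounds cannot be applied to it directly, and it is not itself a locally stationary process with a fixed dependence measure. The paper avoids this entirely: uniform concentration is proved only for the four \emph{fixed} kernel-weighted cross-moments (covariate-covariate, covariate-error, covariate-$f$, covariate-$Y$; its Lemma \ref{lemma1}, via Theorem 2.7(iii) of \cite{dahlhaus2017towards}, which also delivers uniformity in $u$ without a separate discretization argument), and every residual-dependent inner product is then expressed deterministically in terms of these through the recursion above; this also yields that $\hat{\mathcal{S}}_m$ satisfies a finite-sample analogue of the relaxed criterion (\ref{relaxed}), which is what licenses applying \cite{temlyakov2000weak} to the semipopulation sequence. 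Finally, for part (b) your sketch is too thin: the paper's LL-Boost proof does not rest on the bias-annihilation property of local linear fitting but on additional lemmas controlling the slope components $\hat{h}_2$ of the bivariate learner across iterations (which themselves grow geometrically and must be tracked), together with separate rates for the $(t/T-u)$-weighted Gram entries; some version of that bookkeeping is needed for your argument to close.
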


This is an extension of theorem 1 in \cite{Buhlmann2006} to the locally stationary time series setting with local constant or local linear least squares base learners. From the above theorems, we see the range of $p_T$ depends primarily on the moment conditions, the effective sample size $S_T$, and $\kappa$. For example, if we assume only finite polynomial moments with $r=q$ then, $p_T=o(S_T^{r/4-r\kappa/4}/\sqrt(T))$ for our estimates to be uniformly consistent over all rescaled time points. The $\sqrt{T}$ in the denominator is needed for uniform consistency, and can be replaced by $\sqrt{S_T}$ for pointwise consistency.  If we assume, sub-Gaussian or subexponential predictors for example we have $p_T=o(S_T^{\phi})$ for arbitrary $\phi >0$. This is the same range \cite{Buhlmann2006} obtained for iid sub-Gaussian predictors and errors. Given the $O(T^{-1})$ encountered when approximating a locally stationary process by a stationary distribution, we are unable to extend the theory to the ultra-high dimensional setting i.e $p_T=o(\exp(n^{c}))$ for $c<1$. 




We also provide results for the stationary time series with time invariant parameters. In this setting, we use the linear least squares base learner and use the entire sample for estimation. For the case of only a finite number of moments, the results in theorem \ref{theorem1} easily carry over to the stationary time invariant setting (i.e $\bm{\beta}(t/T)=\bm{\beta} \text{ } \forall t,T$), by letting $S_T=T$, and computing the relevant functional dependence measures. However, we can obtain a larger range for $p_T$, if we assume a stronger moment condition such as:

\begin{condition}\label{ConditionC} 
Assume the response and the covariate processes are stationary and have representations given in (\ref{form}). Additionally, assume $\upsilon_x=\sup_{q \geq 2} q^{-\tilde{\alpha}_x}\Phi_{0,q}^{\bm{x}} < \infty$ and $\upsilon_{\epsilon}=\sup_{q \geq 2} q^{-\tilde{\alpha}_{\epsilon}} \Delta_{0,q}^{\epsilon}< \infty$, for some $\tilde{\alpha}_{x},\tilde{\alpha}_{\epsilon} \geq 0.$
\end{condition}

Condition \ref{ConditionC} strengthens the moment condition \ref{ConditionB}, and requires that all moments of the covariate and error processes are finite. To illustrate the role of the constants $\tilde{\alpha}_{x}$ and $\tilde{\alpha}_{\epsilon}$, consider the example where $\epsilon_{t}=\sum_{j=0}^{\infty} a_j e_{t-j}$ with $e_i$ iid, and $\sum_{j=0}^{\infty} |a_{j}| < \infty$. Then $\Delta_{0,q}^{\epsilon}= ||e_{0}-e_{0}^{*}||_{q}\sum_{j=0}^{\infty}|a_{j}|$. Now if we assume $e_0$ is sub-Gaussian, then $\tilde{\alpha}_{\epsilon}=1/2$, since $||e_{0}||_{q} = O(\sqrt{q})$, and if $e_i$ is sub-exponential, we have $\tilde{\alpha}_{\epsilon}=1$.

The following corollary states the corresponding results for the stationary time series setting. We define $\tilde{\psi}=\frac{2}{1+2\tilde{\alpha}_x+2\tilde{\alpha}_{\epsilon}}, \tilde{\varphi} = \frac{2}{1+4\tilde{\alpha}_x}$, and let
\begin{align*}
b_T&= \left[ \exp\left(-\frac{T^{1/2-\kappa}}{\upsilon_x\upsilon_{\epsilon}}\right)^{\tilde{\psi}}+ p_T\exp\left(-\frac{T^{1/2-\kappa}}{\upsilon_x^2}\right)^{\tilde{\varphi}}\right],
\end{align*} and let $\hat{F}^{(M_T)}(\bm{x}_t)$ denote our $L_2$ boosting estimate for $Y_t$, we then have: 
\begin{cor}{}
Let $\kappa \in (0,1/2)$, and $\bm{x}^{*}_{T-h}$ denote a new predictor variable, independent of and with the same distribution as $\bm{x}_{T-h}$.  Suppose conditions \ref{ConditionA},  \ref{ConditionE}, and \ref{ConditionC} hold. Then on a set with probability at least $1-O(p_Tb_T)$, we have that our $L_2$ Boosting estimate $\hat{F}^{(M_T)}(\cdot)$ satisfies:
$$E(|\hat{F}^{(M_T)}(\bm{x}^{*}_{T-h}) -\bm{\beta}\bm{x}^{*}_{T-h}|^2)=o_p(1) \quad (T \rightarrow \infty).$$

\label{cor1} 
\end{cor}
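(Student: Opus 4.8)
The plan is to follow the architecture of the proof of Theorem~\ref{theorem1}, specialized to effective sample size $S_T=T$ (the whole sample is used) and with the local-approximation bias removed, since the model is now exactly stationary with time-invariant $\bm{\beta}$ and hence there is no Taylor remainder of the type in (\ref{Taylor}); this is combined with the consistency argument for componentwise $L_2$ boosting of \cite{Buhlmann2006}. Write $\hat{\Sigma}_{jk}=T^{-1}\sum_{t=h+1}^{T}X_{j,t-h}X_{k,t-h}$, $\Sigma_{jk}=E(X_{j}X_{k})$, $\hat{c}_{j}=T^{-1}\sum_{t=h+1}^{T}X_{j,t-h}Y_{t}$, $c_{j}=E(X_{j}Y)$, and fix a deviation level $\delta_T$ of order $T^{-\kappa}$. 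First I would introduce the good event
$$\mathcal{A}_{T}=\Big\{\max_{j,k\le p_{T}}|\hat{\Sigma}_{jk}-\Sigma_{jk}|\le\delta_{T}\Big\}\cap\Big\{\max_{j\le p_{T}}|\hat{c}_{j}-c_{j}|\le\delta_{T}\Big\},$$
and reduce the corollary to (i) a deterministic statement valid on $\mathcal{A}_{T}$ and (ii) the tail bound $\bbP(\mathcal{A}_{T}^{c})=O(p_{T}b_{T})$.

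On $\mathcal{A}_{T}$ the sample boosting recursion uses only the empirical second moments $\hat{\Sigma}_{jk}$ together with empirical residual-predictor inner products, and the latter decompose into $\hat{c}_{j}$ and $\hat{\Sigma}_{jk}$; hence it can be compared step by step with the population recursion, i.e.\ the weak-greedy (matching-pursuit) iteration in the Hilbert space $L_{2}(P)$ spanned by $X_{1},\dots,X_{p_{T}}$. Since the population regression function $\bm{\beta}\bm{x}$ lies in this span and, by Condition~\ref{ConditionA}, has $\ell_{1}$ norm bounded uniformly in $T$, the greedy-approximation bounds used in \cite{Buhlmann2006} give convergence of the population boosting estimate to $\bm{\beta}\bm{x}$ in $L_{2}(P)$ as the iteration count grows. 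A perturbation argument identical in structure to the one in the proof of Theorem~\ref{theorem1} then shows that, on $\mathcal{A}_{T}$, the sample path departs from the population path by at most a factor $(1+C\delta_{T})^{M_{T}}-1$ in the relevant quadratic form; choosing $M_{T}\to\infty$ slowly enough that $M_{T}\delta_{T}\to0$ gives, on $\mathcal{A}_{T}$, that the quadratic form $\sum_{j,k\le p_{T}}(\hat{\gamma}^{(M_{T})}_{j}-\beta_{j})(\hat{\gamma}^{(M_{T})}_{k}-\beta_{k})\Sigma_{jk}$ is $o(1)$, where $\hat{\gamma}^{(M_{T})}$ is the (sparse) coefficient vector of the linear estimate $\hat{F}^{(M_{T})}(\cdot)$. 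Because $\bm{x}^{*}_{T-h}$ is an independent copy of $\bm{x}_{T-h}$, this quadratic form equals $E(|\hat{F}^{(M_{T})}(\bm{x}^{*}_{T-h})-\bm{\beta}\bm{x}^{*}_{T-h}|^{2}\mid\text{data})$, delivering the stated $o_{p}(1)$ on $\mathcal{A}_{T}$.

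The remaining task, and where I expect the real work to lie, is the tail bound. Each entry $\hat{\Sigma}_{jk}-\Sigma_{jk}$ is a centered partial sum of the stationary sequence $\{X_{j,t-h}X_{k,t-h}\}$; by the product rule for functional dependence measures its order-$q$ cumulative dependence measure is controlled by $\Phi^{\bm{x}}_{0,2q}$ times a moment factor, and Condition~\ref{ConditionC}, which bounds $\Phi^{\bm{x}}_{0,q}$ by $\upsilon_{x}q^{\tilde{\alpha}_{x}}$ and makes all moments finite, then gives the product a sub-Weibull tail. Plugging this into a Nagaev-type concentration inequality for functional-dependence processes and optimizing the resulting bound over the moment order $q$ turns the polynomial-moment rate of Theorem~\ref{theorem1} into an exponential one: for a single pair one obtains $\bbP(|\hat{\Sigma}_{jk}-\Sigma_{jk}|>\delta_{T})\le\exp(-(T^{1/2-\kappa}/\upsilon_{x}^{2})^{\tilde{\varphi}})$ with $\tilde{\varphi}=2/(1+4\tilde{\alpha}_{x})$ up to constants, and a union bound over the relevant pairs supplies the $p_{T}\exp(\cdot)$ term of $b_{T}$. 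For $\hat{c}_{j}-c_{j}$ I would write $Y_{t}=\bm{\beta}\bm{x}_{t-h}+\epsilon_{t}$, absorb the $\bm{\beta}\bm{x}$ part into the Gram-matrix event already controlled (using $|\bm{\beta}|_{1}<\infty$), and handle the genuinely new term $T^{-1}\sum X_{j,t-h}\epsilon_{t}$, a centered sum of $\{X_{j,t-h}\epsilon_{t}\}$ with $E(X_{j}\epsilon)=0$; the same recipe, now with the product of an $\bm{x}$-factor and an $\epsilon$-factor and the scales $\upsilon_{x},\upsilon_{\epsilon}$, yields the $\exp(-(T^{1/2-\kappa}/(\upsilon_{x}\upsilon_{\epsilon}))^{\tilde{\psi}})$ term with $\tilde{\psi}=2/(1+2\tilde{\alpha}_{x}+2\tilde{\alpha}_{\epsilon})$. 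Combining, $\bbP(\mathcal{A}_{T}^{c})=O(p_{T}b_{T})$, and the corollary follows. The hard part will be the $q$-optimization inside the Nagaev bound, so that the exponents come out to exactly $\tilde{\varphi}$ and $\tilde{\psi}$ and the scales to $\upsilon_{x}^{2}$ and $\upsilon_{x}\upsilon_{\epsilon}$; by contrast, the deterministic tracking step, including the slow growth rate required for $M_{T}$, is essentially a transcription of the corresponding part of the proof of Theorem~\ref{theorem1} once the stationary, zero-bias simplifications are made.
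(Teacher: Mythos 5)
Your proposal matches the paper's own route: the paper proves the corollary by keeping the Theorem 1(a) boosting/greedy-tracking argument intact with $S_T=T$ and stationary (bias-free) quantities, and only replacing the Lemma 1 concentration bounds with exponential ones for $\hat{\Sigma}_{jk}-\Sigma_{jk}$ and $T^{-1}\sum_t X_{j,t-h}\epsilon_t$ under Condition \ref{ConditionC}, exactly as you outline. The only cosmetic difference is that where you plan to optimize a Nagaev-type bound over the moment order $q$ yourself, the paper simply invokes Theorem 3 of Wu and Wu (2016) (via the predictive dependence measure and the H\"older product bound $\sum_t\|X_{j,t}X_{k,t}-X_{j,t}^*X_{k,t}^*\|_q\le 2\Phi_{0,2q}^2$), which already delivers the exponents $\tilde{\varphi},\tilde{\psi}$ and scales $\upsilon_x^2,\upsilon_x\upsilon_{\epsilon}$ you identified.
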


In the stationary setting our theorems improve upon previous results found in \cite{lutz2006} by providing a more detailed and larger range for $p_T$. This is largely due to using different concentration inequalities; we rely on sharp Nagaev type and exponential inequalities for the case of polynomially and exponentially decaying tails respectively. Whereas \cite{lutz2006} relies on a Markov type inequality after bounding the $r^{th}$ absolute moment of the sum. For example, assuming sub-Gaussian predictors and errors we obtain $p_T=o(\exp(T^{\frac{1-2\kappa}{3}}))$, and $p_T=o(\exp(T^{\frac{1-2\kappa}{5}}))$ for subexponential predictors and errors. As a comparison \cite{lutz2006} obtained $p_T=o(T^{\phi})$, for arbitrary $\phi>0$, when applying $L_2$ boosting for stationary sub-Gaussian time series.

\section{Simulations}
\label{simulations}

In this section, we evaluate the forecasting performance of our algorithms in a finite sample setting. 
Let $Y_{t,T}$ denote our response, and let  $\bm{x}_{t-1,T}=(Y_{t-1,T},\ldots,Y_{t-3,T},\bm{z}_{t-1,T},\ldots,\bm{z}_{t-3,T})$ represent our potential set of predictors, where $\bm{z}_{t-1,T}\in \mathcal{R}^{d_T}$ represents our $d_T$ exogenous series at time $t$. We fix $T=200$, and $d_T=100$, giving us $p_T=303$ potential predictors. We consider 14 DGPs and our general model is, for $t=1,\ldots T$,
\begin{eqnarray*}
Y_{t,T}&=&\rho Y_{t-1,T}+ \sum_{j=1}^4 (b+\beta_{j}(t/T))z_{j,t-1,T} + \epsilon_{t}\\
\bm{z}_{t,T} &=& A(t/T)\bm{z}_{t-1,T}+\bm{\eta}_t
\end{eqnarray*}
and it is assumed that $\rho=.6$, $b=0.5$. For DGPs 1-12 we let $A(t/T)=\{.4^{|i-j|+1}\}_{i,j \leq d_T}$, and for DGPs 13 and 14 we let $A(t/T)=(1-t/T) A_1 + (t/T) A_2$, where the matrices $A_1=\{.2^{|i-j|+1}\}_{i,j \leq d_T}, A_2=\{.4^{|i-j|+1}\}_{i,j \leq d_T}$. Define \textsc{lgt}$(\gamma,c,t/T)= (1+\exp(-\gamma(t/T-c))^{-1}$, time variation in the coefficients is modeled as follows:

\begin{center}
\small
\begin{tabular}{l|l|llll|llllll}
    DGP & Description   & $\beta_{1}(t/T)$ & $\beta_{2}(t/T)$ &  $\beta_{3}(t/T)$ & $\beta_{4}(t/T)$ & $\bm{z}_{t,T}$\\ \hline
   1 &   constant & 0 & 0 &0  &0 &stationary   \\
   2 &   break in error variance & 0 & 0 &0  &0 &stationary   \\
   3 &   early break, $T_b=50$   & \multicolumn{4}{c|}{$-1(t>T_b)$} &stationary    \\
   4 &   mid  break, $T_b=100$    & \multicolumn{4}{c|}{$-1(t>T_b)$} & stationary  \\
   5 &   late break, $T_b=150$  & \multicolumn{4}{c|}{$-1(t>T_b)$} & stationary \\
   6 & small random walk  & \multicolumn{4}{c|}{$\Delta \beta_{j}(t/T)\sim N(0,\frac{.5}{\sqrt{T}})$} &  stationary\\
   7 & big random walk  & \multicolumn{4}{c|}{$\Delta \beta_{j}(t/T)\sim N(0,\frac{1}{\sqrt{T}})$} &  stationary\\

   8 & smooth, $c=.25$  & \textsc{lgt}(10,c) & \textsc{lgt}(5,c) & \textsc{lgt}(20,c) & \textsc{lgt}(10,c) & stationary\\
   9 & smooth, $c=.75$  & \textsc{lgt}(10,c) & \textsc{lgt}(5,c) & \textsc{lgt}(20,c) & \textsc{lgt}(10,c) & stationary\\
   10 & smooth, $c=.90$  & \textsc{lgt}(10,c) & \textsc{lgt}(5,c) & \textsc{lgt}(20,c) & \textsc{lgt}(10,c) & stationary\\
   11 & steep & $-.3(\frac{t}{T})^2$ & $(\frac{t}{T})^2$ & $ -.4(\frac{t}{T})$ & $\frac{t}{T}$ & stationary\\
   12 & exotic & 0 & 0 & 3$\cos(\frac{2\pi t}T)$ & $2\frac{t}T\sin(2\pi \frac{t}{T})$ & stationary\\
   13& smooth, $c=.75$  & \textsc{lgt}(10,c) & \textsc{lgt}(5,c) & \textsc{lgt}(20,c) & \textsc{lgt}(10,c) & locally stationary\\
   14 &   late break, $T_b=150$  & \multicolumn{4}{c|}{$-1(t>T_b)$} & locally stationary \\
   \hline
\end{tabular}
\end{center}

For all DGPs, we report results when generating the innovations as $\bm{\eta}_t \stackrel{iid}{\sim} N(0,I_{d_T})$ or from a $t_{5}(0,3/5*I_{d_T})$. For DGP 2, we have a break in the error variance: $\epsilon_t=D(0,1)(t<150)+D(0,2.5)(t \geq 150)$, where the distribution $D$ is either a normal or $t_5$ distribution. For the remaining DGPs $\epsilon_t \stackrel{iid}{\sim} N(0,1)$ or $\stackrel{iid}{\sim} t_5$.

\subsection{Methods and Forecast Design}

We consider the forecasting performance of the following methods:
\begin{itemize}
\item LC-Boost, LL-Boost.
\item $L_2$ Boosting, with time invariant coefficients estimated on the full sample.
\item Lasso, AR(3) both estimated on the full sample.
\item Rolling window $L_2$ Boosting, Rolling window AR(3) model both with window length $T/5$.
\end{itemize}  
AR and rolling window AR models are commonly used benchmarks in macroeconomic forecasting. Models estimated on the full sample assume time invariant parameters, or more generally assume the time variation is small. Estimation using LC-Boost, LL-Boost or a rolling window approach involves using a subsample of the data leading to a bias-variance tradeoff. Due to this tradeoff, methods accounting for time variation are not guaranteed to outperform their time invariant counterparts in a finite sample setting.

All boosting models are computed using the R package \textbf{mboost}, and the lasso model is computed using the R package \textbf{glmnet}. For LC-Boost and LL-Boost, we use the uniform kernel and we estimate the bandwidth via the cross validation procedure described in section \ref{hyperparameter}, with $\omega=20$, and $B=[.3,.4,\ldots,1]$. The number of steps in all boosting models is determined using AIC with the maximum number of steps set to $M_{upp}=100$. Lastly, the penalty parameter in the Lasso model is estimated using the BIC statistic.

For each simulation, and for all methods, we forecast $Y_{T,T}$ and compute the out of sample forecast error, which is then averaged over 1000 simulations. Specifically, for a given simulation, let $\hat{Y}_{T,T}^{(k)}$ represent the out of sample forecast of $Y_{T,T}$. We then compute MSFE=$\frac{1}{1000}\sum_{k=1}^{1000}(\hat{Y}_{T,T}^{(k)}-Y_{T,T}^{(k)})^2$, for each method. We report this MSFE relative to the MSFE obtained from $L_2$ boosting model with time invariant coefficients.

\subsection{Results}

The results for Gaussian innovations are in table \ref{table1}. The results for $t_5$ innovations are contained in the appendix. We first dicuss results for the Gaussian case. DGP 1 and 2 contain time invariant coefficients, with DGP 2 having a structural break in the variance of the noise. In both these DGPs using the full sample yields the best estimator. LC-Boost only has a minor error inflation compared to using the whole sample, whereas LL-Boost does worse than LC-Boost in this setting. The under performance of LL-Boost vs LC-Boost in these settings is likely due to the bias-variance tradeoff when using local linear vs local constant methods. If the time variation is non-existent or mild, as is the case here, the additional variance incurred by estimating more parameters can cancel out any benefit obtained from bias reduction. DGP $3,4,5$ all contain a discrete structural break, and we see that both LC-Boost and LL-Boost outperform other methods. When the structural break occurs near the end of the sample, LL-Boost has large gains over LC-Boost. 

DGP 6 has a slowly varying random walk, and we observe that LC-Boost and LL-Boost perform slightly better than using the full sample.  DGP 7 has larger time variation in the coefficients, and we see that LL-Boost and LC-Boost easily outperforms the other methods. DGP 8, 9 and 10 have smooth transition logistic functions, where $c$ is the analogous to the breakpoint in a discrete break model, and  $\gamma$  represents smoothness of the transition.\footnote{We note that setting $\gamma$ to infinity results in a discrete break model.} Out of the three DGPs, time varying methods perform best when $c=.75$, with the performance deteriorating in the other two cases as the time variation occurs either too close to the forecast date or too far away. DGP 11 and 12 contain coefficient functions which are highly non-linear, and LL-boost shows very large improvements vs LC-Boost. DGP 13 and 14 show that adding locally stationary predictors leads to only slight change in the results vs DGP 9 and 5 respectively.

When we have $t_5$ innovations, the results generally follow the conclusions stated earlier, except the improvements are noticeably smaller in many cases. The presence of additional noise in the data likely impacts our method in two ways: due to the additional noise in the data, the bias-variance tradeoff is less favorable to using a subset of the full sample. Additionally, the noise in the data makes the cross validation error estimate less reliable, leading to errors in estimating the optimal bandwidth parameter.\footnote{We also repeated each of the simulations using the Gaussian kernel instead of the uniform kernel. In general we found very similar performance between the two kernels. For the case of $t_5$ innovations and little to no time variation in the coefficients, we found the Gaussian kernel was more effective for LL-Boost. Given the close similarities between the kernels, we omit the results.}
\\
The results suggest the following conclusions:
\begin{enumerate}
    \item When the time variation in the coefficients is non-existent or minor, using the full sample often gives the best performance. The performance of LC-Boost is only marginally weaker than using the full sample, while the performance of LL-Boost takes a more significant hit. 
    \item LL-Boost and LC-Boost forecasts both seem to underperform forecasts using the full sample when there is a break  in the conditional variance rather than the conditional mean.
    \item Using LL-Boost leads to large improvements in forecasting performance vs LC-Boost when we have significant time variation in the coefficients. This is especially true when the time variation occurs closer to the forecast date and/or the coefficient functions are highly non-linear.  
    
    \item Time varying methods are likely to be less useful when we have a low sample size coupled with high noise. Some of the difficulties in this setting may be overcome by selecting the bandwidth parameter using a larger validation set along with a finer grid of bandwidth values.
\end{enumerate}

\begin{table}[]
\centering
\small
\caption{Relative MSFE, Gaussian Innovations}
\label{table1}
\begin{tabular}{|c|c|c|c|c|c|c|}
\hline
DGP & AR (3) & Rolling AR (3) & Rolling Boost & LC-Boost & LL-Boost & Lasso \\ \hline
1   & 2.22   & 2.41           & 1.92          & 1.05     & 1.19     & 1.06  \\ \hline
2   & 1.79   & 1.79           & 1.42          & 1.08     & 1.16     & 1.23  \\ \hline
3   & 1.16   & 1.24           & 1.01          & .61      & .67      & 1.14  \\ \hline
4   & .91    & .98            & .80           & .55      & .58      & 1.02  \\ \hline
5   & .72    & .78            & .63           & .76      & .53      & .92   \\ \hline
6   & 5.25   & 5.93           & 1.62          & .91      & .90      & 1.06  \\ \hline
7   & 3.47   & 3.75           & .96           & .68      & .60      & 1.06  \\ \hline
8   & 4.92   & 5.30           & 1.53          & .59      & .56      & 1.13  \\ \hline
9   & 1.94   & 2.08           & .73           & .52      & .35      & 1.20  \\ \hline
10  & 1.77   & 1.88           & .95           & .79      & .53      & 1.22  \\ \hline
11  & 2.81   & 3.10           & .99           & .75      & .61      & 1.15  \\ \hline
12  & 1.04   & 1.09           & .32           & .63      & .16      & 1.15  \\ \hline
13  & 2.11   & 2.20           & .83           & .63      & .39      & 1.20  \\ \hline
14  & .73    & .78            & .67           & .80      & .52      & .97   \\ \hline
\end{tabular}
\end{table}

\section{Application to Macroeconomic Forecasting}
\label{Empirical}

As discussed in the introduction, the parameter instability of various macroeconomic series has long been established in the econometrics literature. Some examples include \cite{Stock96,stock2009forecasting,breitung2011testing}, all of which find instability in either the univariate relationship of a large number of series or in the factor loadings of a dynamic factor model of a large panel of macroeconomic series. Similarly, \cite{stock2003forecasting} and \cite{rossi2010have} have found evidence of instability in the predictive ability of various series in forecasting output and inflation. However, the question of whether point forecasts can be improved by modeling parameter instability, especially when using high dimensional predictors, is far less clear. 

Proponents of modeling parameter instability include works such as \cite{Clements96,clements2000forecasting} which argue that ignoring these instabilities are the main sources of forecast breakdowns. On the other hand, empirical evidence in favor of ignoring instabilities include \cite{Stock96} which had shown there is little benefit to modeling time variation in a wide range of autoregressive and bivariate forecasts, and \cite{kim2014forecasting,koop2013forecasting} which showed forecasts estimated by recursive estimation (using the full sample) performed as well as or better than rolling window forecasts for a range of models estimated from a large panel of macroeconomic series. 
Additionally, a number of works such as \cite{pettenuzzo2017forecasting,koop2013large,eickmeier2015classical}, have estimated TVP models using Bayesian methods and their results suggest that TVP models offer only minor improvements in the accuracy of point forecasts when compared to low dimensional constant parameter models.\footnote{These works did find TVP models produced larger improvements to density forecasts. We note that works such as \cite{koop2012forecasting,groen2013real,chan2012time} have also estimated TVP models using Bayesian methods and found significant improvements to point forecasts when compared to a low dimensional constant parameter benchmark. However, these works are restricted to forecasting inflation with low dimensional predictors. Additionally, \cite{carriero2019large,petrova2019quasi} analyzed Bayesian TVP models with high dimensional predictors and found large improvements to density forecasts and small improvements to point forecasts when comparing against a high dimensional time invariant parameter model.} Lastly, on the theoretical side, \cite{bates2013consistent} has shown the standard principal components estimator remains consistent even in the presence of ``small" breaks and/or mild time variation in the factor loadings of a dynamic factor model.

To illustrate the difficulty of exploiting parameter instability, consider a simple example where there is a single discrete structural break in the forecasting model. Even if the researcher knew the precise date of the break and decided to use only post break observations for estimation there is a bias-variance trade off in using less data for estimation \citep{pesaran2007selection}. Therefore in the presence of small instabilities, such as small breaks or very slowly varying coefficients, using the entire sample through recursive estimation can be more beneficial than using only a subset of the data. Due to this bias-variance tradeoff and the uncertainty around the precise nature of time variation, the majority of works on macroeconomic forecasting tend to use the full sample available when forecasting. Furthermore, these issues are more severe when using high dimensional predictors.

Given the above discussion, we use the methods developed in this paper to answer a number of questions such as:
\begin{itemize}
    \item Does modeling parameter instability improve macroeconomic forecasts?
    \item Which models are best able to deal with underlying parameter instability?
    \item Which variables and forecast horizons benefit most from the use of time varying parameter models?
    \item During which time periods do time varying methods perform best?
\end{itemize}


To answer these questions, we use the August 2018 (2018:8) vintage of the FRED-MD database which contains 128 monthly macroeconomic series collected from a broad range of categories. See \cite{mccracken2016fred} for a more detailed description of each series, as well as transformations needed to achieve approximate stationarity.\footnote{We depart from the recommended transformations for the housing series (Group 4) which we treat as I(1) in logs.} We remove 5 series which contain large amounts of missing values, leaving us with 123 monthly macroeconomic series which run from January 1960 to August 2018. We focus our analysis on 8 major macroeconomic series: Industrial Production (IP), Total Nonfarm Payroll (PAYEMS), Unemployment Rate (UNRATE), Civilian Labor Force (CLF), Real Personal Income Excluding Transfer Receipts (RPI), Consumer Price Index (CPI), Effective Fed Funds Rate (FF), and Three Month Treasury Bill (TB3MS). For each series, we compare the out of sample forecasting performance of several models at the $h=1,3,6,12$ month forecasting horizons. 

\subsection{Methods and Forecast Design}

For all the methods we consider, let $Y_{t,T}^{h}$ denote our $h$-step ahead target variable to be forecast. As an example, for CPI our target variable is  $Y_{t,T}^{h}=\frac{1200}{h}log(\frac{CPI_{t}}{CPI_{t-h}})$, and we define the target similarly for the rest of the series except FEDFUNDS and TB3MS which are modeled as $I(1)$ in levels (i.e. $Y_{t,T}^{h}=\frac{12}{h}(\text{FEDFUNDS}_{t}-\text{FEDFUNDS}_{t-h}$)). Next let $\bm{z}_{t-h,T}$ denote the rest of our 122 predictor series at time $t-h$, and let $\bm{x_{t-h}}=(Y_{t-h,T},\ldots,Y_{t-h-3},\bm{z_{t-h,T}},\ldots,\bm{z_{t-h-3,T}})$ where $Y_{t-h,T}=Y_{t-h,T}^{1}$.

For all time varying methods we estimate the bandwidth using the cross validation procedure detailed in section \ref{hyperparameter}. For selecting the bandwidth we use a grid of values from .3 to 1 with increments of .025 i.e. $B=[.3,.325,.\ldots,1]$, and we use the last $\omega=60$ observations as our validation set.\footnote{For all local constant methods we report results using the uniform kernel, the results are very similar if we use the Gaussian kernel. For local linear methods we use the Gaussian kernel.} Additionally, we estimate all models under consideration using time invariant methods in order to assess the benefits of directly modeling time variation. We evaluate the forecasting performance of the following methods: 
\begin{table}[!htbp]
\label{methods}
\small
\centering
\begin{tabular}{l|l|lll}
   Method  &  Parameter &   Predictors considered\\ \hline
   AR  & time invariant & $(Y_{t-h,T},\ldots,Y_{t-h-3})$\\
   TV-AR & local constant & $(Y_{t-h,T},\ldots,Y_{t-h-3})$\\
   Boost & time invariant & $\bm{x_{t-h}}$\\
  Lasso & time invariant & $\bm{x_{t-h}}$ \\
  LC-Boost & local constant & $\bm{x_{t-h}}$\\
  LL-Boost & local linear & $\bm{x_{t-h}}$ \\
  LC-Boost-Factor & local constant & $(Y_{t-h,T},\ldots,Y_{t-h-3},\bm{F_{t-h,T}},\ldots,\bm{F_{t-h-3,T}})$\\
  LL-Boost-Factor & local linear & $(Y_{t-h,T},\ldots,Y_{t-h-3},\bm{F_{t-h,T}},\ldots,\bm{F_{t-h-3,T}})$\\
  DI & time invariant &$(Y_{t-h,T},\ldots,Y_{t-h-3},\bm{F_{t-h,T}})$\\
  Boost Factor & time invariant & $(Y_{t-h,T},\ldots,Y_{t-h-3},\bm{F_{t-h,T}},\ldots,\bm{F_{t-h-3,T}})$\\
  \hline
\end{tabular}
\end{table}




The last four methods are of the following form: 
\begin{align}
    Y_{t,T}^{h}=\alpha(t/T)+\sum_{j=0}^{3} \alpha_{j}(t/T)Y_{T-h-j,T}+\sum_{j=0}^{l}\bm{\beta}_{j}'(t/T)\bm{F}_{t-h-j,T}+\epsilon_{t}, \label{htvp3}
\end{align}
where $\bm{F}_{t-h,T}=(F_{1,t-h,T},\ldots, F_{k,t-h,T})$ is a $k$-dimensional vector of factors which are estimated using the principal components of our 122 predictor series $\bm{z}_{t-h,T}$. We ignore possible time variation in our predictors when estimating our factors, and rely on results showing the consistency of the principal components estimator under mild time variation and structural breaks in the factor loadings \citep{bates2013consistent}. We instead focus on modeling the time variation in the coefficients of the forecasting equation (\ref{htvp3}). As an example, for LC-Boost Factor we set $k=8$, $l=3$ and estimate the model using our LC-Boost algorithm. And for DI we set $k=4$, $l=0$ and estimate the model assuming time invariant coefficients and utilizing the full sample.\footnote{Additionally, \cite{stock2009forecasting} conjectured, for macroeconomic data, that the time variation in the coefficients $\bm{\beta}(t/T)$ is far more important than possible time variation in the factors. Their empirical results showed in-sample estimates of the factors as well in-sample forecasting results were little changed by allowing for a one time break in the factors.} 


\begin{rmk}
We note that constant parameter versions of high dimensional methods (e.g. Boost, Boost Factor, Lasso) have greater adaptability to time variation than low dimensional regressions which assume the set of relevant predictors/factors is fixed over time. The idea is that by combining information from a large set of predictors our forecasts are more robust to instabilities which occur in a specific predictor's forecasting ability.\footnote{Empirical evidence of this was provided in \cite{carrasco2016sample}.} When combined with a recursive window forecasting scheme, these methods indirectly capture at least some of the time variation present in the data. 
\end{rmk}


We use an expanding (recursive) window scheme designed to simulate real time forecasting. Our out of sample forecasting period starts in 1971:9 and ends in 2018:8 for a total of 564 months ($47$ years). To construct the first forecast of time $t$=1971:9 we estimate the factors, the coefficients, and select the hyperparameters using data available only until time 1971:9-$h$. We then expand our window by one observation and estimate the forecast of time $t+1$=1971:10 using information available until time $t-h+1$, and so on until we reach the end of our sample.

\section{Results}
\label{results}

Our benchmark model for all series and forecasting horizons is an AR(4) model with time invariant parameters. Due to space considerations we report some of our results in the appendix. We start by giving an overview of the results for the full out sample period, which are reported in table \ref{h=12} for $h=12$ and in the appendix for $h=6,3,1$, before analyzing how performance varies over time. For the time varying methods we observe the following: the TV-AR model fails to improve upon the benchmark AR model for the vast majority of series and forecast horizons, confirming the results of \cite{Stock96} on an expanded sample. Out of our four time varying Boosting methods, LC-Boost Factor appears to perform best. LC-Boost Factor outperforms the benchmark for all series and forecast horizons, it also performs best, out of all models, the majority of times. In contrast, our LL-Boosting methods appear to perform poorly relative to LC-Boost.\footnote{We omit the performance of LL-Boost as it was outperformed by both LL-Boost Factor and both LC-Boost methods.} Given the results in section \ref{simulations}, this suggests that the parameters as a function of time may not be sufficiently curvy enough for local linear methods to benefit. For time invariant methods we observe the following:  Boost-Factor and Boost performs similarly and generally outperform DI and Lasso models.

Comparing across forecast horizons: we observe that, for all high dimensional methods, improvements to the benchmark are greater as we increase our forecast horizon. For $h=1$, many of the methods appear to perform similarly, with Boost Factor and LC-Boost Factor appearing to perform best. For longer forecast horizons, LC-Boost Factor is the best performing model the majority of the time, with the gap between LC-Boost Factor and its competitors widening as we increase the forecast horizon. Additionally, the benefits to modeling time varying parameters are more apparent at longer forecast horizons.

\subsection{Analyzing Performance Over Time} 

Relying only on the aggregate performance of a model over the entire out of sample period can hide many important details and lead to misleading conclusions. We rely on two methods to analyze how performance varies over time; the first is to plot the MSFE as a function of the start date for the out of sample forecasting period. More specifically, let $T_1$ denote the start forecast date, then for a given method $i$ and horizon $h$, we calculate
\begin{align}
    MSFE^{h}_{(i)}(T_1,T_2)=\frac{\sum_{t=T_1}^{T_2}\hat{\epsilon}^{2}_{t,(i)}}{\sum_{t=T_1}^{T_2}\hat{\epsilon}^{2}_{t,(AR)}}, \quad \quad \text{with} \quad T_2=2018:8.
    \label{MSFE_startdate}
\end{align}
The second method is to analyze the forecasting performance over three important subperiods. The first subperiod, which we refer to as ``Pre-Great Moderation", consists of 136 observations, 1971:9-1982:12, and corresponds roughly to the period before the start of the ``Great Moderation". The second subperiod is from 1983:1-2006:12, and corresponds roughly to the ``Great Moderation", a period where the volatility of a large number of macroeconomic series was significantly reduced \citep{SW2003}. The third subperiod is from 2007:1-2018:8, which we refer to as ``Post Great Moderation", covers the period right before the great recession and takes us to the end of our sample.  

For the first method, we let $T_1$ vary from 1971:9 until 2006:12. We plot the MSFE by $T_1$ for the top 5 performing methods: LC-Boost Factor, LC-Boost, Boost, Boost Factor, and DI. The figures \ref{MSFE12}-\ref{MSFE1}; contain the results for horizons $h=12,6,1$ respectively.\footnote{The corresponding results for $h=3$ are reported in the appendix.} Looking at figures \ref{MSFE12}-\ref{MSFE1}, we see that LC-Boost Factor is easily the best performing method for horizon $h=12,6$, and to a lesser extent $h=1$. Comparing across all horizons, we notice:
\begin{itemize}
    \item The performance improvements for LC-Boost factor, relative to its time invariant counterparts, are more apparent as we increase the forecast horizon.
    \item As we increase $T_1$, the gap between LC-Boost Factor and the time invariant methods widens. In particular we notice a large separation in performance starting during great moderation period.
\end{itemize}
Additionally, we also observe that the commonly used DI model loses much of its predictive ability during the Great moderation and performs worse than the benchmark for about half of the series. This result suggests that DI gained most of its predictability vs the benchmark during the ``Pre-Great Moderation" period.  




Table \ref{h=12} contain the results for each of the subperiods for horizon $h=12$; the corresponding results for $h=6,3,1$ are found in the appendix. For each subperiod we report the MSFE, relative to the MSFE of the benchmark AR(4) model. We start with the ``Pre-Great Moderation" period, and note that with the exception of TV-AR models, all other models strongly outperform the benchmark model the majority of the time during this period. Time invariant methods such as Boost Factor and DI models perform best, and their performance is strongest when forecasting at longer horizons. LC-Boost factor appears to be slightly lag behind these two methods during this time period. As we enter the ``Great Moderation" period, the performance of all models generally declines relative to the AR benchmark. In particular, time invariant methods such as DI and Boost take a large hit and underperform the benchmark in many cases, especially for $h=12$. LC-Boost Factor undergoes a much smaller decline compared to the rest of the models, and emerges as the best performing model during this time period. Importantly, we also observe that LC-Boost performs at the same level or worse than its time invariant counterpart Boost in the majority of cases. This suggests that although there seems to be a large amount of time variation in this period, the bias variance tradeoff in modeling it is not favorable to a model with a large amount of potential predictors ($\sim 500$ predictors). During the ``Post Great Moderation" period the performance of LC-Boost methods show large improvements relative to the benchmark AR model, and relative to their time invariant counterparts, for all forecast horizons, with the improvement being greatest for longer horizons.

\subsection{Assessing Benefits of Modeling Time Varying Parameters}
\label{Benefits}

In order to assess the benefits of \textit{directly} modeling parameter instability, we compare the performance of LC-Boost Factor vs Boost-Factor. These also happen to be the best time varying and time invariant methods respectively. We start our analysis by first plotting the MSFE of LC-Boost Factor, \textit{relative to} the MSFE of Boost Factor, as a function of the start date for the out of sample forecast period, i.e. $MSFE_{(LCBoost Factor)}(T_1,T_2)/MSFE_{(Boost Factor)}(T_1,T_2)$. The results are in figure \ref{Benefits_TVP}. We observe that for all series and forecast horizons LC-Boost Factor almost never performs worse than Boost Factor, and outperforms it the vast majority of the time. Furthermore, the gap between the two methods widens as we increase $T_1$, the start date of the out of sample period, and as we increase $h$, the forecast horizon. For example, if we consider horizon $h=12$, and we start the out of sample period in the early 1990's, LC Boost offers, on average, over a 20 percent improvement over Boost-Factor. We observe similar patterns, although the improvements are not as large ($\sim$ 10-15 percent on average), for horizons $h=3,6$. An exception seems to be for $h=1$, which shows little improvements for the majority of series, with the exceptions coming from the two interest rate series and EMS. 

Next, we attempt to get a finer look at how the benefits of modeling parameter instability vary over time. We first define the \textit{local} MSFE (L-MSFE), of method $i$ at time $t_0$ as :
\begin{align*}
    \text{L-MSFE}_{i}(t_0)=\frac{\sum_{t=t_0-\Delta}^{t_0+\Delta}\hat{\epsilon}_{t,(i)}^{2}}{\sum_{t=t_0-70}^{t_0+\Delta}\hat{\epsilon}_{t,(AR)}^{2}}, \quad\quad \text{RL-MSFE}_{i}(t_0)=\frac{\text{L-MSFE}_{i}(t_0)}{\text{L-MSFE}_{\text{BoostFactor}}(t_0)}
\end{align*}
with the convention that $\hat{\epsilon}_{t,(i)}=0$ for  $t\leq 0, t\geq T$. This amounts to using a uniform kernel to weight the forecast errors with a bandwidth chosen such that the window size has $\Delta=70$ observations. We then plot $\text{RL-MSFE}_{i}(t_0)$ for $i=$\textsc{LCBoostFactor}, for $t_0=\text{1977:3},\ldots,\text{2012:10}$ for all series and forecast horizons. The endpoints are chosen so that the first and last values in the plot correspond to the $\text{RL-MSFE}$ during the ``Pre-Great Moderation" and ``Post-Great Moderation" periods respectively. 

The results are in figure \ref{localMSFE1}, and we observe that the first value is usually near or above one for all variables except for CLF (Civilian Labor Force). This suggests that during the ``Pre-Great Moderation" there seems to be little or no benefit to modeling time variation. This can reflect either a lack of underlying parameter instability during this time period, or the relatively low sample size available combined with high volatility made it difficult to exploit the time variation present. During the Great Moderation period, almost all series experience large declines in RL-MSFE, with the exact timing of the decline differing by series. For IP and RPI the benefits to modeling parameter instability appear to decrease from their mid 1990's levels, while for the rest of the series we see further improvements until the end of the sample. These results suggest that there is a large amount of parameter instability which started during the Great moderation period and continued though the sample.

Lastly, we attempt to examine the degree and timing of time variation by examining the bandwidth values selected. We define the local bandwidth of LC-Boost Factor as: $\text{L-BW}(t_0)=\sum_{t=t_0-\Delta}^{t_0+\Delta}\hat{b}_{t_0}/(2\Delta)$, with the convention that $\hat{b}_{t_0}=0$ for  $t\leq 0, t\geq T$. Recall that $\hat{b}_{t_0}$ is the bandwidth chosen at time $t_0$. Since we are using the uniform kernel, $\hat{b}_{t_0}$ represents the fraction of the sample \textit{available at time $t_0$} that we are using for estimation. As as example, at time $t_0$=1977:3 we have a total of 196 observations available for estimation, therefore a value of $\hat{b}_{t_0}=.61$ for $t_0$=1977:3 implies we are using $\approx$ 120 observations. We set $\Delta=70$ observations, and then plot $\text{L-BW}(t_0)$ for $t_0=\text{1977:3},\ldots,\text{2012:10}$ for all series and forecast horizons. As an additional comparison we also plot the local bandwidth implied by a rolling window estimator which uses a fixed window length of 120 observations. We also note that $b_{t_0}=1$ for Boost Factor at all time points, since it uses the entire sample available at time $t_0$.

The results are seen in figure \ref{localBW}, and we notice that for the pre Great Moderation period the local bandwidths are usually between .7-.8 for most series. As we enter the Great Moderation we notice that the local Bandwidths generally tend to increase initially before declining. However, we notice the timing and degree of declines differs by series. For some series such as IP and RPI, the local BW tends to increase after reaching their lows in the mid 1990s, whereas for other series such as UNRATE, FEDFUNDS, and TB3MS the local BW start their decline in the 1990s. In contrast, we see that using a fixed rolling window of 120 observations implies a monotonically decreasing bandwidth and assumes the same bandwidth regardless of series or horizon.\footnote{Given our definition of the bandwidth, the bandwidth for the rolling window estimator is $\hat{b}_{t_0,rolling}=120/(\text{sample size available at time } t_0)$; as $t_0$ increases this bandwidth declines monotonically.} To determine the importance of estimating the optimal bandwidth via cross valiation, we compare the local MSFE of LC-Boost Factor to Boost Factor estimated using a 120 observation rolling window in the appendix. The results show that for the vast majority of series and horizons the rolling window estimator is strongly outperformed by LC-Boost Factor with the largest gains occurring during the Great Moderation period.


Overall our results suggest the following conclusions:
\begin{enumerate}[label=\textbf{\arabic*})]
    \item Parameter instability starts to appear around the beginning of the Great Moderation period. This instability seriously deteriorates the relative forecasting performance of time invariant methods; with the effect being more severe for longer horizon forecasts.
    \item Due to the large improvements in point forecasts from our methods, it is likely that this instability has a substantial impact on the conditional mean as well as the conditional variance of various economic series.
    \item The commonly used rolling window estimation method can understate the benefits of modeling parameter instability by failing to account for differences in the degree of parameter instability by series, forecast horizon, and time period. 
    \item Lastly, there are large benefits to modeling parameter instability if done properly. Given the high bias variance tradeoff encountered in using a reduced sample size, these benefits can easily be missed. For example, models such as LC-Boost have more difficulty in learning the time variation in the data due to the large amount of potential predictors.
\end{enumerate}

To elaborate more on point \textbf{4)} above, we also compare the L-MSFE of the following models in the appendix: LL-Boost Factor vs LC-Boost Factor, LC-Boost vs LC-Boost Factor, and LC-Boost vs Boost. We see from the results that LL-Boost Factor was strongly outperformed by LC-Boost Factor in the earlier parts of the sample, suggesting that there was little time variation during the pre-great Moderation period. As our sample size available for estimation increases and as the time variation starts to become more apparent, we see the performance of LL-Boost factor improve to the point where it does as well as or outperforms LC-Boost factor in about half of the series, especially for longer horizons. Compared to LC-Boost Factor, we observe that the benefits of modeling time variation via LC-Boost are smaller and are realized far later in the sample. As an example, for $h=12$ we notice that LC-Boost performs worse than Boost during most of the Great moderation period. Additionally, for many of the series, the improvements of LC-Boost over Boost start to occur near the end of the great moderation period. In contrast, LC-Boost Factor is able to adapt to the time variation far earlier as a result of having a more favorable bias variance tradeoff. 


\section{Conclusion}
\label{disucussion}

In this work, we have presented two $L_2$ Boosting algorithms for estimating high dimensional predictive regressions with time varying coefficient. We proved the consistency of both of these methods, and showed their effectiveness in modeling the parameter instability present in macroeconomic series. Compared to other TVP methods, our methods are very efficient computationally even for high dimensional data; a single LC-Boost forecast, including implementing the cross validation procedure, can be estimated within a matter of seconds. Additionally, they can be implemented by researchers and practitioners using the easy to use R package \textbf{mboost}. Furthermore, the boosting framework can be easily adapted to fitting more complex non-linear models.

There are many topics available for further study, one such topic is in selecting the important bandwidth parameter for our models. Although our cross validation procedure seems to perform adequately, we welcome further improvements to this methodology. Lastly, although our empirical example focused on forecasting, our models are applicable in a far broader range of settings.

\clearpage

\begin{table}[]
\small
\centering
\caption{Relative MSFE $h=12$}
\label{h=12}
\begin{tabular}{|c|c|c|c|c|c|c|c|c|}
\hline
\multicolumn{9}{|c|}{Full Out of Sample Period 1971:9-2018:8}                                              \\ \hline
                & IP   & PAYEMS & UNRATE & CLF & RPI & CPI & FF   & TB3MS \\ \hline
TV-AR           & 1.04 & .99    & 1.1    & .64 & .99 & .86 & 1    & 1.07  \\ \hline
DI            & .79  & .79    & .69    & .99 & .84 & .94 & .87  & .91   \\ \hline
Lasso           & .77  & .81    & .75    & .77 & .93 & .96 & .76  & .89   \\ \hline
Boost           & .78  & .73    & .73    & .74 & .85 & .88 & .79  & .88   \\ \hline
Boost Factor    & .75  & .81    & .62    & .96 & .80 & .89 & .78  & .85   \\ \hline
LC-Boost        & .74  & .73    & .62    & .66 & .88 & .80 & .85  & .92   \\ \hline
LC-Boost Factor & .62  & .64    & .58    & .63 & .75 & .77 & .84  & .90   \\ \hline
LL-Boost Factor & .74  & .85    & .70    & .76 & .76 & .82 & 1.20 & 1.37  \\ \hline
\end{tabular}
\begin{tabular}{|c|c|c|c|c|c|c|c|c|}
\hline
\multicolumn{9}{|c|}{``Pre-Great Moderation" 1971:9-1982:12}                                                \\ \hline
                & IP   & PAYEMS & UNRATE & CLF  & RPI & CPI  & FF   & TB3MS \\ \hline
TV-AR           & 1.07 & 1      & 1.15   & .71  & .96 & 1.11 & 1.01 & 1.14  \\ \hline
DI            & .38  & .49    & .48    & 1.51 & .61 & 82   & .88  & .89   \\ \hline
Lasso           & .30  & .51    & .41    & 1.27 & .71 & 1.13 & .76  & .77   \\ \hline
Boost           & .27  & .44    & .43    & 1.24 & .67 & .92  & .72  & .75   \\ \hline
Boost Factor    & .32  & .50    & .43    & 1.34 & .65 & .84  & .74  & .80   \\ \hline
LC-Boost        & .29  & .44    & .38    & .86  & .77 & 1.03 & .70  & .79   \\ \hline
LC-Boost Factor & .31  & .54    & .43    & .60  & .66 & .94  & .77  & .81   \\ \hline
LL-Boost Factor & .41  & .83    & .66    & .86  & .80 & 1.04 & 1.28 & 1.57  \\ \hline
\end{tabular}
\begin{tabular}{|c|c|c|c|c|c|c|c|c|}
\hline
\multicolumn{9}{|c|}{``Great Moderation" 1983:1-2006:12}                                                            \\ \hline
                & IP   & PAYEMS & UNRATE & CLF & RPI  & CPI  & FF   & TB3MS \\ \hline
TV-AR           & 1.12 & 1      & 1.11   & .65 & 1.06 & 1.07 & 1.04 & 1.03  \\ \hline
DI            & 1.18 & 1.08   & .75    & .88 & 1    & 1    & .85  & .92   \\ \hline
Lasso           & 1.23 & 1.32   & .97    & .85 & 1.10 & .90  & .80  & 1.03  \\ \hline
Boost           & 1.36 & 1.20   & .95    & .81 & 1.08 & .91  & .85  & 1     \\ \hline
Boost Factor    & 1.25 & 1.16   & .67    & .89 & 1    & .90  & .80  & .90   \\ \hline
LC-Boost        & 1.43 & 1.26   & .97    & .82 & 1.16 & .90  & 1.07 & 1.05  \\ \hline
LC-Boost Factor & .99  & .89    & .71    & .71 & .89  & 1    & .97  & 1.01  \\ \hline
LL-Boost Factor & 1.03 & 1      & .79    & .90 & .86  & 1.14 & 1.16 & 1.22  \\ \hline
\end{tabular}
\begin{tabular}{|c|c|c|c|c|c|c|c|c|}
\hline
\multicolumn{9}{|c|}{``Post Great Moderation" 2007:1-2018:8}                                                             \\ \hline
                & IP   & PAYEMS & UNRATE & CLF & RPI & CPI  & FF   & TB3MS \\ \hline
TV-AR           & .96  & .98    & 1      & .58 & .96 & .41  & .79  & .83   \\ \hline
DI            & 1.10 & 1.02   & .91    & .76 & .88 & 1.02 & .94  & 1     \\ \hline
Lasso           & 1.15 & .76    & .99    & .36 & .95 & .79  & .89  & 1.07  \\ \hline
Boost           & 1.13 & .72    & .91    & .33 & .81 & .81  & 1.13 & 1.25  \\ \hline
Boost Factor    & 1.04 & .98    & .82    & .79 & .77 & .94  & 1    & 1.01  \\ \hline
LC-Boost        & .94  & 68     & .60    & .38 & .73 & .48  & 1.29 & 1.21  \\ \hline
LC-Boost Factor & .82  & .54    & .66    & .57 & .72 & .42  & .92  & .98   \\ \hline
LL-Boost Factor & 1.01 & .69    & .66    & .55 & .66 & .33  & .56  & .67   \\ \hline
\end{tabular}
\end{table}

\begin{figure}
    \centering
    \includegraphics[scale=.75]{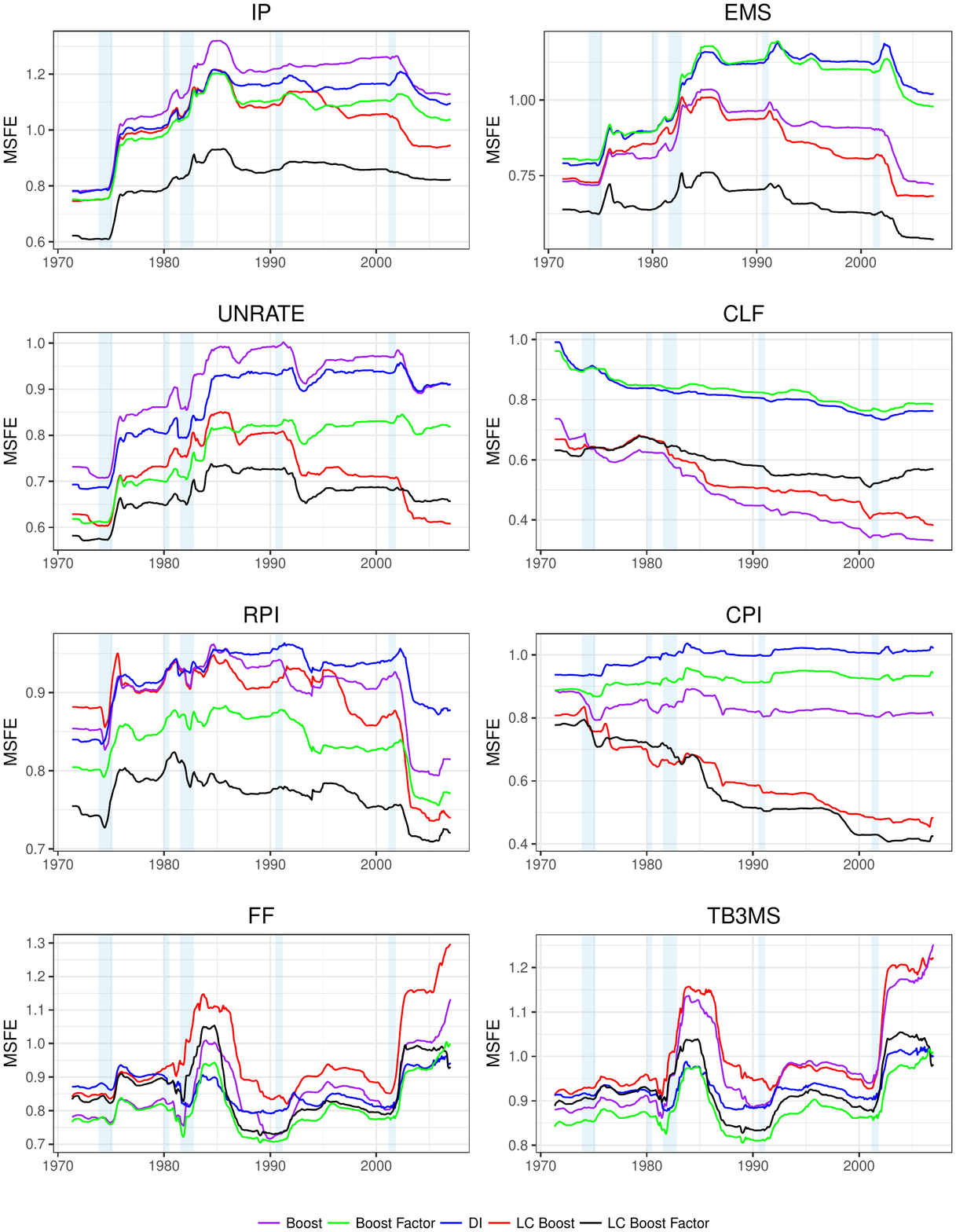}
    \caption{\footnotesize{\textbf{MSFE by start date of out of sample period. Horizon $h=12$}. More specifically we plot: $
    MSFE^{12}_{(i)}(T_1,T_2)=\sum_{t=T_1}^{T_2}\hat{\epsilon}^{2}_{t,(i)}/\sum_{t=T_1}^{T_2}\hat{\epsilon}^{2}_{t,(AR)}$, where we $T_1$ vary from 1971:9 until 2006:12,  with $T_2$=2018:8. Shaded regions represent NBER recession dates.}}
    \label{MSFE12}
\end{figure}

\begin{figure}
    \centering
    \includegraphics[scale=.75]{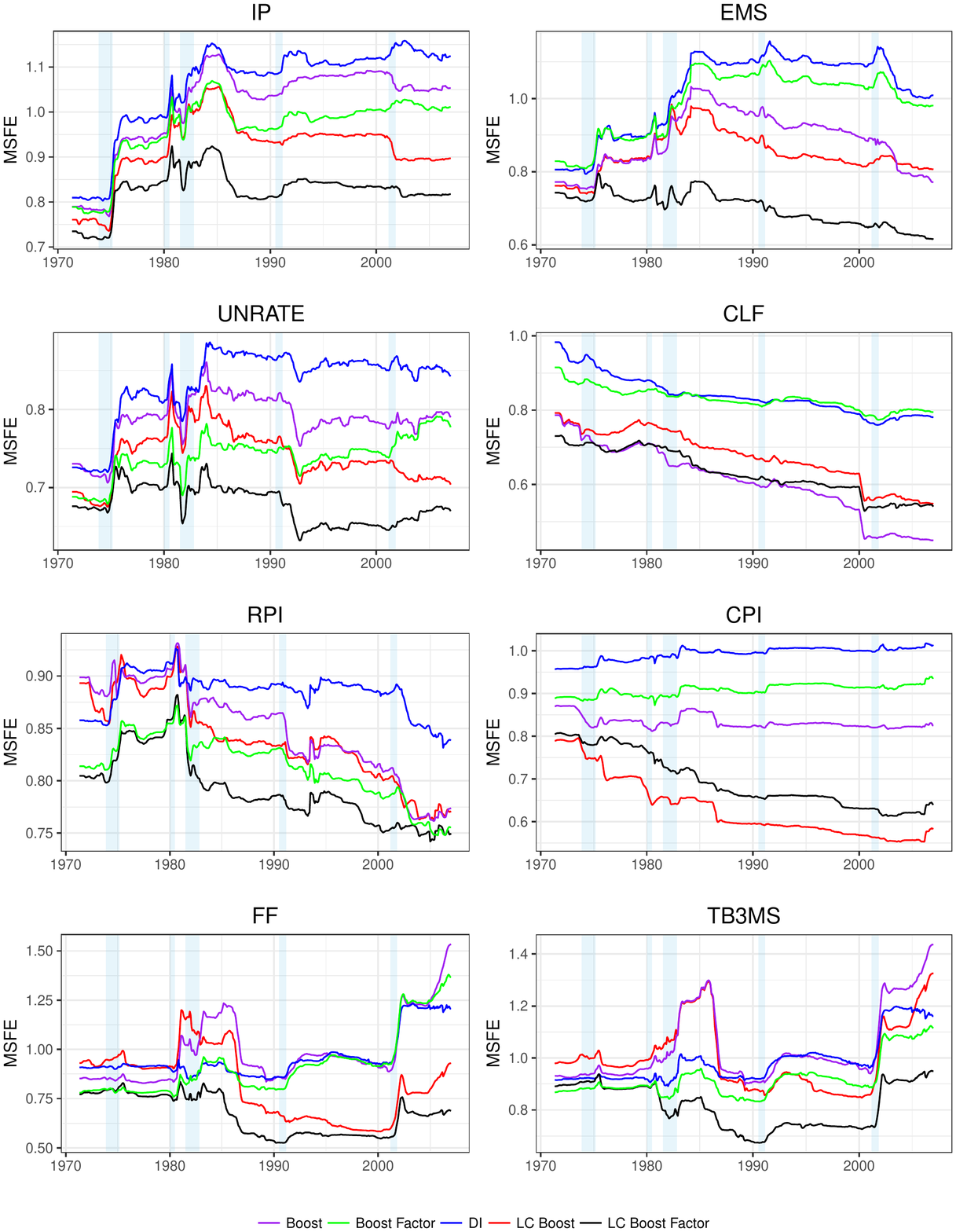}
    \caption{\footnotesize{\textbf{MSFE by start date of Out of sample period. Horizon $h=6$}. See notes to figure \ref{MSFE12}.}}
    \label{MSFE6}
\end{figure}


\begin{figure}
    \centering
    \includegraphics[scale=.75]{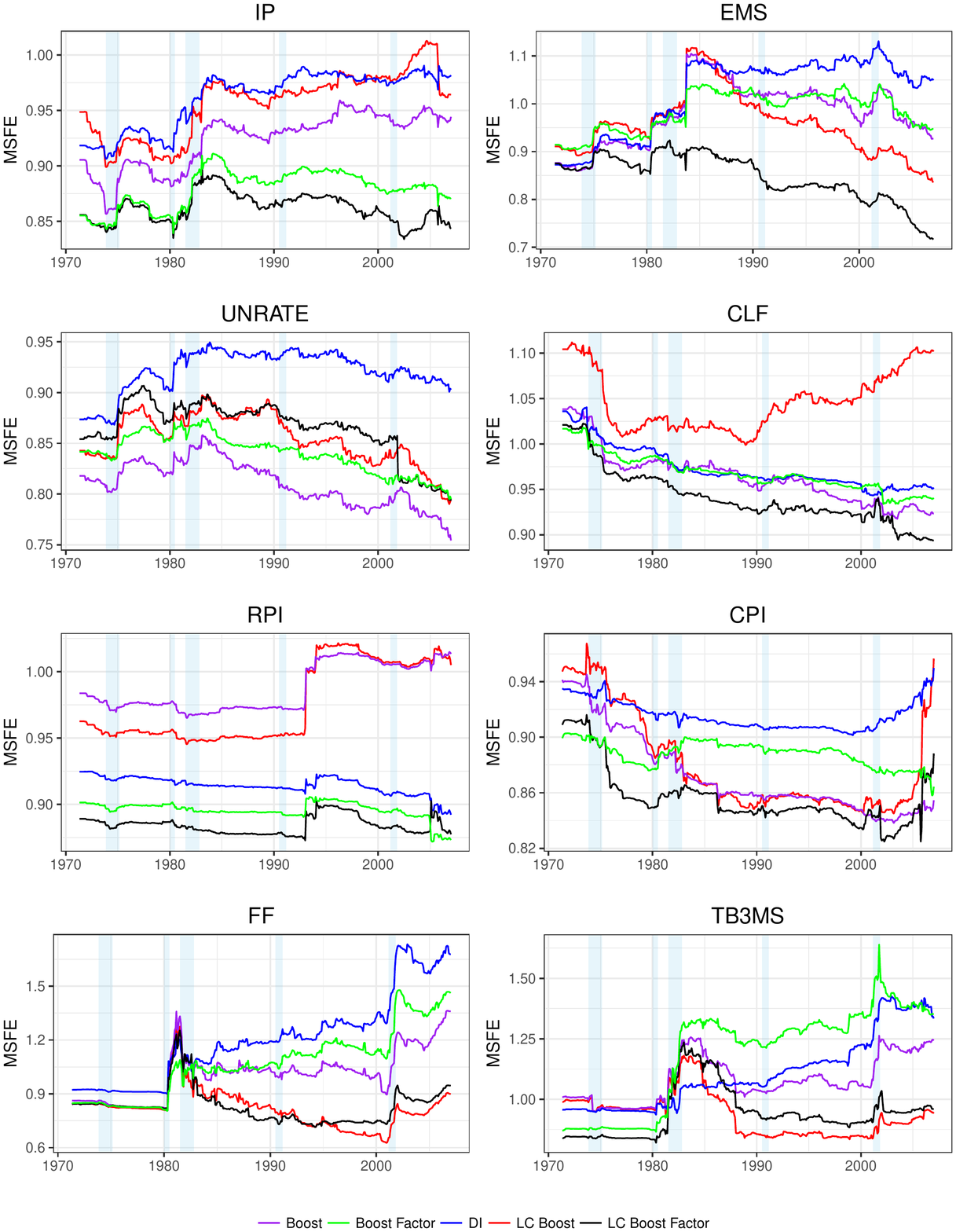}
    \caption{\footnotesize{\textbf{MSFE by start date of out of sample period. Horizon $h=1$.} See notes to figure \ref{MSFE12}.}}
    \label{MSFE1}
\end{figure}

\begin{figure}
    \centering
    \includegraphics[scale=.75]{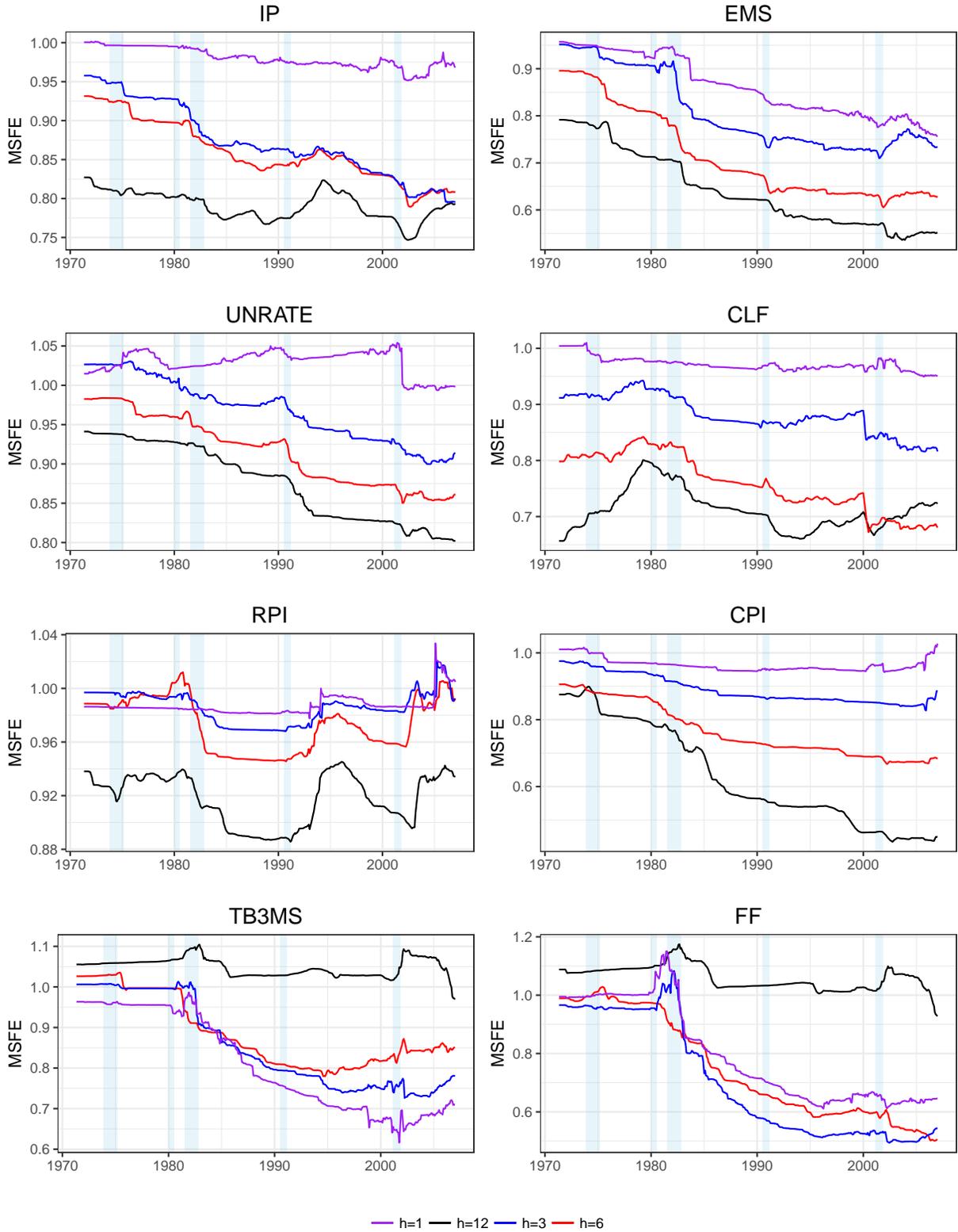}
    \caption{\footnotesize{\textbf{MSFE of LC-Boost Factor (LC-BF) relative to MSFE of Boost Factor (BF) by start date of out of sample period}: Figure plots $MSFE_{(LC-BF)}(T_0,T_2)/MSFE_{(BF)}(T_0,T_2)$ for $T_0=\text{1971:9},\ldots,\text{2006:12}$ and $T_2=\text{2018:8}$. See notes to figure \ref{MSFE12} or equation (\ref{MSFE_startdate}) for details. Colored lines represent the different horizons. Shaded regions represent NBER recession dates.}}
    \label{Benefits_TVP}
\end{figure}


\begin{figure}
    \centering
    \includegraphics[scale=.75]{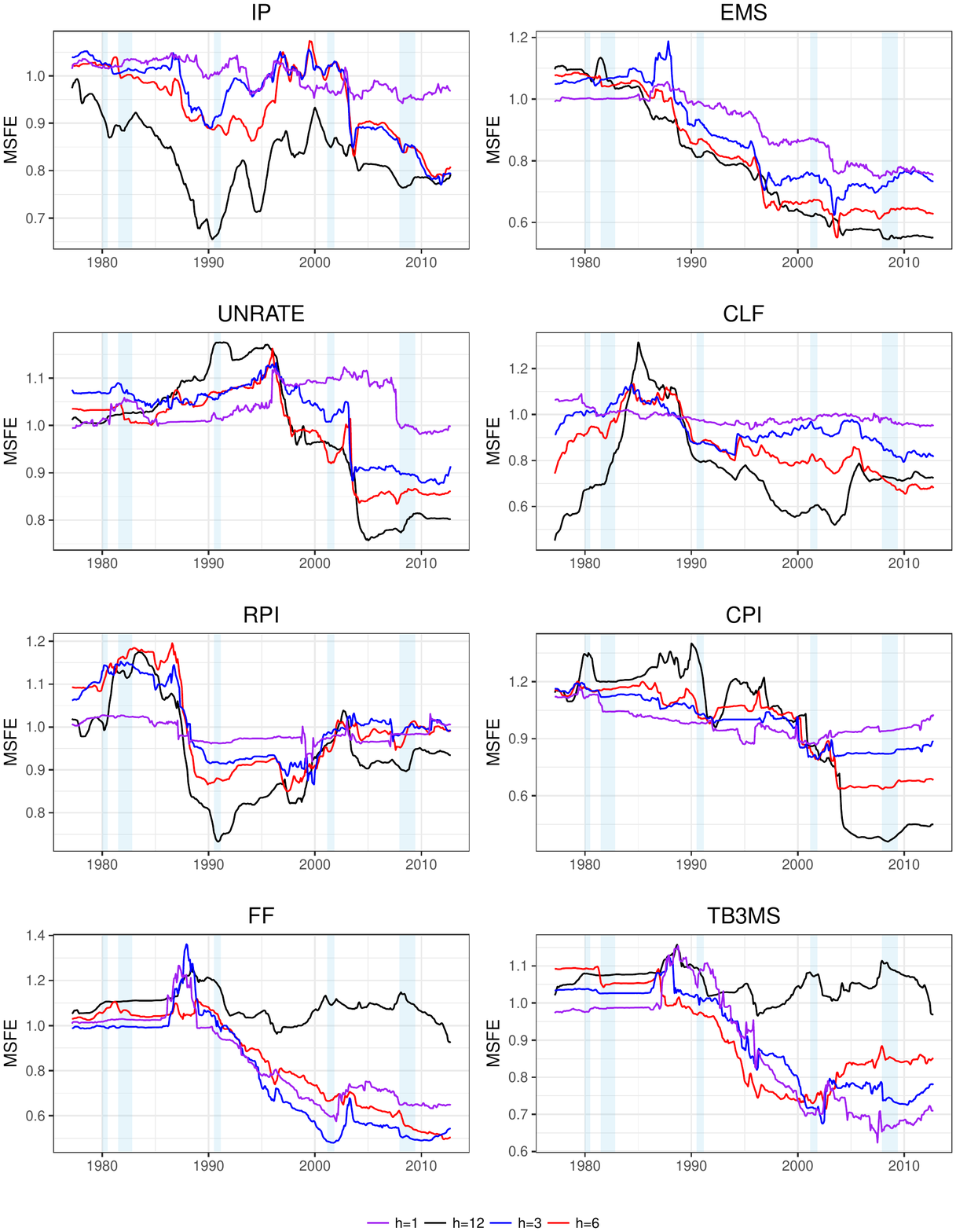}
    \caption{\footnotesize{\textbf{Local MSFE of LC-Boost Factor relative to Local MSFE of Boost Factor}: To obtain the local MSFE of a model we use a rolling mean with a window size of 70 observations, which gives us: $
    \text{L-MSFE}_{i}(t_0)=\sum_{t=t_0-70}^{t_0+70}\hat{\epsilon}_{t,(i)}^{2}/\sum_{t=t_0-70}^{t_0+70}\hat{\epsilon}_{t,(AR)}^{2}$
with the convention that $\hat{\epsilon}_{t,(i)}=0$ for $t\leq 0, t\geq T$. We then plot $\text{L-MSFE}_{LC Boost Factor}(t)/\text{L-MSFE}_{Boost Factor}(t)$, for $t=\text{1977:3},\ldots,\text{2012:10}$}}
    \label{localMSFE1}
\end{figure}

\begin{figure}
    \centering
    \includegraphics[scale=.75]{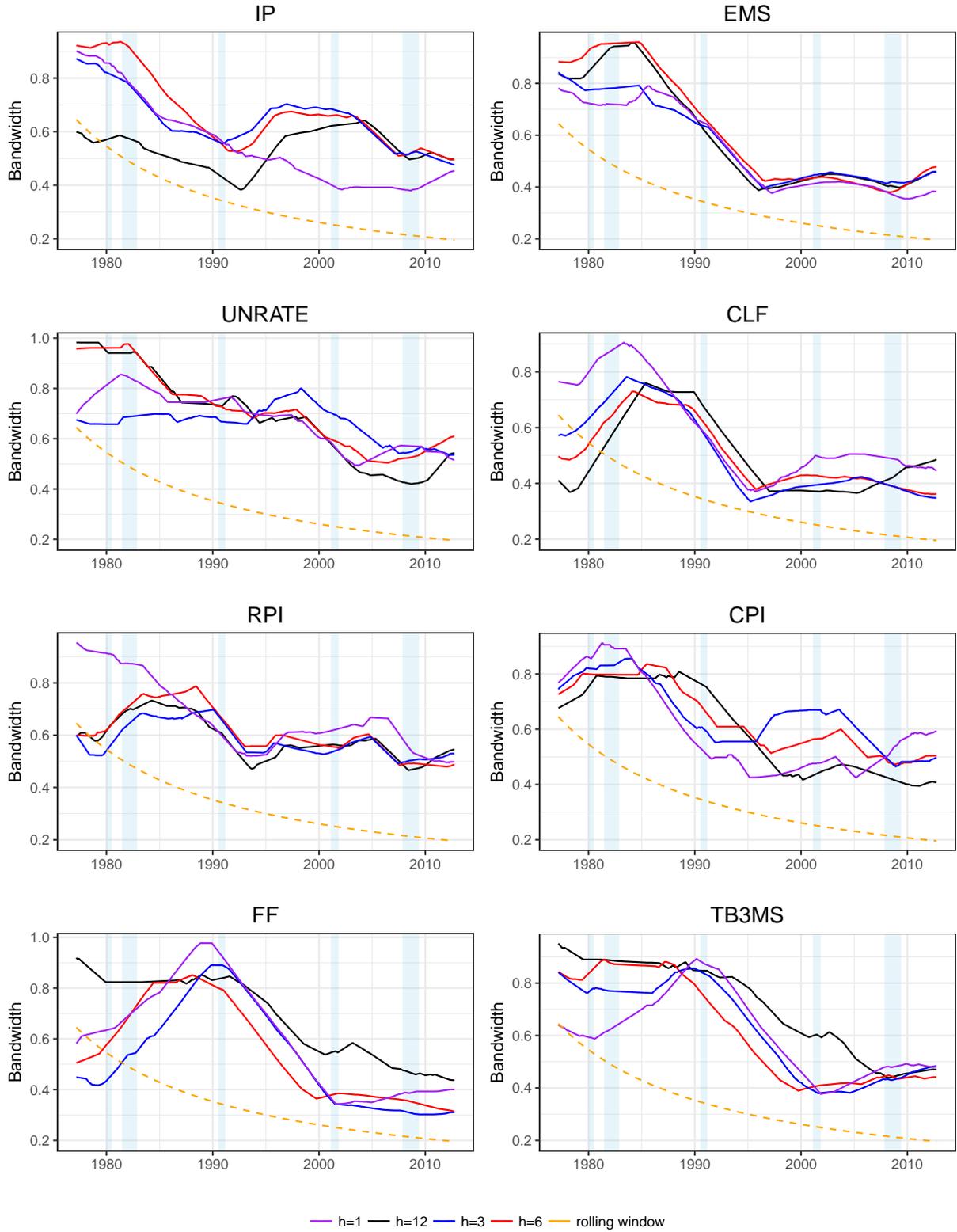}
    \caption{\footnotesize{\textbf{Local Bandwidth of LC-Boost Factor}: To obtain the local bandwidth of LC-Boost Factor we use a rolling mean with a window size of 70 observations, which gives us: $
    \text{L-BW}_{i}(t_0)=\sum_{t=t_0-70}^{t_0+70}\hat{b}_{t,(i)}/140$
for $i=$LC-Boost Factor. We also include the local bandwidth implied by a rolling window estimator which uses the last 120 observations}} 
\label{localBW}
\end{figure}

\clearpage

\begin{spacing}{0.4}
\small
\bibliographystyle{apa}
\bibliography{varselection_latest}
\end{spacing}

\clearpage

\section*{Appendix: Proofs}

\label{Proofs}

\begin{proof}[Proof of Theorem 1(a)]
  $ $\newline

The proof follows the framework of \cite{Buhlmann2006}, which handled the case of boosting for iid data using linear least squares base learners.\footnote{Since we refer to \cite{Buhlmann2006} during our proof, we try, when its possible, to keep the notation consistent with their work.} The proof depends on an application of Temlyakov's result \citep{temlyakov2000weak} for the population version of $L_2$ boosting known as ``weak greedy algorithm". We start by considering a step size of $\nu=1$, and smaller step sizes can be handled as in section 6.3 of \cite{Buhlmann2006}.

Throughout this section let $C,c$ refer to generic constants which can change from line to line. We introduce the following notation: Let $\bm{\tilde{x}}_{t-h}(u)$ and $\tilde{Y}_{t}(u)$ be the stationary approximation to $\bm{x}_{t-h,T}$ and $Y_{t,T}$ respectively with approximation error $O_p(T^{-1}) \rightarrow 0$. Let the inner product $\inp[\big]{\tilde{X}_{j,t}(u)
}{\tilde{X}_{k,t}(u)}=E(\tilde{X}_j(u)\tilde{X}_k(u))$ with  $||\tilde{X}_j(u)||^2=E(\tilde{X}^2_j(u))$. For ease of presentation let $||\tilde{X}_j(u)||^2=1$, $\forall j \leq p_T$, we will expand on this in more detail later, but we note that we are not using loss of generality by this assumption. We can define a new scaled process for each $u \in [0,1]$ as $\tilde{X}^{*}_{j,t}(u)=\tilde{X}_{j,t}(u)/||\tilde{X}_j(u)||$. However, we note that we are still placing all our assumptions on the unscaled process i.e weak dependence, smoothness, locally stationary approximation to $X_{j,uT,T}$ etc.

Let $f(u,\bm{\tilde{x}}_{t-h}(u))=\tilde{\bm{x}}_{t-h}(u)\bm{\beta}(u)$ be the stationary approximation to $f(u,\bm{x}_{t-h,T})$ with approximation error $O_p(T^{-1})$. For readability, we define, for any rescaled time point $u \in [0,1]$:
$$ f(\tilde{\bm{x}}_{uT-h}(u)) \equiv f(u,\tilde{\bm{x}}_{uT-h}(u)), \text{and } f(\bm{x}_{uT-h,T})\equiv f(u,\bm{x}_{uT-h,T}).$$
We now define a sequence of remainder functions for the population version of $L_2$ Boosting:
\begin{align*}
    &R^{0}f(\tilde{\bm{x}}_{uT-h}(u))=f(\tilde{\bm{x}}_{uT-h}(u)),\\
    &R^{m}f(\tilde{\bm{x}}_{uT-h}(u))=R^{m-1}f(\tilde{\bm{x}}_{uT-h}(u))-\inp[\big]{R^{m-1}f(\tilde{\bm{x}}_{uT-h}(u))}{\tilde{X}_{\mathcal{S}_m,uT-h}(u)}\tilde{X}_{\mathcal{S}_m,uT-h}(u), m=1,2,\ldots
\end{align*}
Where $\mathcal{S}_m=\argmax_{j}|\inp[\big]{R^{m-1}f(\tilde{\bm{x}}_{uT-h}(u))}{\tilde{X}_{j,uT-h}(u)}|$. Given that this criterion is sometimes infeasible to realize in practice, a weaker criterion is: Choose any $\mathcal{S}_m$ which satisfies:
\begin{align}
    |\inp[\big]{R^{m-1}f(\tilde{\bm{x}}_{uT-h}(u))}{\tilde{X}_{\mathcal{S}_m}(u)}| \geq b*\sup_{j} |\inp[\big]{R^{m-1}f(\tilde{\bm{x}}_{uT-h}(u))}{\tilde{X}_j(u)}|, \text{ for some } b \in (0,1]
    \label{relaxed}
\end{align}
We then obtain: \begin{align*}&f(\tilde{\bm{x}}_{uT-h}(u))=\sum_{j=0}^{m-1}\inp[\big]{R^{j}f(\tilde{\bm{x}}_{uT-h}(u))}{\tilde{X}_{\mathcal{S}_{j+1}}(u)}+R^{m}f(\tilde{\bm{x}}_{uT-h}(u)),\\
&||R^mf(\tilde{\bm{x}}_{uT-h}(u))||^2=||R^{m-1}f(\tilde{\bm{x}}_{uT-h}(u))||^2-|\inp[\big]{R^{m-1}f(\tilde{\bm{x}}_{uT-h}(u))}{\tilde{X}_{\mathcal{S}_m}(u)}|^2
\end{align*}
If (\ref{relaxed}) is met, then we have the following bound, for the population version of $L_2$ Boosting, provided by  \cite{temlyakov2000weak}:
\begin{align}
    ||R^mf(\tilde{\bm{x}}_{uT-h}(u))||^2 \leq B(1+mb^2)^{\frac{-b}{4+2b}}
    \label{temlyakov}
\end{align}
with as defined in (\ref{relaxed}), and $ \sup_{u \in [0,1]}|\bm{\beta}(u)| \leq B < \infty$.

To analyze the sample version of our LC-Boost algorithm, we introduce the following notation:
\begin{align*}&\inp[\big]{X_{j,\cdot,T}}{X_{k,\cdot,T}}_{(T)}=\sum_{t=1}^{T}K_b(t/T-u)X_{j,t-h,T}X_{k,t-h,T}\\
&\inp[\big]{f}{X_{k,\cdot,T}}_{(T)}=\sum_{t=1}^{T}K_b(t/T-u)f(\bm{x}_{t-h})X_{k,t-h,T}\\
&||X_{j,\cdot,T}||^2_{(T)}=\sum_{t=1}^{T}K_b(t/T-u)X_{j,t-h,T}^2
\end{align*}
We suppress the dependence on $u$ in the notation above. We assume that $||X_{j,\cdot,T}||_{(T)}=1$, and once again we do not lose any generality from this. We can define a new scaled process for each $u \in [0,1]$ as  ${X^{*}_{j,t,T}(u)}_{t=1,\ldots, T}= {X_{j,t,T}/||X_{j,\cdot,T}||_{(T)}}_{t=1,\ldots, T}$. As before we are not placing any assumptions such as local stationarity or weak dependence on this scaled process, and we show that our results hold only from the assumptions placed on the unscaled process $X_{j,t,T}$. The key will be the uniform concentration bounds derived in lemma 1. As previously, we can define the sequence of sample remainder functions as:
\begin{align*}
    &\hat{R}_T^{0}f(\bm{x}_{uT-h,T})=f(\bm{x}_{uT-h,T}),\\
    &\hat{R}_T^{m}f(\bm{x}_{uT-h,T})=\hat{R}_T^{m-1}f(\bm{x}_{uT-h,T})-\inp[\big]{\hat{R}_T^{m-1}f}{X_{\widehat{\mathcal{S}}_m,\cdot,T}}_{(T)}X_{\widehat{\mathcal{S}}_m,uT-h,T}, m=1,2,\ldots
\end{align*}
where: $\hat{\mathcal{S}}_1=\argmax_{j}|\inp{Y_{\cdot,T}}{X_{j,\cdot,T}}_{(T)}|$ and $\hat{\mathcal{S}}_m=\argmax_{j}|\inp{\hat{R}_{T}^{m}f}{X_{j,\cdot,T}}_{(T)}|$. Therefore, $\hat{R}_{T}^{m}f(\bm{x}_{uT-h,T})=f(\bm{x}_{uT-h,T})-\hat{F}_{lc}^{(M_T)}(u,\bm{x}_{uT-h,T})$, is the difference between $f(\bm{x}_{uT-h,T})$ and its LC-Boost estimate.

Lastly, to proceed with the proof, we define a sequence of semi-population version remainder functions as: 
\begin{align*}
    &\tilde{R}_T^{0}f(\tilde{\bm{x}}_{uT-h}(u))=f(\tilde{\bm{x}}_{uT-h}(u)),\\
    &\tilde{R}_T^{m}f(\tilde{\bm{x}}_{uT-h}(u))=\tilde{R}_T^{m-1}f(\tilde{\bm{x}}_{uT-h}(u))-\inp[\big]{\tilde{R}_T^{m-1}f(\tilde{\bm{x}}_{uT-h}(u))}{\tilde{X}_{\widehat{\mathcal{S}}_m,uT-h}(u)}\tilde{X}_{\widehat{\mathcal{S}}_m,uT-h}(u), m=1,2,\ldots
\end{align*}
The difference between the population and the semi-population remainder functions, is that the semi-population version uses  selectors $\hat{\mathcal{S}}_m$ estimated from the sample. The strategy of the proof is: first we establish that the selectors $\hat{\mathcal{S}}_m$ satisfy a finite sample analogue of (\ref{relaxed}), which allows us to apply Temlyakov's result (\ref{temlyakov}) to the semipopulation version. Lastly, we analyze the difference between the sample and the semipopulation versions: $\hat{R}_T^{m}f(\bm{x}_{uT-h,T})-\tilde{R}_T^{m}f(\tilde{\bm{x}}_{uT-h}(u))$. 
\\
\\
We need the following lemma:
\begin{lem}
Under conditions  \ref{ConditionA}, \ref{ConditionB}, \ref{ConditionD}, \ref{ConditionE}, and for $\kappa$ as defined in Theorem 1, the following hold:
\begin{enumerate}[noitemsep]
\item $\sup_{j,k \leq p_T}\sup_{u \in [0,1]}|\sum_{t=1}^{T}K_b(t/T-u)X_{j,t,T}X_{k,t,T}-E(\tilde{X}_{j,T}(u)\tilde{X}_{k,T}(u))| = \zeta_{T,1}=O_p(S_T^{-\kappa})$
\\
\item $\sup_{j \leq p_T}\sup_{u \in [0,1]}|\sum_{t=1}^{T}K_b(t/T-u)X_{j,t-h,T}\epsilon_{t,T}| = \zeta_{T,2}=O_p(S_T^{-\kappa})$
\\
\item $\sup_{j \leq p_T}\sup_{u \in [0,1]}|\sum_{t=1}^{T}K_b(t/T-u)X_{j,t,T}f(\bm{x}_{t,T})-E(\tilde{X}_{j,T}(u)f(\tilde{\bm{x}}_{T}(u)))| = \zeta_{T,3}=O_p(S_T^{-\kappa})$
\\
\item$\sup_{j \leq p_T}\sup_{u \in [0,1]}|\sum_{t=1}^{T}K_b(t/T-u)X_{j,t-h,T}Y_{t,T})-E(\tilde{X}_{j,t-h}(u)\tilde{Y}_{T}(u)))| = \zeta_{T,4}=O_p(S_T^{-\kappa})$
\end{enumerate}
\label{lemma1}
\end{lem}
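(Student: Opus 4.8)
\textbf{Proof proposal for Lemma \ref{lemma1}.}

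The plan is to establish all four bounds from a single template, exploiting two reductions. First, part 3 is handled exactly like part 1 once one observes that $f(\bm{x}_{t,T}) = \bm{\beta}'(t/T)\bm{x}_{t,T}$ inherits, by the $\ell_1$ bound of Condition \ref{ConditionA} and the moment/dependence assumptions of Condition \ref{ConditionB}, the same moment order and summable functional dependence (up to constants) as a single coordinate $X_{j,t,T}$; and part 4 follows by writing $Y_{t,T} = f(\bm{x}_{t-h,T}) + \epsilon_{t,T}$, so that $\zeta_{T,4} \le \zeta_{T,3} + \zeta_{T,2}$. Second, for each statistic I split the error into a \emph{bias} term (replacing the triangular-array process by its stationary approximation and smoothing the population cross-moment over the bandwidth window) and a \emph{stochastic} term (the weighted sum minus its own mean). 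Note that part 2 --- and the $\epsilon$-contribution to part 4 --- carries no bias term at all, since $E(X_{j,t-h,T}\epsilon_{t,T}) = 0$ by the model assumption, so there it is purely a stochastic-term argument.

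For the bias term, fix $u$ and $j$ (and $k$). Since $K$ has compact support (Condition \ref{ConditionE}), the kernel weights are supported on $\{t : |t/T - u| \lesssim b\}$; on this set the local-stationarity bounds of Definition \ref{defL} --- $\|X_{j,t-h,T} - \tilde{X}_{j,t-h}(u)\|_r \le C_B(|t/T-u| + T^{-1})$ and $\|\tilde{X}_{j,t}(u) - \tilde{X}_{j,t}(v)\|_r \le C_B|u-v|$ --- together with the $\mathcal{C}^2$ smoothness of $\Sigma_{\tilde{x}}(\cdot)$ in Condition \ref{ConditionD} give that the (normalized) kernel-weighted average of $E(X_{j,t-h,T}X_{k,t-h,T})$ differs from $E(\tilde{X}_{j,T}(u)\tilde{X}_{k,T}(u))$ by $O(b + T^{-1}) = O(S_T/T)$, uniformly in $j,k,u$ (uniformity comes from the constants $C_B$, $D_r$ and the $\max_{j\le p_T}\sup_u$ in the dependence assumptions not depending on $j,k,u$). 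Because $S_T = O(T^{\psi})$ and $\kappa < \psi^{-1}-1$, we have $S_T/T = O(T^{\psi-1}) = o(S_T^{-\kappa})$; this is exactly where the restriction on $\kappa$ in Theorem 1 enters.

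For the stochastic term I would proceed in two stages. First, for a \emph{fixed} $(u,j,k)$, the centered weighted sum $\sum_t K_b(t/T-u)\{X_{j,t-h,T}X_{k,t-h,T} - E(\cdot)\}$ is built from $O(S_T)$ effectively contributing terms drawn from locally stationary processes with summable functional dependence; applying a Nagaev-type inequality for functional-dependence sequences (as in \cite{Wu2005}), with moment order $r/2 > 2$ for the product $X_jX_k$ and $\tau = qr/(q+r) > 2$ for $X_j\epsilon$ (H\"older), gives a tail bound with a polynomial piece and a sub-Gaussian piece. Evaluated at the target rate $x \asymp S_T^{-\kappa}$ and combined with the cost of the covering step below, this reproduces the four constituents of $a_T$ (the $S_T^{-\tau+\tau\kappa}$ and $\exp(-S_T^{1-2\kappa})$ terms from the $\epsilon$-part, the $S_T^{-r/2+r\kappa}$ and $\exp(-S_T^{1-2\kappa})$ terms from the covariate part). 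Second, to make the bound uniform over $u \in [0,1]$ I would use a covering argument: discretize $[0,1]$ with a polynomially fine grid, take a union bound over grid points, and control the oscillation of $u \mapsto \sum_t K_b(t/T-u)(\cdot)$ between neighboring points by writing $K$ as a difference of two monotone functions (possible since $K$ is of bounded variation, Condition \ref{ConditionE}) and sandwiching, so that the discretization error is $o(S_T^{-\kappa})$. Finally a union bound over the $\le p_T^2$ pairs $(j,k)$ multiplies the failure probability by at most $p_T^2$, which is what converts the per-index bound into the form $1 - O(p_T a_T)$, once one tracks that the covariate-driven terms of $a_T$ already carry a factor of $p_T$.

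The main obstacle is the uniform-in-$u$ step: the single-$(u,j,k)$ concentration is a routine invocation of the Nagaev inequality for functional-dependence processes, but promoting it to a statement that holds simultaneously over the continuum requires the covering argument with the bounded-variation decomposition of the kernel, and one must verify that the grid can be taken fine enough that the between-point oscillation is negligible relative to $S_T^{-\kappa}$ while coarse enough that the union-bound cost stays within $a_T$. The secondary --- but still delicate --- point is the exponent bookkeeping: checking that setting $x \asymp S_T^{-\kappa}$ in the Nagaev bound, tracking the effective sample size $S_T$, the factor $T$, and the grid cardinality, produces precisely the four terms of $a_T$, and that every constant involved ($C_B$, $D_q$, $D_r$, the sums $\Phi_{m,r}^{\bm{x}}$ and $\Delta_{m,q}^{\epsilon}$) is uniform in $j$, $k$, and $u$, as guaranteed by Conditions \ref{ConditionB} and \ref{ConditionD}.
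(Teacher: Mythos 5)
Your proposal is correct in structure and, for most of the argument, matches the paper's: the same bias/stochastic decomposition, the same treatment of the bias via local stationarity plus the $\mathcal{C}^2$ smoothness of $\Sigma_{\tilde x}(\cdot)$ (with $S_T/T=o(S_T^{-\kappa})$ coming from $\kappa<\psi^{-1}-1$), the same H\"older computation giving the product processes $X_jX_k$ and $X_j\epsilon$ moment orders $r/2$ and $\tau$ with summable functional dependence, part 3 reduced to part 1 through the $\ell_1$ bound of Condition \ref{ConditionA}, part 4 obtained as parts 2 plus 3, and a final union bound over the (at most $p_T^2$) index pairs. Where you diverge is the key stochastic step: the paper does not build uniformity in $u$ by hand. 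It verifies the dependence-measure conditions for the product processes and then invokes Theorem 2.7(iii) of \cite{dahlhaus2017towards}, which already delivers a concentration bound that is uniform over $u\in[0,1]$ for kernel-weighted sums of locally stationary processes, yielding directly the terms $O(TS_T^{-r/2+r\kappa/2})+O(\exp(-S_T^{1-2\kappa}))$ (and the $\tau$ analogue for the $X_j\epsilon$ sum). Your plan instead re-derives that uniform bound from scratch: pointwise Nagaev inequality under functional dependence, a polynomially fine grid with a union bound, and control of the oscillation in $u$ via the bounded-variation decomposition of $K$. That route is sound and is essentially how results of the Dahlhaus--Richter--Wu type are proved, but it reproves an available theorem, and the delicate bookkeeping you flag (grid cardinality of order $T$ so that the union-bound cost reproduces, and does not exceed, the factor $T$ appearing in $a_T$) is exactly what the citation spares you. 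The paper's route buys brevity and a clean separation of labor (all that must be checked is the dependence measure of the products and the bias order); yours buys self-containedness at the price of carrying out the chaining and exponent accounting explicitly. One small imprecision worth noting: for part 2 the paper does keep a (vanishing) centering term, since the concentration result is stated relative to $E(\tilde X_{j,t-h}(u)\tilde\epsilon_T(u))$, which is $O(T^{-1})$ rather than exactly zero even though $E(X_{j,t-h,T}\epsilon_{t,T})=0$; this does not affect your rate.
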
 

Now we note that using the above lemma we can bound sums of the form:
\begin{align*}
    \sup_{u\in[0,1]}\sup_{j}|\sum_{t=1}^{T}K_b(t/T-u)X_{j,t-h,T}Y_{t,T}/||X_{j,\cdot,T}||_{(T)}-E(\tilde{X}_{j,t-h}(u)\tilde{Y}_{T}(u)))/||\tilde{X}_{j,t-h}(u)|||
\end{align*}
if we assumed that $||\tilde{X}_{j,t}(u)||,||X_{j,t-h,T}||_{(T)} \neq 1$.

Let $\zeta_T=\max(\zeta_{T,1},\zeta_{T,2},\zeta_{T,3},\zeta_{T,4})=O_p(S_T^{-\kappa})$ and denote by $\omega$ a realization of all $T$ sample points involved in estimation. The next lemma bounds the difference between the sample and population learners at step $m$.

\begin{lem}
Suppose conditions \ref{ConditionA}, \ref{ConditionB}, \ref{ConditionD}, and \ref{ConditionE} hold. Then for $\kappa$ as defined in Theorem 1 and on the set $\mathcal{A}_T=\{\omega: \zeta_T(w)< 1/2\}$, we have:
\begin{align*}
   \sup_{u \in [0,1]} \sup_{j \leq p_T} |\inp[\big]{\hat{R}^{m-1}f}{X_{j,\cdot,T}}_{(T)}-\inp[\big]{\tilde{R}^{m-1}f(\tilde{\bm{x}}_{uT-h}(u))}{\tilde{X}_{j,uT-h}(u)}| \leq C(5/2)^m\zeta_T,
\end{align*}
where $C$ does not depend on $m,T$.
\label{lemma2}
\end{lem}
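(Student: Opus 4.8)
The plan is to prove the bound by induction on $m$, using crucially that the sample recursion for $\hat R_T^{m}f$ and the semi-population recursion for $\tilde R_T^{m}f$ share the \emph{same} sample-estimated selectors $\hat{\mathcal{S}}_1,\ldots,\hat{\mathcal{S}}_m$, so that the two sequences differ only through replacing the kernel-weighted empirical inner product $\inp{\cdot}{\cdot}_{(T)}$ by its population analogue and $\bm{x}_{t-h,T}$ by its stationary approximation $\tilde{\bm{x}}_{t-h}(u)$ --- both controlled uniformly at rate $\zeta_T=O_p(S_T^{-\kappa})$ by Lemma~\ref{lemma1}. Write $a_j=\inp{\hat R_T^{m-1}f}{X_{j,\cdot,T}}_{(T)}$, $\tilde a_j=\inp{\tilde R_T^{m-1}f(\tilde{\bm{x}}_{uT-h}(u))}{\tilde X_{j,uT-h}(u)}$, $G_{kj}=\inp{X_{k,\cdot,T}}{X_{j,\cdot,T}}_{(T)}$, $\tilde G_{kj}=E(\tilde X_k(u)\tilde X_j(u))$, and $\Delta_m=\sup_{u\in[0,1]}\sup_{j\le p_T}|a_j-\tilde a_j|$ --- exactly the quantity the lemma bounds. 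Peeling off one step of each recursion gives $\inp{\hat R_T^{m}f}{X_{j,\cdot,T}}_{(T)}=a_j-a_{\hat{\mathcal{S}}_m}G_{\hat{\mathcal{S}}_m j}$ and $\inp{\tilde R_T^{m}f}{\tilde X_j(u)}=\tilde a_j-\tilde a_{\hat{\mathcal{S}}_m}\tilde G_{\hat{\mathcal{S}}_m j}$; subtracting and inserting $\pm\,\tilde a_{\hat{\mathcal{S}}_m}G_{\hat{\mathcal{S}}_m j}$ yields
\[
\Delta_{m+1}\;\le\;\Delta_m+\sup_{u}\sup_{j}\Big(|a_{\hat{\mathcal{S}}_m}-\tilde a_{\hat{\mathcal{S}}_m}|\,|G_{\hat{\mathcal{S}}_m j}|+|\tilde a_{\hat{\mathcal{S}}_m}|\,|G_{\hat{\mathcal{S}}_m j}-\tilde G_{\hat{\mathcal{S}}_m j}|\Big).
\]

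Next I would bound each factor uniformly. Since $\hat{\mathcal{S}}_m\le p_T$ is just an index, $|a_{\hat{\mathcal{S}}_m}-\tilde a_{\hat{\mathcal{S}}_m}|\le\Delta_m$. By Cauchy--Schwarz and the unit-norm normalization, $|\tilde G_{kj}|\le\|\tilde X_k(u)\|\,\|\tilde X_j(u)\|=1$, while Lemma~\ref{lemma1}(1) gives $|G_{kj}-\tilde G_{kj}|\le\zeta_{T,1}\le\zeta_T$, so on $\mathcal{A}_T=\{\zeta_T<1/2\}$ we have $|G_{\hat{\mathcal{S}}_m j}|<3/2$. For $|\tilde a_{\hat{\mathcal{S}}_m}|$, the key observation is that the semi-population update subtracts the orthogonal projection of $\tilde R_T^{m-1}f$ onto the unit vector $\tilde X_{\hat{\mathcal{S}}_m}(u)$, so $\|\tilde R_T^{m}f\|^2=\|\tilde R_T^{m-1}f\|^2-\inp{\tilde R_T^{m-1}f}{\tilde X_{\hat{\mathcal{S}}_m}(u)}^2$ is non-increasing \emph{regardless of which selector is used}; hence $\|\tilde R_T^{m-1}f\|\le\|\tilde R_T^{0}f\|=\|f(\tilde{\bm{x}}(u))\|\le\sum_j|\beta_j(u)|\,\|\tilde X_j(u)\|\le B$ by Condition~\ref{ConditionA} (with $B$ as in (\ref{temlyakov})), and Cauchy--Schwarz gives $|\tilde a_{\hat{\mathcal{S}}_m}|\le\|\tilde R_T^{m-1}f\|\le B$. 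Collecting terms, $\Delta_{m+1}\le(1+3/2)\Delta_m+B\zeta_T=\tfrac52\Delta_m+B\zeta_T$ on $\mathcal{A}_T$.

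For the base case, $\hat R_T^{0}f=\tilde R_T^{0}f=f$, so $\Delta_1=\sup_u\sup_j|\inp{f}{X_{j,\cdot,T}}_{(T)}-E(\tilde X_j(u)f(\tilde{\bm{x}}(u)))|\le\zeta_{T,3}\le\zeta_T$ by Lemma~\ref{lemma1}(3). Iterating the recursion, $\Delta_m\le(5/2)^{m-1}\zeta_T+B\zeta_T\sum_{k=0}^{m-2}(5/2)^k\le(5/2)^{m-1}\big(1+\tfrac{2B}{3}\big)\zeta_T\le C(5/2)^{m}\zeta_T$ with $C=\tfrac25\big(1+\tfrac{2B}{3}\big)$, which depends on $B$ but not on $m$ or $T$, as claimed. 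When $\|X_{j,\cdot,T}\|_{(T)}$ and $\|\tilde X_j(u)\|$ are not identically $1$, one first divides the predictors by these scaling factors; Lemma~\ref{lemma1}(1) with $k=j$ shows the two scalings differ by $O(\zeta_T)$ and, on $\mathcal{A}_T$, are bounded and bounded away from $0$, so the argument goes through after inflating constants, exactly as indicated following Lemma~\ref{lemma1}.

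I expect the main obstacle to be obtaining a constant $C$ that does not grow with $m$: a priori the ``leading coefficients'' $G_{\hat{\mathcal{S}}_m j}$ and $\tilde a_{\hat{\mathcal{S}}_m}$ multiplying the previous step's error could accumulate geometrically with an unbounded base. The two facts that tame this --- the orthogonal-projection structure forcing $\|\tilde R_T^{m}f\|$ to be non-increasing for \emph{arbitrary} selectors, and the $\ell_1$ bound of Condition~\ref{ConditionA} --- together with the restriction to $\mathcal{A}_T$ (which caps $|G_{\hat{\mathcal{S}}_m j}|$ at $3/2$) are precisely what produce the clean factor $5/2$; everything else is routine bookkeeping with Lemma~\ref{lemma1}.
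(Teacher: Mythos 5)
Your proposal is correct and follows essentially the same route as the paper's proof: peel one step of the sample and semi-population recursions (which share the estimated selectors), insert the cross term, bound the Gram-matrix deviation and the coefficient deviation uniformly via Lemma \ref{lemma1}, use Cauchy--Schwarz with the unit normalization together with the norm-reducing (orthogonal projection) property and the $\ell_1$ bound of Condition \ref{ConditionA}, and iterate the resulting recursion $\Delta_{m+1}\le (5/2)\Delta_m + B\zeta_T$ on $\mathcal{A}_T$ to get $C(5/2)^m\zeta_T$. The only difference is cosmetic: the paper bounds the empirical Gram entry by $1+\zeta_T$ (giving factor $2+\zeta_T\le 5/2$) and defers the final iteration to B\"uhlmann (2006), whereas you cap it at $3/2$ and carry out the geometric-sum bookkeeping explicitly.
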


It's clear from lemma \ref{lemma1}, that $P(\mathcal{A}_T) \rightarrow 1$. Which gives us the following lemma:
\begin{lem}
Suppose the conditions needed for lemma 1 hold, then for $m=m_T \rightarrow \infty$ slow enough we have:
\begin{align*}
\sup_{u \in [0,1]}||\tilde{R}_T^{m}f(\tilde{\bm{x}}_{uT-h}(u))||=o_p(1)
\end{align*}
\label{lemma3}
\end{lem}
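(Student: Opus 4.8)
The plan is to transfer Temlyakov's bound (\ref{temlyakov}) from the population weak greedy algorithm to the semi-population sequence $\tilde{R}_T^{m}f$, whose only non-population ingredient is the data-driven selector $\hat{\mathcal{S}}_m$. Write $\tilde{R}^{m}f:=\tilde{R}_T^{m}f(\tilde{\bm{x}}_{uT-h}(u))$ and use $\langle\cdot,\cdot\rangle$ for the $L_2$ inner product, $\langle\cdot,\cdot\rangle_{(T)}$ for the kernel-weighted empirical one. I would show that on $\mathcal{A}_T=\{\zeta_T<1/2\}$ — which has probability tending to one by Lemma \ref{lemma1} — the selectors $\hat{\mathcal{S}}_m$ satisfy a finite-sample weak greedy criterion for the semi-population remainders, \emph{unless} those remainders are already negligible in $L_2$ norm, in which case I would bound them directly. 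All estimates are uniform in $u\in[0,1]$ since the constant $C$ of Lemma \ref{lemma2}, the rate $\zeta_T=O_p(S_T^{-\kappa})$ of Lemma \ref{lemma1}, and the bound $\sup_u\|\bm{\beta}(u)\|_1\le B$ of Condition \ref{ConditionA} are all uniform in $u$.

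First I would fix the schedule for $m_T$. Because Lemma \ref{lemma2} loses a factor $(5/2)^m$, I choose $m_T\to\infty$ slowly enough that $m_T(5/2)^{m_T}\zeta_T=o_p(1)$; since $\zeta_T=O_p(T^{-\psi\kappa})$, any $m_T=\lfloor c_0\log T\rfloor$ with $c_0\log(5/2)<\psi\kappa$ works and still diverges. Set $\delta_m=2C(5/2)^m\zeta_T$. On $\mathcal{A}_T$, applying Lemma \ref{lemma2} once at the population-optimal index $j^\ast=\argmax_j|\langle\tilde{R}^{m-1}f,\tilde{X}_{j,uT-h}(u)\rangle|$, once at $\hat{\mathcal{S}}_m$, and using that $\hat{\mathcal{S}}_m$ maximizes $|\langle\hat{R}^{m-1}f,X_{j,\cdot,T}\rangle_{(T)}|$, gives for every $m\le m_T$ and every $u$
\[
\bigl|\langle\tilde{R}^{m-1}f,\ \tilde{X}_{\hat{\mathcal{S}}_m,uT-h}(u)\rangle\bigr|\ \ge\ \sup_{j\le p_T}\bigl|\langle\tilde{R}^{m-1}f,\ \tilde{X}_{j,uT-h}(u)\rangle\bigr|-\delta_m .
\]

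Now I would split, possibly $u$ by $u$. In Case A, $\sup_j|\langle\tilde{R}^{m-1}f,\tilde{X}_j\rangle|\ge2\delta_m$ for all $m\le m_T$; then the display yields $|\langle\tilde{R}^{m-1}f,\tilde{X}_{\hat{\mathcal{S}}_m}\rangle|\ge\tfrac12\sup_j|\langle\tilde{R}^{m-1}f,\tilde{X}_j\rangle|$ at every step, i.e.\ (\ref{relaxed}) holds with $b=1/2$ uniformly in $m\le m_T$, so Temlyakov's inequality (\ref{temlyakov}) — applicable since $f(\tilde{\bm{x}}_{uT-h}(u))=\sum_j\beta_j(u)\tilde{X}_{j,uT-h}(u)$ with $\|\bm{\beta}(u)\|_1\le B$ — gives $\|\tilde{R}^{m_T}f\|^2\le B(1+m_T/4)^{-1/10}\to0$. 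In Case B there is a first $m_0\le m_T$ with $\sup_j|\langle\tilde{R}^{m_0-1}f,\tilde{X}_j\rangle|<2\delta_{m_0}\le 2\delta_{m_T}=:\delta^\ast$; writing $\tilde{R}^{m_0-1}f=f-\sum_{k<m_0}\langle\tilde{R}^{k-1}f,\tilde{X}_{\hat{\mathcal{S}}_k}\rangle\tilde{X}_{\hat{\mathcal{S}}_k}$ and using $\|\tilde{X}_j\|=1$, $\|\tilde{R}^{k-1}f\|\le\|f\|\le B$, and $\|\bm{\beta}\|_1\le B$,
\[
\|\tilde{R}^{m_0-1}f\|^2=\langle\tilde{R}^{m_0-1}f,f\rangle-\sum_{k<m_0}\langle\tilde{R}^{k-1}f,\tilde{X}_{\hat{\mathcal{S}}_k}\rangle\langle\tilde{R}^{m_0-1}f,\tilde{X}_{\hat{\mathcal{S}}_k}\rangle\ \le\ B\delta^\ast+m_T B\delta^\ast .
\]
Since $\|\tilde{R}^m f\|^2=\|\tilde{R}^{m-1}f\|^2-|\langle\tilde{R}^{m-1}f,\tilde{X}_{\hat{\mathcal{S}}_m}\rangle|^2$ is non-increasing in $m$ (the unit-norm normalization is essential here), $\|\tilde{R}^{m_T}f\|^2\le\|\tilde{R}^{m_0-1}f\|^2\le 4C(m_T+1)B(5/2)^{m_T}\zeta_T=o_p(1)$. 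Taking the supremum over $u$ of the larger of the two bounds proves the lemma, with this $m_T$ serving as the $M_T$ of Theorem 1.

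The main obstacle is Case B together with the calibration of $m_T$: the sample-to-semipopulation error compounds geometrically over boosting iterations (the $(5/2)^m$ in Lemma \ref{lemma2}), which forces $m_T$ to grow only logarithmically in $T$ — this is exactly the ``sufficiently slowly'' in the statement — and one must argue that a residual (nearly) uncorrelated with the \emph{entire} dictionary is small in norm, despite $\tilde{R}^m f$ failing to be orthogonal to the previously selected directions. The two ingredients that make this go through are the monotonicity identity for $\|\tilde{R}^m f\|^2$ (valid because $\|\tilde{X}_j\|=1$, so a single good step at $m_0$ controls all later steps) and the $\ell_1$ bound $\|\bm{\beta}(u)\|_1\le B$, which converts control of the maximal dictionary correlation into control of $\langle\tilde{R}^{\cdot}f,f\rangle$.
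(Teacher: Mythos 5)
Your proposal is correct and follows essentially the same route the paper intends: the paper omits the details of Lemma \ref{lemma3} entirely, deferring to \cite{Buhlmann2006}, and your two-case argument (empirical selectors satisfy the weak greedy criterion with $b=1/2$ so Temlyakov's bound (\ref{temlyakov}) applies, versus a first failure time $m_0$ at which the near-uncorrelated remainder is bounded directly via the $\ell_1$ bound on $\bm{\beta}(u)$ and the norm-reduction identity, with $m_T$ growing logarithmically to absorb the $(5/2)^{m_T}\zeta_T$ factor from Lemma \ref{lemma2}) is precisely that standard argument, carried out uniformly in $u$.
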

We now analyze the term: $\hat{R}_{T}^{m}f(\bm{x}_{uT-h,T})=f(\bm{x}_{uT-h,T})-\hat{F}_{lc}^{(M_T)}(u,\bm{x}_{uT-h,T})$. By the triangle inequality we obtain:
\begin{align}
    ||\hat{R}_{T}^{m}f(\bm{x}_{uT-h,T})|| \leq ||\tilde{R}_T^{m}f(\tilde{\bm{x}}_{uT-h}(u))||+ ||\hat{R}_T^{m}f(\bm{x}_{uT-h,T})-\tilde{R}_T^{m}f(\tilde{\bm{x}}_{uT-h}(u))||
    \label{triangle}
\end{align}
the first term can be handled with lemma \ref{lemma3}. For the second term, let $A_T(m)=||\hat{R}_T^{m}f(\bm{x}_{uT-h,T})-\tilde{R}_T^{m}f(\tilde{\bm{x}}_{uT-h}(u))||$. Using the definitions of the remainder functions, we then have a recursive relation:
\begin{align*}
    A_T(m)& \leq A_T(m-1)+|\inp[\big]{\hat{R}^{m-1}f}{X_{\widehat{\mathcal{S}}_m,\cdot,T}}_{(T)}-\inp[\big]{\tilde{R}^{m-1}f}{\tilde{X}_{\widehat{\mathcal{S}}_m,uT-h}}|||X_{\widehat{\mathcal{S}}_m,uT-h,T}||\\
    &+ ||\inp[\big]{\tilde{R}^{m-1}f}{\tilde{X}_{\widehat{\mathcal{S}}_m,uT-h}}||||X_{\widehat{\mathcal{S}}_m,uT-h,T}-\tilde{X}_{\widehat{\mathcal{S}}_m,uT-h}(u)||\\
    &\leq A_T(m-1)+C(5/2)^m\zeta_T+C/T \text{ on the set } \mathcal{A}_T
\end{align*}
where the above inequality holds uniformly over all $u \in [0,1]$. The last inequality follows from local stationarity and lemma \ref{lemma2}. By the above recursive equation we obtain: $\sup_{u \in [0,1]}||\hat{R}_T^{m}f(\bm{x}_{uT-h,T})-\tilde{R}_T^{m}f(\tilde{\bm{x}}_{uT-h}(u))||\leq 3^{m}\zeta_TC \text{ on the set } \mathcal{A}_T$. If we choose $m=m_T \rightarrow \infty$ slow enough (e.g $m_T=o(log(T))$), then along with lemma \ref{lemma3} and (\ref{triangle})
we obtain:
\begin{align*}
    \sup_{u \in [0,1]}||\hat{R}_{T}^{m}f(\bm{x}_{uT-h,T})||=\sup_{u \in [0,1]}||f(\bm{x}_{uT-h,T})-\hat{F}_{lc}^{(M_T)}(u,\bm{x}_{uT-h,T})|| =o_p(1)
\end{align*}
\end{proof}


\newpage

\setcounter{page}{1}
\setcounter{section}{0}
\setcounter{table}{0}
\renewcommand{\theequation}{A.\arabic{equation}}
\setcounter{equation}{0}
\setcounter{lem}{3}

\title{\large{\bf Online Appendix to Boosting High Dimensional Predictive Regressions with Time Varying Parameters}}

\maketitle

\section{Online Appendix A: Additional Proofs}

\begin{proof}[Proof of Lemma 1]
  $ $\newline
We start with (i), and we bound:
\begin{align}
    &P(\sup_{u \in [0,1]}|\sum_{t=1}^{T}K_b(t/T-u)X_{j,t,T}X_{k,t,T}-E(\tilde{X}_{j,T}(u)\tilde{X}_{k,T}(u))|> S_T^{-\kappa})\\
    &\leq P\bigg(\sup_{u \in [0,1]}|\sum_{t=1}^{T}K_b(t/T-u)[X_{j,t,T}X_{k,t,T}-E(X_{j,t,T}X_{k,t,T})]|\label{variance}\\
    &+|\sup_{u \in [0,1]}|\sum_{t=1}^{T}K_b(t/T-u)E(X_{j,t,T}X_{k,t,T})-E(\tilde{X}_{j,T}(u)\tilde{X}_{k,T}(u))|> S_T^{-\kappa}\bigg)
\end{align}
We deal with the second term, which can be thought of as the bias. The product process $X_{j,t,T}X_{k,t,T}$ is locally stationary with the stationary approximation at rescaled time $t/T$ being $\tilde{X}_{j,t}(t/T)\tilde{X}_{k,t}(t/T)$. One can see this by noting:
\begin{align*}
||\tilde{X}_{j,T}(u)\tilde{X}_{k,T}(u))-\tilde{X}_{j,T}(v)\tilde{X}_{k,T}(v))||_{r/2} & \leq  (||\tilde{X}_{j,T}(u)||_{r}||\tilde{X}_{k,T}(u)-\tilde{X}_{k,T}(v)||_{r}  \\
 &+||\tilde{X}_{k,T}(u)||_{r}||\tilde{X}_{j,T}(u)-\tilde{X}_{j,T}(v)||_{r})\nonumber \\
 & \leq C(|u-v|)
\end{align*} 
which holds uniformly in $u,v \in [0,1]$. We can employ the same techniques to that $||X_{j,t,T}X_{k,t,T}-\tilde{X}_{j,t}(t/T)\tilde{X}_{k,t}(t/T)|| \leq C(T^{-1})$.

Therefore by local stationarity we obtain:
\begin{align}
    |\sup_{u \in [1-b_T,1]}\sum_{t=1}^{T}K_b(t/T-u)E(X_{j,t,T}X_{k,t,T})-E(\tilde{X}_{j,T}(u)\tilde{X}_{k,T}(u))| \leq O(S_T/T)
    \label{bias}
\end{align}
Note that if $u$ is an interior point (i.e $u \in (b_T,1-b_T)$ where $b_T$ is the bandwidth), the bound improves to $O((S_T/T)^2)$. Let $\tilde{\sigma}_{jk}(u)=E(\tilde{X}_{j,t}(u)\tilde{X}_{k,t}(u))$, then by condition 5.3, we obtain:
\begin{align*}
    \tilde{\sigma}_{jk}(t/T)=\tilde{\sigma}_{jk}(u)+\tilde{\dot{\sigma}}_{jk}(u)(t/T-u)+O(b_T^2)
\end{align*}
where $\tilde{\dot{\sigma}}_{jk}(u)$ refers to the derivative of the covariance matrix w.r.t the rescaled time index. This gives us: 
\begin{align}
    |\sup_{u \in (b_T,1-b_T)}|\sum_{t=1}^{T}K_b(t/T-u)E(X_{j,t,T}X_{k,t,T})-E(\tilde{X}_{j,T}(u)\tilde{X}_{k,T}(u))| \leq O((S_T/T)^2)
\end{align}

Now we deal with the term (\ref{variance}), note that the functional dependence measure of the stationary approximation Using this we compute the functional dependence measure of $\tilde{X}_{j,T}(u)\tilde{X}_{k,T}(u))$ as: 
\begin{align}
 \sup_{u \in [0,1]}||\tilde{X}_{j,T}(u)\tilde{X}_{k,T}(u))-\tilde{X}_{j,T}^{*}(u)\tilde{X}_{k,T}^{*}(u))||_{r/2} & \leq  \sup_{u \in [0,1]}(||\tilde{X}_{j,T}(u)||_{r}||\tilde{X}_{k,T}(u)-\tilde{X}_{k,T}(u)^{*}||_{r}  \\
 &+\sup_{u \in [0,1]}||\tilde{X}_{k,T}(u)||_{r}||\tilde{X}_{j,T}(u)-\tilde{X}_{j,T}(u)^{*}||_{r})\nonumber \label{funxdepx}
\end{align} 

Therefore, $\tilde{X}_{j,T}(u)\tilde{X}_{k,T}(u)$ has a finite cumulative dependence measure by the weak dependence condition imposed on $\bm{\tilde{x}}_{t}(u)$.
Taking into account (\ref{bias}), and the above we can then apply theorem 2.7 (iii) in \cite{dahlhaus2017towards} to obtain:
\begin{align*}
    P(\sup_{u \in [0,1]}|\sum_{t=1}^{T}K_b(t/T-u)X_{j,t,T}X_{k,t,T}-&E(\tilde{X}_{j,T}(u)\tilde{X}_{k,T}(u))|
    > S_T^{-\kappa})\\
    &\leq O(TS_T^{-r/2+r\kappa/2}) +O(\exp(-S_T^{1-2\kappa}))
\end{align*}
Applying the union bound then completes the proof.
\\
\\
For (ii), we proceed similarly. Given that $E(X_{j,t-h,T}\epsilon_{t,T})=0, \text{ } \forall j$. We have that $E(\tilde{X}_{j,t-h}(u)\tilde{\epsilon}_{T}(u))=O(T^{-1})$. We bound:
\begin{align}
    &P\bigg(\sup_{u \in [0,1]}|\sum_{t=1}^{T}K_b(t/T-u)X_{j,t-h,T}\epsilon_{t,T}-E(\tilde{X}_{j,t-h}(u)\tilde{\epsilon}_{T}(u))|> S_T^{-\kappa}\bigg)\\
    &\leq P\bigg(\sup_{u \in [0,1]}|\sum_{t=1}^{T}K_b(t/T-u)X_{j,t-h,T}\epsilon_{t,T}|
    > O(S_T^{-\kappa})\bigg)
\end{align}
 Now we apply the same procedure as previously. We have that: 
\begin{align}
 \sup_{u \in [0,1]}||\tilde{X}_{j,t}(u)\tilde{\epsilon}_{t}(u))-\tilde{X}_{j,t}^{*}(u)\tilde{\epsilon}_{t}^{*}(u))||_{\tau} & \leq  \sup_{u \in [0,1]}(||\tilde{X}_{j,t}(u)||_{r}||\tilde{\epsilon}_{t}(u)-\tilde{\epsilon}_{t}(u)^{*}||_{q}  \\
 &+\sup_{u \in [0,1]}||\tilde{\epsilon}_{t}(u)||_{q}||\tilde{X}_{j,t}(u)-\tilde{X}_{j,t}(u)^{*}||_{r})\nonumber \label{funxdepe}
\end{align} 
This has a finite cumulative functional dependence measure by the weak dependence conditions imposed. Once again, by applying theorem 2.7 (iii) in \cite{dahlhaus2017towards} we obtain:
\begin{align}
    &P(\sup_{u \in [0,1]}|\sum_{t=1}^{T}K_b(t/T-u)X_{j,t-h,T}\epsilon_{t,T}-E(\tilde{X}_{j,t-h}(u)\tilde{\epsilon}_{T}(u))|> S_T^{-\kappa}) \leq O(TS_T^{-\kappa+\tau\kappa})+O(\exp(S_T^{1-2\kappa}))
\end{align}
Applying the union bound gives the final result.
\\
\\
For (iii), we note that $f(t/T,\bm{x}_{t,T})\equiv f(\bm{x}_{t,T})$ is a locally stationary process with stationary approximation $f(t/T,\bm{\tilde{x}}_{t}(t/T))$. And the stationary approximation has cumulative functional dependence measure $\sup_{u \in [0,1]}|\bm{\beta}(u)|\Phi_{0,r}^{\bm{x}}$. We can then compute the cumulative dependence measure of the product process $f(\bm{\tilde{x}}_{t}(t/T))\tilde{X}_{j,t}(t/T)$ similarly as for part (i). We then obtain
\begin{align*}
    P(\sup_{u \in [0,1]}|\sum_{t=1}^{T}K_b(t/T-u)X_{j,t,T}f(\bm{x}_{t,T})&-E(\tilde{X}_{j,T}(u)f(\tilde{\bm{x}}_{T}(u)))|> S_T^{-\kappa})\\
    &\leq O(TS_T^{-r/2+r\kappa/2}) +O(\exp(-S_T^{1-2\kappa}))
\end{align*}
We can handle the bias term, in the same way we did for part(i), given that the product process $f(\bm{{x}}_{t,T})X_{j,t,T}$ is locally stationary with the stationary approximation being twice differentiable w.r.t to the rescaled time index. Taking the union bound then gives us the result.
\\
\\
The result for (iv) follows immediately from parts (ii) and (iii).
\end{proof}

\begin{proof}[Proof of Lemma 2]
  $ $\newline
Recall that:
\begin{align*}\tilde{R}_T^{m}f(\tilde{\bm{x}}_{T-h}(u))=\tilde{R}^{m-1}f(\tilde{\bm{x}}_{T-h}(u))-\inp[\big]{\tilde{R}^{m-1}f(\tilde{\bm{x}}_{T-h}(u))}{\tilde{X}_{\widehat{\mathcal{S}}_m,T-h}(u)}\tilde{X}_{\widehat{\mathcal{S}}_m,T-h}(u),\\
\hat{R}_T^{m}f(\bm{x}_{T-h,T})=\hat{R}^{m-1}f(\bm{x}_{T-h,T})-\inp[\big]{\hat{R}^{m-1}f}{X_{\widehat{\mathcal{S}}_m,\cdot,T}}_{(T)}X_{\widehat{\mathcal{S}}_m,T-h,T}
\end{align*}
We denote: $A_T(m,j)=\inp[\big]{\hat{R}^{m-1}f}{X_{j,\cdot,T}}_{(T)}-\inp[\big]{\tilde{R}^{m-1}f}{\tilde{X}_{j,uT-h}}$. We proceed with a recursive analysis. Note that for $m=0$, the result follows from lemma 1. By using the above definitions we get the following recursive relation:
\begin{align*}
    A_T(m,j)& \leq A_T(m-1,j)- (\inp[\big]{\tilde{R}^{m-1}f}{\tilde{X}_{\widehat{\mathcal{S}}_m,T-h}})\bigg(\inp{X_{\widehat{\mathcal{S}}_m,\cdot,T}}{X_{j,\cdot,T}}_{(T)}-\inp{\tilde{X}_{\widehat{\mathcal{S}}_m,uT-h}(u)}{\tilde{X}_{j,uT-h}(u)}\bigg)\\
    &-\bigg(\inp[\big]{\hat{R}^{m-1}f}{X_{\widehat{\mathcal{S}}_m,\cdot,T}}_{(T)}-\inp[\big]{\tilde{R}^{m-1}f}{\tilde{X}_{\widehat{\mathcal{S}}_m,uT-h}}\bigg)\inp{X_{\widehat{\mathcal{S}}_m,\cdot,T}}{X_{j,\cdot,T}}_{(T)}\\
    &=A_T(m-1,j)-I_{T,m}(j)-II_{T,m}(j)
\end{align*}

We then have $\sup_{u \in [0,1]}\sup_j A_T(m,j) \leq \sup_{u \in [0,1]}, \sup_j A_T(m-1,j) + \sup_{u \in [0,1]}\sup_{j}|I_{T,m}(j)| + \sup_{u \in [0,1]} \sup_j |II_{T,m}(j)| $

Now we have that $\sup_{u \in [0,1]}\sup_{j}|I_{T,m}(j)| \leq \sup_{u \in [0,1]}||f(\tilde{\bm{x}}_{T-h}(u))||\zeta_T$, by lemma 1, and the norm reducing property of the remainder functions. Similarly, $\sup_{u \in [0,1]}\sup_{j}|II_{T,m}(j)| \leq (1+\zeta_T) \sup_{u \in [0,1]}\sup_{j}A_T(m-1,j)$. The rest of the proof follows from \cite{Buhlmann2006}.

\end{proof}

\begin{proof}[Proof of Lemma 3]
  $ $\newline

The proof closely follows the one laid out in \cite{Buhlmann2006}, and the introduction of the rescaled time index $u$ does not present any additional difficulties. Therefore we omit the details here.
\end{proof}

Before proceeding to the proof of theorem 1(ii), we first prove corollary 2.

\begin{proof}[Proof of Corollary 2]
  $ $\newline
  
We only need to change lemma 1, from the proof of theorem 1(a). The rest of the proof is essentially the same, if we replace the locally stationary variables with stationary ones. We borrow some arguments from the proof of theorem 2 in \cite{Yousuf2017}. The main technical tool we use is theorem 3 in \cite{WuandWu2016}. For that, we first define the \emph{predictive dependence measure} introduced by \cite{Wu2005}. The predictive dependence measure for stationary univariate and multivariate processes is defined respectively as:
\begin{align}\theta_{q}(\epsilon_i)=||\textrm{E}\left(\epsilon_i|\mathcal{F}_{0}\right)-\textrm{E}\left(\epsilon_i|\mathcal{F}_{-1}\right)||_q,\nonumber\\
\theta_{q}(X_{j,i})=||\textrm{E}\left(X_{j,i}|\mathcal{H}_{0}\right)-\textrm{E}\left(X_{j,i}|\mathcal{H}_{-1}\right)||_q .\end{align}
With the cumulative predictive dependence measures defined as:
\begin{align*}\Theta_{0,q}(\bm{x})=\max_{j \leq p_n}\sum_{i=0}^{\infty}\theta_{q}(X_{ij}), \textrm{ and } \Theta_{0,q}(\bm{\epsilon})=\sum_{i=0}^{\infty}\theta_{q}(\epsilon_i).\end{align*} 

By theorem 1 in \cite{Wu2005}, we have $\Theta_{0,q}(\bm{x}) \leq \Phi_{0,q}^{\bm{x}}$, and similarly $\Theta_{0,q}(\bm{\epsilon}) \leq \Delta_{0,q}^{\epsilon}$. Where $\Phi_{0,q}^{\bm{x}},\Delta_{0,q}^{\epsilon}$ represent the cumulative functional dependence measures. From Section 2 in \cite{WuandWu2016}: $||X_{j,i}||_{q} \leq \Phi_{0,q}^{\bm{x}}$, and $||\epsilon_{i}||_{q} \leq \Delta_{0,q}^{\bm{\epsilon}}$. We only discuss parts (i) and (ii) from lemma 1, the others can be done similarly. We now define $\bm{G}_{jk}=(G_{1,jk},\ldots,G_{T,jk})$ where $G_{i,jk}=X_{j,i}X_{k,i}$, and let $\bm{R}_{j}=(R_{1,j},\ldots,R_{T,j})$ where $R_i= X_{j,i}\epsilon_i$. We need to bound the sums: $\sum_{i=1}^T (G_{i,jk}-E(G_{i,jk}))/T$ and $\sum_{i=1}^T R_{i,j}/T$.

As previously, we have (by Holder's inequality)
\begin{equation}
\sum_{t=0}^{\infty} ||X_{j,t}X_{k,t}-X_{j,t}^{*}X_{k,t}^{*}||_{q}\leq \sum_{t=0}^{\infty} (||X_{j,t}||_{2q}||X_{k,t}-X_{k,t}^{*}||_{2q} + ||X_{k,t}||_{2q}||X_{j,t}-X_{j,t}^{*}||_{2q}) \leq 2\Phi_{0,2q}^2(\bm{x})  \end{equation}
Using these, along with Condition 4.5, we obtain:
\begin{equation} \sup_{q \geq 4} q^{-2\tilde{\alpha}_x}\Theta_{q}(\bm{G}_{jk}) \leq \sup_{q \geq 4} 2q^{-2\tilde{\alpha}_x}\Phi_{0,2q}^{2}(\bm{x}) < \infty \end{equation}
Combining the above and using Theorem 3 in \cite{WuandWu2016}, we obtain:
\begin{equation} P\left(\bigg|\sum_{t=1}^T (G_{t,jk}-E(G_{t,jk}))\bigg| > \frac{c T^{1-\kappa}}{2}\right) \leq  C\exp\left(-\frac{T^{1/2-\kappa}}{\upsilon_x^2}\right)^{\tilde{\alpha}} \end{equation}
Similarly, using the same procedure we obtain:
\begin{equation}P\left(\bigg|\sum_{t=1}^T R_{t,j}\bigg| > \frac{c T^{1-\kappa}}{2}\right) \leq C\exp\left(-\frac{T^{1/2-\kappa}}{\upsilon_x\upsilon_{\epsilon}}\right)^{\tilde{\alpha}'} \end{equation}

We can use the same procedure to get the corresponding bounds for the terms in lemma 1 (iii) and (iv). Now using the above bounds and following the steps in the proof of Theorem 1 we obtain the result.

\end{proof}

\begin{proof}[Proof of Theorem 1(b)]
  $ $\newline
The proof of the LL-Boost is more complicated than the LC-Boost case due to the additional linear term. Fortunately, the population version stays the same between both versions. This allows us to use the same framework as previously, where we relied on Temlyakov's result on weak greedy algorithms. In order, to simplify the presentation we focus on showing the result for $u=T/T$, and using the uniform kernel. One can use similar steps as used in the proof of theorem 1(a), in order to obtain the result for $\sup_{u \in [0,1]}$ and general kernels. We do need to make a number of changes from the proof of theorem 1(a), and we start by introducing the following notation: let $\bm{Z}_{j,t,T}=(X_{j,t,T},X_{j,t,T}(t/T-u))$, and let:
\begin{align*}
    &\hat{\bm{h}}(Y_{\cdot,T},X_{j,\cdot,T})=(\hat{h}_1(Y_{\cdot,T},X_{j,\cdot,T}),\hat{h}_2(Y_{\cdot,T},X_{j,\cdot,T}))\\
    &=\argmin_{\bm{h}}S_{T}^{-1}\sum_{t=T-S_T}^{T}(Y_{t,T}-h_1X_{j,t-h,T}-h_2X_{j,t-h,T}(t/T-u))^2\\
    &\hat{\bm{h}}(X_{k,\cdot,T},X_{j,\cdot,T})=(\hat{h}_1(X_{k,\cdot,T},X_{j,\cdot,T}),\hat{h}_2(X_{k,\cdot,T},X_{j,\cdot,T}))\\
    &=\argmin_{\bm{h}}S_{T}^{-1}\sum_{t=T-S_T}^{T}(X_{k,t-h,T}-h_1X_{j,t-h,T}-h_2X_{j,t-h,T}(t/T-u))^2
\end{align*}
represent the estimated local linear regression coefficients. The arguments to the function $h(\cdot,\cdot)$ refer to the dependent and independent variables respectively. These functions are linear functions of the first argument. We also let $\bm{h}(\tilde{Y},\tilde{X}_{j}),\bm{h}(\tilde{X}_{k},\tilde{X}_{j})$ represent the population version of these coefficients when we replace the locally stationary series in the above equations with their stationary approximations.\footnote{We note that $\bm{h}$ is also a function of the rescaled time point $u$, but we ignore this for now.} We then define our selectors as:
\begin{align*}
    \hat{\mathcal{S}}_1=\argmax_{j}||\bm{\hat{h}}(Y_{\cdot,T},X_{j,\cdot,T}){\bm{Z}_{j,\cdot,T}}||_{(T)}, \ldots, \hat{\mathcal{S}}_m=\argmax_{j}||\bm{\hat{h}}(\hat{R}_{T}^{m}f,X_{j,\cdot,T}){\bm{Z}_{j,\cdot,T}}||_{(T)}
\end{align*}
Where the sample remainder functions are defined as:
\begin{align*}
    &\hat{R}_T^{0}f(\bm{x}_{T-h,T})=f(\bm{x}_{T-h,T}),\\
    &\hat{R}_T^{m}f(\bm{x}_{T-h,T})=\hat{R}_T^{m-1}f(\bm{x}_{T-h,T})-\bm{\hat{h}}(\hat{R}_{T}^{m}f,X_{\hat{\mathcal{S}}_m,\cdot,T}){\bm{Z}_{\hat{\mathcal{S}}_m,T-h,T}}, m=1,2,\ldots
\end{align*}
Therefore, $\hat{R}_{T}^{m}f(\bm{x}_{T-h})=f(\bm{x}_{T-h})-\hat{F}_{ll}^{(M_T)}(u,\bm{x}_{T-h,T})$, is the difference between $f(\bm{x}_{T-h})$ and its LL-Boost estimate. Now the semipopulation version has the same form as in theorem 1(a) except it uses the selected base learners $\hat{\mathcal{S}}_m$ as defined above. The strategy of the proof is similar to the local constant case: first we establish that the selectors $\hat{\mathcal{S}}_m$ satisfy a finite sample analogue of equation (\ref{relaxed}) in our main text, which allows us to apply Temlyakov's result (\cite{temlyakov2000weak}) to the semipopulation version. A key step to doing this is to bound $\sup_{j \leq p_T} ||\bm{Z}_{j,\cdot,T}\bm{\hat{h}}(\hat{R}_T^{m}f,X_{j,\cdot,T})-\inp[\big]{\tilde{R}^{m-1}f(\tilde{\bm{x}}_{T-h}(u))}{\tilde{X}_{j,T-h}(u)}\tilde{X}_{j,\cdot}(u)||_{(T)}$, and we establish this is in lemma \ref{lemma4thm2}. Recall that:\\ $||\tilde{X}_{j,\cdot}(u)||^2_{(T)}=\sum_{t=1}^{T}K_b(t/T-u)\tilde{X}_{j,t}(u)^2$, and we use the uniform kernel.

In order to prove lemma \ref{lemma4thm2} and bound $||\hat{R}_T^{m}f(\bm{x}_{T-h,T})-\tilde{R}_T^{m}f(\tilde{\bm{x}}_{T-h}(u))||$ we need the following three lemmas. Recall that $Y_{t,T}=\alpha^{(m)}_j(t/T)X_{j,t-h,T}+\epsilon_{j,t,T}$, where\\ $\alpha_j(t/T)=E(\tilde{X}_{j,t-h}(t/T)\tilde{Y}_{T}(t/T))/E(\tilde{X}^2_{j,t-h}(t/T))$. We also define the following:
\begin{align}
X_{j,t,T}=\alpha_{jk}(t/T)X_{k,t-h,T}+\epsilon_{jk,t,T}, \text{ where } \alpha^{(m)}_j(t/T)=E(\tilde{X}_{j,t-h}(t/T)\tilde{X}_{k,t-h,T}(t/T))/E(\tilde{X}^2_{k,t-h}(t/T))
\end{align}

\begin{lem}
Under conditions 5.1, 5.2, 5.3, 5.4 and for $\kappa$ as defined in Theorem 1, the following hold:
\begin{enumerate}[noitemsep]
\item $\sup_{j,k \leq p_T}|S_T^{-1}\sum_{t=T-S_T}^{T}\bigg[X_{j,t,T}X_{k,t,T}(t/T-u)^{i}-E(X_{j,t,T}X_{k,t,T}(t/T-u)^{i})\bigg]| =O_p(S_T^{-\kappa+i}/T^{i})$ for $i=1,2$
\\
\item $\sup_{j \leq p_T}|S_T^{-1}\sum_{t=T-S_T}^{T}\bigg[X_{j,t,T}^{i_1}(t/T-u)^{i_2}-E(X_{j,t,T}^{i_1}(t/T-u)^{i_2})\bigg]| = \zeta_{T,i_1,i_2}=O_p(S_T^{-\kappa+i_2}/T^{i_2})$ for $i_1=1,2$ and $i_2=1,2,3$.
\\
\item $\sup_{j,k \leq p_T}|S_T^{-1}\sum_{t=T-S_T}^{T}X_{j,t-h,T}\epsilon_{jk,t,T}(t/T-u)^{i}-E(S_T^{-1}\sum_{t=T-S_T}^{T}X_{j,t-h,T}\epsilon_{jk,t,T}(t/T-u)^{i})|=O_p(S_T^{-\kappa+i}/T^{i})$ for $i=0,1$
\\
\item $\sup_{j \leq p_T}|S_T^{-1}\sum_{t=T-S_T}^{T}X_{j,t-h,T}\epsilon_{j,t,T}(t/T-u)^{i}-E(S_T^{-1}\sum_{t=T-S_T}^{T}X_{j,t-h,T}\epsilon_{j,t,T}(t/T-u)^{i})| =O_p(S_T^{-\kappa+i}/T^{i})$ for $i=0,1$
\\
\end{enumerate}
\label{lemma1thm2}
\end{lem}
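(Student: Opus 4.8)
The plan is to treat all four parts as instances of a single weighted-sum concentration estimate that is structurally identical to Lemma \ref{lemma1}, with the extra deterministic factor $(t/T-u)^i$ absorbed into the weights. Since we work at the boundary point $u=T/T=1$ with the uniform kernel, the summation index runs over $t\in\{T-S_T,\dots,T\}$, so $|t/T-u|\le S_T/T$ and hence $|(t/T-u)^i|\le (S_T/T)^i$ uniformly. For part (i), write the centered sum as $S_T^{-1}\sum_t (t/T-u)^i\big[X_{j,t,T}X_{k,t,T}-E(X_{j,t,T}X_{k,t,T})\big]$, which is legitimate because $(t/T-u)^i$ is non-random. Putting $c_t=(t/T-u)^i(T/S_T)^i$ so that $|c_t|\le 1$, this equals $(S_T/T)^i\cdot S_T^{-1}\sum_t c_t\big[X_{j,t,T}X_{k,t,T}-E(\cdot)\big]$, i.e. $(S_T/T)^i$ times a weighted average with weights $w_t=c_t/S_T$ satisfying $\sum_t|w_t|\le 1$ and $\max_t|w_t|\le S_T^{-1}$, which is precisely the configuration treated in Lemma \ref{lemma1}. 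The product process $X_{j,t,T}X_{k,t,T}$ is locally stationary with a finite cumulative functional dependence measure of order $r/2>2$, computed by Minkowski's and Hölder's inequalities on the product exactly as in the proof of Lemma \ref{lemma1}(i); applying the Nagaev-type bound (Theorem 2.7(iii) of \cite{dahlhaus2017towards}) and the union bound over the $p_T^2$ pairs $(j,k)$ then gives that the weighted average is $O_p(S_T^{-\kappa})$, and multiplying back by $(S_T/T)^i$ yields the claimed $O_p(S_T^{-\kappa+i}/T^i)$.

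Part (ii) is the same argument with the single-index power $X_{j,t,T}^{i_1}$ in place of the product: for $i_1=1$ two finite moments suffice, and for $i_1=2$ the $r>4$ moments of the covariates give $r/2>2$ moments for $X_{j,t,T}^2$, so the dependence-measure and concentration steps go through verbatim, now with a union bound over only $p_T$ indices; the rate depends on the polynomial power $i_2$ solely through the scalar factor $(S_T/T)^{i_2}$, giving $O_p(S_T^{-\kappa+i_2}/T^{i_2})$.

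For parts (iii) and (iv) the only new ingredient is to verify that the auxiliary processes inherit local stationarity and the requisite moment and dependence conditions. Here $\epsilon_{jk,t,T}=X_{j,t,T}-\alpha_{jk}(t/T)X_{k,t-h,T}$ and $\epsilon_{j,t,T}=Y_{t,T}-\alpha_j(t/T)X_{j,t-h,T}$ are finite linear combinations of locally stationary processes whose coefficient functions are bounded: boundedness of $\alpha_{jk}$ and $\alpha_j$ follows from the normalization $\|\tilde{X}_j(u)\|^2=1$, the Cauchy--Schwarz inequality applied to the numerators, and the uniform moment bound of Definition \ref{defL}; under Condition \ref{ConditionD} these functions lie moreover in $\mathcal{C}^2[0,1]$. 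Consequently $X_{j,t-h,T}\epsilon_{jk,t,T}$ and $X_{j,t-h,T}\epsilon_{j,t,T}$ are again locally stationary products, whose stationary approximations and cumulative functional dependence measures are obtained exactly as in the proof of Lemma \ref{lemma1}(ii) (with the exponent $\tau=qr/(q+r)>2$ for the $\epsilon_j$ case), and the weighted-sum concentration estimate from the first paragraph applies with the relevant product process. Since both statements center by the true expectation, no separate treatment of a bias term is needed; the near-orthogonality $E(X_{j,t-h,T}\epsilon_{j,t,T})=O(T^{-1})$ is used only later, in the bound on the local-linear base learners.

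I do not expect a new conceptual obstacle here; the work is essentially bookkeeping on top of Lemma \ref{lemma1}. The two places that require care are: (a) confirming that the residual processes $\epsilon_{jk,t,T}$ and $\epsilon_{j,t,T}$ are bona fide locally stationary with enough finite moments and summable functional dependence measures, which reduces to boundedness of the local-projection coefficient functions and the product-process computations already used for Lemma \ref{lemma1}; and (b) carrying the deterministic factor $(S_T/T)^i$ through the concentration inequality so that the exponent matches $S_T^{-\kappa+i}/T^i$ uniformly in $(j,k)$ after the union bound, which is exactly where the dimension restrictions implicit in Condition \ref{ConditionB} and the constraint $\kappa<\psi^{-1}-1$ enter, as in the proof of Theorem 1. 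The only real risk is losing a power of $S_T/T$, or an extra $\log p_T$, by mishandling the reweighting and the union bound simultaneously.
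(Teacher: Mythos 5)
Your proposal is correct and follows essentially the same route as the paper: both reduce each part to the Lemma \ref{lemma1} concentration machinery, handle the deterministic factor $(t/T-u)^i$ as an extra bounded weight (the paper absorbs it into a modified bounded-variation kernel $K_b^{*}(t/T-u)=K_b(t/T-u)(t/T-u)^i$ and applies Theorem 2.7(iii) of Dahlhaus directly, which is the same bookkeeping as your factoring out $(S_T/T)^i$ and normalizing the weights), and verify for parts (iii)--(iv) that $\epsilon_{jk,t,T}$ and $\epsilon_{j,t,T}$ are locally stationary with sufficient moments ($r$ and $\min(r,q)$ respectively in the paper) and summable functional dependence so the product-process argument goes through. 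No gap.
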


\begin{lem}
Under conditions  5.1, 5.2, 5.3, 5.4, and for $\kappa$ as defined in Theorem 1, the following hold:
\begin{enumerate}[noitemsep]
\item $\sup_{j,k \leq p_T}|\hat{\bm{h}}(X_{j,\cdot,T},X_{k,\cdot,T})-\bm{h}(\tilde{X}_j,\tilde{X}_k)| = \zeta_{T,1}=O_p(S_T^{-\kappa})$
\\
\item $\sup_{j \leq p_T}|\hat{\bm{h}}(\epsilon_{\cdot,T},X_{j,\cdot,T})-\bm{h}(\tilde{\epsilon},\tilde{X}_j)| = \zeta_{T,2}=O_p(S_T^{-\kappa})$
\\
\item $\sup_{j \leq p_T}|\hat{\bm{h}}(f,X_{j,\cdot,T})-\bm{h}(\tilde{f},\tilde{X}_j)| = \zeta_{T,3}=O_p(S_T^{-\kappa})$
\\
\item $\sup_{j \leq p_T}|\hat{\bm{h}}(Y_{\cdot,T},X_{j,\cdot,T})-\bm{h}(\tilde{Y},\tilde{X}_j)| = \zeta_{T,4}=O_p(S_T^{-\kappa})$
\end{enumerate}
\label{lemma2them2}
\end{lem}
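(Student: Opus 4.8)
The plan is to write every estimated local-linear coefficient vector as a ratio of a kernel-weighted design matrix and a kernel-weighted cross-moment vector, and then to reduce the lemma to a standard matrix-perturbation argument fed by Lemma \ref{lemma1thm2}. Concretely, with $\bm{Z}_{j,t-h,T}=(X_{j,t-h,T},X_{j,t-h,T}(t/T-u))$, write $\hat{\bm{h}}(\cdot,X_{j,\cdot,T})=\hat{\bm{M}}_j^{-1}\hat{\bm{v}}_j$, where $\hat{\bm{M}}_j=S_T^{-1}\sum_{t=T-S_T}^{T}\bm{Z}_{j,t-h,T}'\bm{Z}_{j,t-h,T}$ is the $2\times2$ matrix with entries $S_T^{-1}\sum X_{j,t-h,T}^2(t/T-u)^i$ for $i=0,1,2$, and $\hat{\bm{v}}_j=S_T^{-1}\sum_{t=T-S_T}^{T}\bm{Z}_{j,t-h,T}'(\,\cdot\,)_t$ with $(\cdot)_t$ the dependent variable ($X_{k,t-h,T}$, $\epsilon_{t,T}$, $f(\bm{x}_{t-h,T})$ or $Y_{t,T}$ in parts (i)--(iv) respectively). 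Let $\bm{M}_j$, $\bm{v}_j$ be the population analogues built from $E(\tilde{X}_{j,t-h}(u)^2(t/T-u)^i)$ and $E(\tilde{X}_{j,t-h}(u)(\widetilde{\,\cdot\,})_t(t/T-u)^i)$, so that $\bm{h}(\cdot,\tilde{X}_j)=\bm{M}_j^{-1}\bm{v}_j$. The identity
\[
\hat{\bm{h}}(\cdot,X_{j,\cdot,T})-\bm{h}(\cdot,\tilde{X}_j)=\hat{\bm{M}}_j^{-1}(\hat{\bm{v}}_j-\bm{v}_j)-\hat{\bm{M}}_j^{-1}(\hat{\bm{M}}_j-\bm{M}_j)\bm{M}_j^{-1}\bm{v}_j
\]
reduces the claim to: (a) uniform (over $j$, and over $j,k$ in part (i)) bounds $\|\hat{\bm{M}}_j-\bm{M}_j\|,\|\hat{\bm{v}}_j-\bm{v}_j\|=O_p(S_T^{-\kappa})$; and (b) uniform bounds $\|\bm{M}_j^{-1}\|,\|\hat{\bm{M}}_j^{-1}\|,\|\bm{v}_j\|=O(1)$.

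For (a), the entrywise deviations are exactly the quantities bounded in Lemma \ref{lemma1thm2}: its parts control $\sup_{j,k}|S_T^{-1}\sum[\,\cdot\,(t/T-u)^i-E(\,\cdot\,(t/T-u)^i)]|=O_p(S_T^{-\kappa+i}/T^i)=O_p\!\big(S_T^{-\kappa}(S_T/T)^i\big)$. Since $S_T/T=b=O(T^{\psi-1})$ and $S_T^{-\kappa}=(bT)^{-\kappa}$, the hypothesis $\kappa<\psi^{-1}-1$ (equivalently $S_T/T\lesssim S_T^{-\kappa}$ for $T$ large) forces each such term to be $O_p(S_T^{-\kappa})$. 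The remaining piece is the local-stationarity bias: using $\|X_{j,t-h,T}-\tilde{X}_{j,t-h}(u)\|=O(|t/T-u|+T^{-1})$, the Lipschitz bound on $\tilde{X}_j(\cdot)$, and the $\mathcal{C}^2$ smoothness of $\Sigma_{\tilde x}(\cdot)$ (Condition 5.3), each bias entry is $O(S_T/T)$ (in fact $O((S_T/T)^2)$ for interior $u$), again dominated by $S_T^{-\kappa}$. Hence $\|\hat{\bm{M}}_j-\bm{M}_j\|,\|\hat{\bm{v}}_j-\bm{v}_j\|=O_p(S_T^{-\kappa})$ uniformly.

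For (b), recall the normalization $\|\tilde{X}_j(u)\|=1$, $\|X_{j,\cdot,T}\|_{(T)}=1$ used throughout this section: the $(1,1)$ entry of $\bm{M}_j$ is (close to) $1$, the off-diagonal entry is $O(S_T/T)$, and the $(2,2)$ entry is bounded below by $c\,\mu_2(S_T/T)^2$ with $\mu_2=\int u^2K(u)\,du>0$ (a standard property of the symmetric compactly supported kernel of Condition 5.4); after the rescaling $\bm{Z}_j\mapsto\bm{Z}_j\,\mathrm{diag}(1,T/S_T)$, $\bm{M}_j=\mathrm{diag}(1,\mu_2)+o(1)$ uniformly in $j$ and $u$, so its smallest eigenvalue is bounded away from $0$; by (a) the same holds for $\hat{\bm{M}}_j$ on the set $\{\zeta_T<1/2\}$, which has probability $\to1$ by Lemma \ref{lemma1thm2}. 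Finally $\|\bm{v}_j\|=O(1)$ uniformly: in part (i), $|E(\tilde{X}_{j,t-h}(u)\tilde{X}_{k,t-h}(u))|\le1$ by Cauchy--Schwarz; in part (ii), $E(\tilde{X}_{j,t-h}(u)\tilde{\epsilon}_T(u))=O(T^{-1})$ from the orthogonality $E(X_{j,t-h,T}\epsilon_{t,T})=0$; in part (iii), $|E(\tilde{X}_{j,t-h}(u)\tilde{f}_T(u))|\le\sup_u|\bm{\beta}(u)|_1\cdot\max_k|E(\tilde{X}_{j,t-h}(u)\tilde{X}_{k,t-h}(u))|=O(1)$ by Condition 5.1; and part (iv) is immediate from (ii), (iii) because $Y=f+\epsilon$ and $\hat{\bm{h}}(\cdot,X_{j,\cdot,T})$, $\bm{h}(\cdot,\tilde{X}_j)$ are linear in their first argument. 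Substituting (a) and (b) into the perturbation identity yields $|\hat{\bm{h}}(\cdot,X_{j,\cdot,T})-\bm{h}(\cdot,\tilde{X}_j)|=O_p(S_T^{-\kappa})$ uniformly, which is the assertion.

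\textbf{Main obstacle.} The one genuinely delicate point is step (b): obtaining the lower bound on the smallest eigenvalue of the (suitably rescaled) local-linear design matrices $\bm{M}_j$ \emph{uniformly over $j\le p_T$}, and transferring it to $\hat{\bm{M}}_j$ on a high-probability set. This is where the normalization, the kernel moment $\mu_2>0$, the smoothness of $\Sigma_{\tilde x}(u)$, and the rate interplay $S_T/T\lesssim S_T^{-\kappa}$ forced by $\kappa<\psi^{-1}-1$ must be combined; once it is in place, the rest is routine matrix perturbation on top of Lemma \ref{lemma1thm2}.
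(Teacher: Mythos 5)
Your overall strategy—reduce everything to the uniform entrywise bounds of Lemma \ref{lemma1thm2}, take a union bound over $j,k\le p_T$, and then control the ill-conditioned $2\times 2$ local-linear design—is the same skeleton as the paper's proof; the paper just does the linear algebra by explicit $2\times2$ inversion and determinant bookkeeping ($\hat a_0,a_0\asymp(S_T/T)^2$, with a truncation level $D_T$), where you use a rescaling plus a matrix-perturbation identity. But two steps in your version do not go through as written.

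First, the population target is misidentified. You set $\bm{h}(\cdot,\tilde X_j)=\bm{M}_j^{-1}\bm{v}_j$ with $\bm{M}_j,\bm{v}_j$ built from moments of the stationary approximation at the \emph{fixed} point $u$. Since those moments do not depend on $t$, the entries of $\bm{v}_j$ are $E(\tilde X_j\tilde X_k)\,m_i$ and those of $\bm{M}_j$ are $E(\tilde X_j^2)\,m_i$ with $m_i=S_T^{-1}\sum_t(t/T-u)^i$, so $\bm{M}_j^{-1}\bm{v}_j=\big(E(\tilde X_j\tilde X_k)/E(\tilde X_j^2),\,0\big)'$: the slope coordinate of your target is identically zero. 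The paper instead substitutes the model $X_{k,t-h,T}=h_1(u)X_{j,t-h,T}+\dot h_1(u)X_{j,t-h,T}(t/T-u)+\ddot h_1(c)X_{j,t-h,T}(t/T-u)^2+\epsilon_{jk,t,T}$ into the weighted least-squares formula, so its $\bm h$ has second component $\dot h_1(u)$ and the error decomposes into a Taylor-remainder piece $\hat{\bm A}^{-1}\hat{\bm B}$ plus a noise piece $\hat{\bm A}^{-1}\hat{\bm C}$. With your target the vector bound cannot hold: the sample slope $\hat h_2$ is consistent for the derivative of the time-varying coefficient, which is generally nonzero, so $|\hat h_2-0|$ does not vanish. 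You would need either to prove your pseudo-true value coincides with the paper's $\bm h$ (it does not, in the slope coordinate) or to carry out the Taylor-expansion/model-substitution step, which is the backbone of the paper's argument and is missing from yours.

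Second, the rate bookkeeping through the rescaling is incomplete. You are right that only the rescaled design $D\bm{M}_jD$, $D=\mathrm{diag}(1,T/S_T)$, is uniformly well conditioned (the raw $\bm{M}_j^{-1}$ has norm of order $(T/S_T)^2$), but the coefficient transforms as $\hat{\bm h}=D\big[(D\hat{\bm M}_jD)^{-1}(D\hat{\bm v}_j)\big]$, so an $O_p(S_T^{-\kappa})$ bound on the rescaled quantities delivers $O_p(S_T^{-\kappa})$ only for the first component and $O_p\big((T/S_T)S_T^{-\kappa}\big)$ for the slope; your write-up asserts the stated rate for the whole vector without confronting this loss. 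This is precisely where the paper works entrywise with the explicit inverse and the determinant $a_0\asymp(S_T/T)^2$ (and, even there, it only details the first component, which is the one used at rate $S_T^{-\kappa}$ downstream). A smaller, fixable point: at the boundary $u=1$—the case actually treated in Theorem 1(b)—the effective kernel is one-sided, so after rescaling $\bm{M}_j$ is not $\mathrm{diag}(1,\mu_2)+o(1)$; the off-diagonal limit $E(\tilde X_j^2)m_1$ is nonzero, and uniform invertibility should instead be argued from $m_0m_2-m_1^2$ being bounded away from zero.
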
 
We introduce the following notation for the next lemma. Let 
\begin{align*}
    &\hat{\bm{h}}(X_{k,\cdot,T}(\cdot/T-u),X_{j,\cdot,T})=(\hat{h}_1(X_{k,\cdot,T}(\cdot/T-u),X_{j,\cdot,T}),\hat{h}_2(X_{k,\cdot,T}(\cdot/T-u),X_{j,\cdot,T}))\\
    &=\argmin_{\bm{h}}S_{T}^{-1}\sum_{t=T-S_T}^{T}(X_{k,t-h,T}(t/T-u)-h_1X_{j,t-h,T}-h_2X_{j,t-h,T}(t/T-u))^2
\end{align*}
Recall that $\hat{\bm{h}}(X_{k,\cdot,T}(\cdot/T-u),X_{j,\cdot,T})=\hat{\bm{A}}*\hat{\bm{B}}$, where:
\begin{align*}
&\hat{\bm{A}}=\begin{bmatrix}%
    S_T^{-1}\sum_{t=T-S_T}^{T}X_{j,t-h,T}^2& S_T^{-1}\sum_{t=T-S_T}^{T}X_{j,t-h,T}^2(t/T-u)\\
    S_T^{-1}\sum_{t=T-S_T}^{T}X_{j,t-h,T}^2(t/T-u) & S_T^{-1}\sum_{t=T-S_T}^{T}X_{j,t-h,T}^2(t/T-u)^2
    \end{bmatrix}^{-1}\\
    \\
&\hat{\bm{B}}=\begin{bmatrix}
    S_T^{-1}\sum_{t=T-S_T}^{T}X_{j,t-h,T}X_{k,t-h,T}(t/T-u)  \\S_T^{-1}\sum_{t=T-S_T}^{T}X_{j,t-h,T}X_{k,t-h,T}(t/T-u)^2
\end{bmatrix}
\end{align*}
We then let $\bm{h}(X_{k,\cdot,T}(\cdot/T-u),X_{j,\cdot,T})=\bm{A}*\bm{B}$:
\begin{align*}
&\bm{A}=\begin{bmatrix}%
    S_T^{-1}\sum_{t=T-S_T}^{T}E(X_{j,t-h,T}^2)& S_T^{-1}\sum_{t=T-S_T}^{T}E(X_{j,t-h,T}^2)(t/T-u)\\
    S_T^{-1}\sum_{t=T-S_T}^{T}E(X_{j,t-h,T}^2)(t/T-u) & S_T^{-1}\sum_{t=T-S_T}^{T}E(X_{j,t-h,T}^2)(t/T-u)^2
    \end{bmatrix}^{-1}\\
    \\
&\bm{B}=\begin{bmatrix}
    S_T^{-1}\sum_{t=T-S_T}^{T}E(X_{j,t-h,T}X_{k,t-h,T})(t/T-u)  \\S_T^{-1}\sum_{t=T-S_T}^{T}E(X_{j,t-h,T}X_{k,t-h,T})(t/T-u)^2
\end{bmatrix}
\end{align*}

\begin{lem}
Under conditions 5.1, 5.2, 5.3, 5.4, and for $\kappa$ as defined in Theorem 1, the following hold:
\begin{enumerate}[noitemsep]
\item $\sup_{j,k \leq p_T}|\hat{h}_1(X_{k,\cdot,T}(\cdot/T-u),X_{j,\cdot,T})-h_1(X_{k,\cdot,T}(\cdot/T-u),X_{j,\cdot,T})| = \zeta_{T,5}=O_p(S_T^{-\kappa+1}/T)$
\item $\sup_{j,k \leq p_T}|\hat{h}_2(X_{k,\cdot,T}(\cdot/T-u),X_{j,\cdot,T})-h_2(X_{k,\cdot,T}(\cdot/T-u),X_{j,\cdot,T})| = \zeta_{T,6}=O_p(S_T^{-\kappa})$
\end{enumerate}
\label{lemma3thm2}
\end{lem}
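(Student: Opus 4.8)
The plan is to read the two local linear coefficients off a $2\times2$ linear system, while carefully tracking the (highly heterogeneous) scales of the entries. Write $\hat{\bm{h}}(X_{k,\cdot,T}(\cdot/T-u),X_{j,\cdot,T})=\hat{\bm{A}}\hat{\bm{B}}$ and $\bm{h}(X_{k,\cdot,T}(\cdot/T-u),X_{j,\cdot,T})=\bm{A}\bm{B}$ as in the statement, and let $\hat{\bm{M}},\bm{M}$ be the un-inverted $2\times2$ matrices, so $\hat{\bm{A}}=\hat{\bm{M}}^{-1}$ and $\bm{A}=\bm{M}^{-1}$. Put $b_T=S_T/T$. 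Since we are at the boundary point $u=1$ with the uniform kernel, $(t/T-u)\in[-b_T,0]$ for $T-S_T\le t\le T$, so after the substitution $v=(t/T-u)/b_T\in[-1,0]$ the entries of $\bm{M}$ and $\bm{B}$ separate into scale and shape: $\bm{M}\approx\bm{D}\bm{M}_0\bm{D}$ and $\bm{B}\approx b_T\bm{D}\bm{B}_0$, where $\bm{D}=\mathrm{diag}(1,b_T)$ and the ``shape'' factors $\bm{M}_0,\bm{B}_0$ have entries bounded above uniformly in $j,k,u$, with moreover $\det\bm{M}_0$ bounded away from $0$ uniformly --- this last point is where the normalization (equivalently: $E\tilde{X}_j^2(u)$ bounded away from $0$ and $\infty$ uniformly in $j,u$) together with the strict inequality $\int_{-1}^0\!1\,dv\int_{-1}^0\!v^2\,dv>\big(\int_{-1}^0\!v\,dv\big)^2$ enters. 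In particular $\bm{M}^{-1}$ is \emph{not} $O(1)$: its three entries are of order $1$, $b_T^{-1}$, $b_T^{-2}$, which is precisely why a diagonal rescaling is needed.

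Set $\bm{N}=\bm{D}^{-1}\bm{M}\bm{D}^{-1}$, $\hat{\bm{N}}=\bm{D}^{-1}\hat{\bm{M}}\bm{D}^{-1}$, $\bm{C}=\bm{D}^{-1}\bm{B}$, $\hat{\bm{C}}=\bm{D}^{-1}\hat{\bm{B}}$, so that $\bm{D}(\hat{\bm{h}}-\bm{h})=\hat{\bm{N}}^{-1}\hat{\bm{C}}-\bm{N}^{-1}\bm{C}$, with $\bm{N}$ (uniformly) well conditioned and $\|\bm{C}\|=O(b_T)$. Now I would apply Lemma \ref{lemma1thm2}: its part 2 with $(i_1,i_2)=(2,1)$ and $(2,2)$ bounds the $(1,2)$ and $(2,2)$ entries of $\hat{\bm{M}}-\bm{M}$ by $O_p(S_T^{-\kappa+1}/T)=O_p(b_T S_T^{-\kappa})$ and $O_p(S_T^{-\kappa+2}/T^2)=O_p(b_T^2 S_T^{-\kappa})$, while the $(1,1)$ entry $S_T^{-1}\sum X_{j,t-h,T}^2-E(\cdot)$ is $O_p(S_T^{-\kappa})$ (the unweighted second-moment case, exactly as in Lemma \ref{lemma1}); its part 1 with $i=1,2$ bounds the two components of $\hat{\bm{B}}-\bm{B}$ by $O_p(b_T S_T^{-\kappa})$ and $O_p(b_T^2 S_T^{-\kappa})$; all of these are uniform in $j,k\le p_T$, the union bound over $(j,k)$ having already been taken in those lemmas. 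Dividing by the powers of $b_T$ produced by the conjugations $\bm{D}^{-1}(\cdot)\bm{D}^{-1}$ and $\bm{D}^{-1}(\cdot)$, the rescaled deviations all collapse to the same order: $\|\hat{\bm{N}}-\bm{N}\|=O_p(S_T^{-\kappa})$ and $\|\hat{\bm{C}}-\bm{C}\|=O_p(b_T S_T^{-\kappa})$, uniformly in $j,k$.

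Since $S_T^{-\kappa}\to0$, on an event of probability tending to one $\hat{\bm{N}}$ is invertible and, from $\hat{\bm{N}}^{-1}-\bm{N}^{-1}=-\hat{\bm{N}}^{-1}(\hat{\bm{N}}-\bm{N})\bm{N}^{-1}$ together with the uniform bound on $\bm{N}^{-1}$, $\|\hat{\bm{N}}^{-1}-\bm{N}^{-1}\|=O_p(S_T^{-\kappa})$ uniformly in $j,k$. Plugging into
\[
\bm{D}(\hat{\bm{h}}-\bm{h})=\hat{\bm{N}}^{-1}(\hat{\bm{C}}-\bm{C})+(\hat{\bm{N}}^{-1}-\bm{N}^{-1})\bm{C}
\]
bounds the right side by $O_p(1)\cdot O_p(b_T S_T^{-\kappa})+O_p(S_T^{-\kappa})\cdot O(b_T)=O_p(b_T S_T^{-\kappa})$. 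Since $\bm{D}(\hat{\bm{h}}-\bm{h})=(\hat{h}_1-h_1,\;b_T(\hat{h}_2-h_2))$, reading off the coordinates yields $\sup_{j,k}|\hat{h}_1-h_1|=O_p(b_T S_T^{-\kappa})=O_p(S_T^{-\kappa+1}/T)$ and $\sup_{j,k}|\hat{h}_2-h_2|=O_p(S_T^{-\kappa})$, which is the claim. The only non-routine ingredient is the scale bookkeeping: the design matrix is near-singular at the rate $b_T^2$, so a naive perturbation argument would lose a factor $b_T^{-2}$; the conjugation by $\bm{D}=\mathrm{diag}(1,b_T)$ is exactly what restores uniform conditioning and makes the perturbation match the heterogeneous rates delivered by Lemma \ref{lemma1thm2}.
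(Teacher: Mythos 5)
Your argument is correct and delivers exactly the two rates claimed in Lemma \ref{lemma3thm2}, but it reaches them by a somewhat different route than the paper, which omits this proof and says it follows the method of Lemma \ref{lemma2them2}: there the $2\times 2$ inverse is written out explicitly via its adjugate and determinant $\hat a_0$, each scalar term $\hat A^{-1}_{1i}\hat B_i - A^{-1}_{1i}B_i$ is split into three pieces, and the determinant is controlled from below through a truncation event $P(|\hat a_0| < D_T)$ with $D_T \le \min_k a_0(\log S_T)^{-1}$, exploiting that $a_0 = O(S_T^2/T^2)$ and the numerators are $O(S_T^4/T^4)$. You instead conjugate by $\bm{D}=\mathrm{diag}(1,S_T/T)$, which turns the near-singular design (determinant of order $b_T^2$) into a uniformly well-conditioned matrix, and then apply the resolvent identity $\hat{\bm{N}}^{-1}-\bm{N}^{-1}=-\hat{\bm{N}}^{-1}(\hat{\bm{N}}-\bm{N})\bm{N}^{-1}$ together with the heterogeneous rates of Lemma \ref{lemma1thm2} (and Lemma \ref{lemma1} for the unweighted $(1,1)$ entry, which you rightly note is not covered by part 2 of Lemma \ref{lemma1thm2} since that starts at $i_2=1$). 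The two arguments use the same concentration inputs and the same non-degeneracy of the design: your requirement that $\det\bm{N}_0$ be bounded away from zero is the analogue of the paper's implicit assumption that $\hat a_0\asymp S_T^2/T^2$, both resting on the normalization $\|\tilde X_j(u)\|=1$, smoothness of the local second moments over the window, and the boundary shape constants $1\cdot\tfrac13-\tfrac14>0$ for the one-sided uniform kernel at $u=1$ (the simplification the paper itself adopts in the proof of Theorem 1(b)). What your rescaling buys is cleaner bookkeeping, with no entrywise cofactor expansion and no ad hoc determinant threshold, and the uniformity in $j,k$ is correctly inherited from the union bounds already built into Lemmas \ref{lemma1} and \ref{lemma1thm2}; the paper's entrywise route, by contrast, makes the origin of each rate (one factor $S_T/T$ per power of $(t/T-u)$) visible term by term.
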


Let $\zeta_T=\max(\zeta_{T,1},\zeta_{T,2},\zeta_{T,3},\zeta_{T,4},\zeta_{T,6})=O_p(S_T^{-\kappa})$ and denote by $\omega$ a realization of all $S_T$ sample points involved in estimation. The next lemma bounds the difference between the sample and population learners at step $m$.

\begin{lem}
Suppose conditions 5.1, 5.2, 5.3, 5.4 hold. Then for $\kappa$ as defined in Theorem 1 and on the set $\mathcal{A}_T=\{\omega: \zeta_T(w)< 1/2, \zeta_{T,5}(\omega)\leq S_T/T\}$, we have:
\begin{align*}
    \sup_{j \leq p_T} ||\bm{Z}_{j,\cdot,T}\bm{\hat{h}}(\hat{R}_T^{m}f,X_{j,\cdot,T})-\inp[\big]{\tilde{R}^{m}f(\tilde{\bm{x}}_{T-h}(u))}{\tilde{X}_{j,T-h}(u)}\tilde{X}_{j,T-h}(u)||_{(T)} \leq C^m(\zeta_T+S_T/T),
\end{align*}
where $C$ does not depend on $m,T$.
\label{lemma4thm2}
\end{lem}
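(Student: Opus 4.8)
The plan is to adapt the recursive comparison used for the local constant case (Lemma \ref{lemma2}), now carrying the two components of the local linear fit $\hat{\bm h}(\cdot,X_{j,\cdot,T})=(\hat h_1(\cdot,X_{j,\cdot,T}),\hat h_2(\cdot,X_{j,\cdot,T}))$ separately. First I would split $\bm Z_{j,\cdot,T}\hat{\bm h}(\hat R_T^m f,X_{j,\cdot,T})=\hat h_1(\hat R_T^m f,X_{j,\cdot,T})\,X_{j,\cdot,T}+\hat h_2(\hat R_T^m f,X_{j,\cdot,T})\,X_{j,\cdot,T}(\cdot/T-u)$ and bound the target, via the triangle inequality, by $|\hat h_1(\hat R_T^m f,X_{j,\cdot,T})-\inp[\big]{\tilde R^m f}{\tilde X_{j,T-h}(u)}|\,\|X_{j,\cdot,T}\|_{(T)}+|\inp[\big]{\tilde R^m f}{\tilde X_{j,T-h}(u)}|\,\|X_{j,\cdot,T}-\tilde X_{j,\cdot}(u)\|_{(T)}+|\hat h_2(\hat R_T^m f,X_{j,\cdot,T})|\,\|X_{j,\cdot,T}(\cdot/T-u)\|_{(T)}$.

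Next I would dispose of the pieces that do not need the recursion. By the normalization $\|X_{j,\cdot,T}\|_{(T)}=1$, and by the norm-reducing property of the population remainders together with Condition \ref{ConditionA}, $|\inp[\big]{\tilde R^m f}{\tilde X_{j,T-h}(u)}|\le\|\tilde R^m f\|\le\|f\|\le B$; local stationarity gives $\|X_{j,\cdot,T}-\tilde X_{j,\cdot}(u)\|_{(T)}^2\le C\sum_t K_b(t/T-u)(|t/T-u|+T^{-1})^2=O((S_T/T)^2)$ after renormalization, so the middle term is $O(S_T/T)$. Since the kernel has compact support, $|t/T-u|\le b=S_T/T$ on the relevant indices, whence $\|X_{j,\cdot,T}(\cdot/T-u)\|_{(T)}\le b\,\|X_{j,\cdot,T}\|_{(T)}=S_T/T$; it therefore suffices to prove, on $\mathcal A_T$, $\sup_{j\le p_T}|\hat h_2(\hat R_T^m f,X_{j,\cdot,T})|\le C^m$ and $\sup_{j\le p_T}|\hat h_1(\hat R_T^m f,X_{j,\cdot,T})-\inp[\big]{\tilde R^m f}{\tilde X_{j,T-h}(u)}|\le C^m(\zeta_T+S_T/T)$.

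These last two bounds form the core, and I would obtain them recursively in $m$. Writing $(\hat h_1,\hat h_2)$ through the inverse of the $2\times2$ weighted design matrix with entries $S_T^{-1}\sum_t X_{j,t-h,T}^2(t/T-u)^i$, $i=0,1,2$, Lemma \ref{lemma1thm2} shows that, uniformly in $j$ and on $\mathcal A_T$, this matrix stays close to its population counterpart, whose determinant is of exact order $b^2$; the boundary computation of \cite{cai2007} then shows that, although the inverse has entries of order $b^{-1}$ and $b^{-2}$, they combine so that $\hat h_1$ reproduces the \emph{level} at $u$ of the smooth projection coefficient of $\hat R_T^m f$ onto $X_j$ up to an $O(\zeta_T+S_T/T)$ error, while the potential $O(b^{-1})$ blow-up in $\hat h_2$ cancels, leaving $\hat h_2$ equal to the \emph{derivative} of that coefficient plus $O(\zeta_T+S_T/T)$. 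Since $\hat{\bm h}(\cdot,X_{j,\cdot,T})$ is linear in its first argument, expanding $\hat R_T^m f$ through the sample recursion $\hat R_T^m f=\hat R_T^{m-1}f-\bm Z_{\hat{\mathcal S}_m,\cdot,T}\hat{\bm h}(\hat R_T^{m-1}f,X_{\hat{\mathcal S}_m,\cdot,T})$ (and $\tilde R^m f$ through its semi-population analogue) produces, besides the step-$(m-1)$ discrepancy, the quantities $\hat{\bm h}(X_{\hat{\mathcal S}_m,\cdot,T},X_{j,\cdot,T})$ and $\hat{\bm h}(X_{\hat{\mathcal S}_m,\cdot,T}(\cdot/T-u),X_{j,\cdot,T})$ compared with their population versions, which are precisely the objects bounded in Lemmas \ref{lemma2them2} and \ref{lemma3thm2} (the constraint $\zeta_{T,5}\le S_T/T$ in $\mathcal A_T$ being used here to keep the $X_j$-direction component of the slope regressor of order $S_T/T$). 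Collecting terms gives a recursion $\sup_j B_T(m,j)\le c_1\sup_j B_T(m-1,j)+c_2(\zeta_T+S_T/T)$ with a fixed $c_1>1$, where $B_T(m,j)$ is the step-$m$ base-learner discrepancy; starting from $B_T(0,j)=O(\zeta_T+S_T/T)$ (Lemmas \ref{lemma1thm2}--\ref{lemma2them2}) and iterating yields $\sup_j B_T(m,j)\le C^m(\zeta_T+S_T/T)$. The same recursion gives $\|\hat R_T^m f\|_{(T)}\le B+C^m(\zeta_T+S_T/T)$, which together with the crude bound $|\hat h_2(W,X_{j,\cdot,T})|\le Cb^{-1}\|W\|_{(T)}$ and the cancellation just described yields $\sup_j|\hat h_2(\hat R_T^m f,X_{j,\cdot,T})|=O(C^m)$; assembling the three terms of the first paragraph then gives the claim.

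I expect the main obstacle to be exactly the uniform-in-$j$ handling of these inverse design matrices: one must check that the $O(b^{-1})$ and $O(b^{-2})$ entries never destabilize $\hat h_2$ beyond an $O(C^m)$ factor over the $m$ boosting steps, so that after multiplication by $\|X_{j,\cdot,T}(\cdot/T-u)\|_{(T)}=O(S_T/T)$ the slope contribution remains of order $C^m(S_T/T)$ rather than $O(1)$ — this is the boundary-bias cancellation that makes local linear base learners work, and it is the genuinely new ingredient relative to Lemma \ref{lemma2}. The remaining steps are a careful but essentially routine bookkeeping of the recursion against the estimation bounds of Lemmas \ref{lemma1thm2}--\ref{lemma3thm2} and the local-stationarity approximation error.
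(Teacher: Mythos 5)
Your skeleton matches the paper's: the same triangle-inequality split into the $\hat h_1$ discrepancy, the local-stationarity term $|\inp[\big]{\tilde R^{m}f}{\tilde X_{j,T-h}(u)}|\,\|X_{j,\cdot,T}-\tilde X_{j,\cdot}(u)\|_{(T)}=O(S_T/T)$, and the slope term $|\hat h_2(\hat R_T^m f,X_{j,\cdot,T})|\,\|X_{j,\cdot,T}(\cdot/T-u)\|_{(T)}$; the same reduction to a recursion for $A_T(m,j)=|\hat h_1(\hat R_T^m f,X_{j,\cdot,T})-\inp[\big]{\tilde R^{m}f}{\tilde X_{j,T-h}(u)}|$ whose new coupling term is $\hat h_2(\hat R_T^{m-1}f,X_{\hat{\mathcal S}_m,\cdot,T})\,\hat h_1(X_{\hat{\mathcal S}_m,\cdot,T}(\cdot/T-u),X_{j,\cdot,T})$, controlled via Lemmas \ref{lemma2them2}--\ref{lemma3thm2} and the constraint $\zeta_{T,5}\le S_T/T$ defining $\mathcal A_T$. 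So far this is the paper's argument.

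The gap is in the one step you yourself identify as the crux: the bound $\sup_j|\hat h_2(\hat R_T^m f,X_{j,\cdot,T})|\le C^m$. Your justification — the crude bound $|\hat h_2(W,X_{j,\cdot,T})|\le Cb^{-1}\|W\|_{(T)}$ combined with a ``boundary-bias cancellation'' in the spirit of \cite{cai2007} — does not deliver it. The crude bound alone gives $O(T/S_T)$, which diverges, so everything rests on the asserted cancellation; but the Cai-type cancellation is a statement about the local linear estimator of a \emph{smooth, deterministic} coefficient function, whereas here $\hat h_2$ is applied to $\hat R_T^m f$, a data-dependent residual built from previously selected sample base learners, which has no a priori representation as a smooth function of rescaled time plus noise, and whose ``derivative'' is not a well-defined object you can appeal to. The paper closes this step without any cancellation or $b^{-1}$ bound: using linearity of $\hat{\bm h}(\cdot,X_{j,\cdot,T})$ in its first argument and the sample recursion for $\hat R_T^m f$, it derives a \emph{bivariate} recursion jointly in $\sup_j|\hat h_1(\hat R_T^m f,X_{j,\cdot,T})|$ and $\sup_j|\hat h_2(\hat R_T^m f,X_{j,\cdot,T})|$, in which every multiplier — $\hat h_1(X_{\hat{\mathcal S}_m,\cdot,T},X_{j,\cdot,T})$, $\hat h_2(X_{\hat{\mathcal S}_m,\cdot,T},X_{j,\cdot,T})$, $\hat h_2(X_{\hat{\mathcal S}_m,\cdot,T}(\cdot/T-u),X_{j,\cdot,T})$ — is bounded by $C+\zeta_T$ on $\mathcal A_T$ (Lemmas \ref{lemma2them2}, \ref{lemma3thm2}), and $\hat h_1(X_{\hat{\mathcal S}_m,\cdot,T}(\cdot/T-u),X_{j,\cdot,T})=O(S_T/T)$; iterating gives $|\hat h_2(\hat R_T^m f,X_{j,\cdot,T})|\le (C+1/2)^m$ geometrically, which is exactly what is needed to control both your coupling term and the slope term in the original split. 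You should replace the cancellation heuristic with this explicit joint recursion (or an equivalent argument); as written, the key inequality is asserted rather than proved.
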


\begin{lem}
Suppose the conditions needed for Theorem 1 hold, then for $m=m_T \rightarrow \infty$ slow enough we have:
\begin{align*}
||\tilde{R}_T^{m}f(\tilde{\bm{x}}_{T-h}(u))||=o_p(1)
\end{align*}
\label{lemma5thm2}
\end{lem}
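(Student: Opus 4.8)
The plan is to reduce Lemma \ref{lemma5thm2} to the population-level analysis already invoked for LC-Boost. The semipopulation remainder $\tilde{R}_T^m f$ is built from the \emph{exact} population inner products $\inp{\cdot}{\cdot}$; only the selected indices $\widehat{\mathcal{S}}_m$ come from the sample, here from the local linear fits. Since the update at step $m$ subtracts $\inp[\big]{\tilde{R}_T^{m-1}f(\tilde{\bm{x}}_{T-h}(u))}{\tilde{X}_{\widehat{\mathcal{S}}_m,T-h}(u)}\tilde{X}_{\widehat{\mathcal{S}}_m,T-h}(u)$ and the predictors are normalized so that $\|\tilde{X}_{j,T-h}(u)\|=1$, the norm-reducing identity $\|\tilde{R}_T^m f\|^2=\|\tilde{R}_T^{m-1}f\|^2-|\inp{\tilde{R}_T^{m-1}f}{\tilde{X}_{\widehat{\mathcal{S}}_m,T-h}(u)}|^2$ holds regardless of how $\widehat{\mathcal{S}}_m$ is chosen. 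Hence the only thing specific to LL-Boost that must be checked is that these sample-selected indices satisfy the relaxed greedy criterion (\ref{relaxed}) with respect to the population inner products; once that is in place, Temlyakov's bound (\ref{temlyakov}) applies verbatim and forces $\|\tilde{R}_T^m f\|\to 0$ as $m\to\infty$, exactly as in \cite{temlyakov2000weak} and \cite{Buhlmann2006}.

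To verify the greedy property, I would argue as follows. The LL-Boost selector is $\widehat{\mathcal{S}}_m=\argmax_j\|\bm{Z}_{j,\cdot,T}\hat{\bm{h}}(\hat{R}_T^{m-1}f,X_{j,\cdot,T})\|_{(T)}$. Lemma \ref{lemma4thm2} controls each fitted learner by its population counterpart: on $\mathcal{A}_T$, uniformly in $j\le p_T$, $\|\bm{Z}_{j,\cdot,T}\hat{\bm{h}}(\hat{R}_T^{m-1}f,X_{j,\cdot,T})-\inp{\tilde{R}^{m-1}f}{\tilde{X}_{j,T-h}(u)}\tilde{X}_{j,T-h}(u)\|_{(T)}\le C^m(\zeta_T+S_T/T)$; combining this with the local linear analogue of Lemma \ref{lemma1}(i) (namely $\|\tilde{X}_{j,T-h}(u)\|_{(T)}=1+O_p(S_T^{-\kappa})$ uniformly in $j$, which is part of Lemma \ref{lemma1thm2}) shows that $\|\bm{Z}_{j,\cdot,T}\hat{\bm{h}}(\cdot)\|_{(T)}$ lies uniformly within $C^m(\zeta_T+S_T/T)$ of $|\inp{\tilde{R}^{m-1}f}{\tilde{X}_{j,T-h}(u)}|$. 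Since $\zeta_T+S_T/T$ decays polynomially in $T$, choosing $m=m_T=o(\log T)$ makes this error $o_p(1)$. As in \cite{Buhlmann2006}, Condition \ref{ConditionA} (so that $f(\tilde{\bm{x}}_{T-h}(u))=\sum_j\beta_j(u)\tilde{X}_{j,T-h}(u)$ with bounded $\ell_1$ norm) forces $\sup_j|\inp{\tilde{R}^{m-1}f}{\tilde{X}_{j,T-h}(u)}|$ to be bounded away from zero whenever $\|\tilde{R}^{m-1}f(\tilde{\bm{x}}_{T-h}(u))\|$ still exceeds a prescribed $\epsilon>0$; consequently, on $\mathcal{A}_T$ and for $T$ large, maximizing $\|\bm{Z}_{j,\cdot,T}\hat{\bm{h}}(\cdot)\|_{(T)}$ delivers an index obeying (\ref{relaxed}) with, say, $b=1/2$, at every step $m\le m_T$ up to the (random) first step at which the residual norm drops below $\epsilon$ — after which norm reduction keeps it there.

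Feeding this into (\ref{temlyakov}) gives $\|\tilde{R}_T^m f(\tilde{\bm{x}}_{T-h}(u))\|^2\le B(1+m/4)^{-1/10}$ on $\mathcal{A}_T$ for all $m\le m_T$, which vanishes as $m_T\to\infty$; since $P(\mathcal{A}_T)\to 1$ by Lemmas \ref{lemma1thm2}--\ref{lemma3thm2}, choosing $m_T\to\infty$ slowly enough (e.g. $m_T=o(\log T)$) yields $\|\tilde{R}_T^{m_T}f(\tilde{\bm{x}}_{T-h}(u))\|=o_p(1)$, which is the claim. The step I expect to be the main obstacle is the passage from Lemma \ref{lemma4thm2} to the greedy property for the \emph{local linear} selectors: unlike LC-Boost, $\hat{\bm{h}}(\cdot,X_{j,\cdot,T})$ is a two-dimensional coefficient obtained from an inverse local Gram matrix $\hat{\bm{A}}$ whose off-diagonal entries have the faster-shrinking scale $S_T/T$ and must be handled through the auxiliary conditioning event $\{\zeta_{T,5}\le S_T/T\}$, so the $C^m$ amplification and the dependence on $S_T/T$ have to be tracked carefully. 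Once Lemma \ref{lemma4thm2} is available, however, the remainder of the argument duplicates the LC-Boost proof of Lemma \ref{lemma3} and \cite{Buhlmann2006}, and the remaining details are routine.
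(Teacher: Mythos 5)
Your proposal is correct and follows essentially the same route as the paper: it leans on Lemma \ref{lemma4thm2} (plus the approximate normalization $\|\tilde{X}_{j,\cdot}(u)\|_{(T)}\approx 1$) to show the LL-Boost selectors $\widehat{\mathcal{S}}_m$ satisfy the relaxed criterion (\ref{relaxed}) with $b=1/2$ on a high-probability set whenever the population sup-correlation exceeds a multiple of $C^m(\zeta_T+S_T/T)$, handles the complementary case via the $\ell_1$-bound of Condition \ref{ConditionA}, and then invokes Temlyakov's bound (\ref{temlyakov}) with $m_T\to\infty$ slowly, exactly as in the paper's proof and \cite{Buhlmann2006}. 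The only slight looseness is that the sup-correlation is not "bounded away from zero" uniformly but only bounded below by a quantity of order $\|\tilde{R}^{m-1}f\|^2/(B+\sqrt{m}\,\|f\|)$, which still dominates $C^{m}(\zeta_T+S_T/T)$ for $m_T=o(\log T)$, so the conclusion stands.
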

With the above lemmas we are now ready to analyze the term: $\hat{R}_{T}^{m}f(\bm{x}_{T-h,T})=f(\bm{x}_{T-h,T})-\hat{F}_{ll}^{(M_T)}(u,\bm{x}_{T-h,T})$. By the triangle inequality we obtain:
\begin{align}
    ||\hat{R}_{T}^{m}f(\bm{x}_{T-h,T})|| \leq ||\tilde{R}_T^{m}f(\tilde{\bm{x}}_{T-h}(u))||+ ||\hat{R}_T^{m}f(\bm{x}_{T-h,T})-\tilde{R}_T^{m}f(\tilde{\bm{x}}_{T-h}(u))||
    \label{trianglethm2}
\end{align}
the first term can be handled with lemma \ref{lemma5thm2}. For the second term, let $A_T(m)=||\hat{R}_T^{m}f(\bm{x}_{T-h,T})-\tilde{R}_T^{m}f(\tilde{\bm{x}}_{T-h}(u))||$. Using the definitions of the remainder functions, we then have a recursive relation:
\begin{align*}
    A_T(m)& \leq A_T(m-1)+|\hat{h}_1(\hat{R}_{T}^{m}f,X_{\hat{\mathcal{S}}_m,\cdot,T})-\inp[\big]{\tilde{R}^{m-1}f}{\tilde{X}_{\widehat{\mathcal{S}}_m,T-h}}|||X_{\widehat{\mathcal{S}}_m,T-h,T}||\\
    &+ ||\inp[\big]{\tilde{R}^{m-1}f}{\tilde{X}_{\widehat{\mathcal{S}}_m,T-h}}||||X_{\widehat{\mathcal{S}}_m,T-h,T}-\tilde{X}_{\widehat{\mathcal{S}}_m,T-h}(u)||\\
    &\leq A_T(m-1)+C^m(\zeta_T+S_T/T)+O(1/T) \text{ on the set } \mathcal{A}_T
\end{align*}
Where the last inequality follows from local stationarity and lemma \ref{lemma4thm2}.\footnote{Although not the exact statement of \ref{lemma4thm2}, the proof handles the specific term we need.} By the above recursive equation we obtain: $||\hat{R}_T^{m}f(\bm{x}_{T-h,T})-\tilde{R}_T^{m}f(\tilde{\bm{x}}_{T-h}(u))||\leq C^{m}(\zeta_T+S_T/T)$. If we choose $m=m_T \rightarrow \infty$ slow enough (e.g $m_T=o(log(T))$), then along with lemma \ref{lemma5thm2} and (\ref{trianglethm2})
we obtain:
\begin{align*}
    ||\hat{R}_{T}^{m}f(\bm{x}_{T-h,T})||=||f(\bm{x}_{T-h,T})-\hat{F}_{ll}^{(M_T)}(u,\bm{x}_{T-h,T})|| =o_p(1)
\end{align*}

\end{proof}

\begin{proof}[Proof of Lemma \ref{lemma1thm2}]
  $ $\newline
We start with (i), and we bound:
\begin{align}
    &P(|S_T^{-1}\bigg[\sum_{t=T-S_T}^{T}X_{j,t,T}X_{k,t,T}(t/T-u)^{i}-E(\sum_{t=T-S_T}^{T}X_{j,t,T}X_{k,t,T}(t/T-u)^{i})\bigg]|> S_T^{-\kappa+i}/T^{i})
    \label{lem4a}
\end{align}
We have a sum similar to that in lemma 1, except we have weights $(t/T-u)^{i} $. Define the function $K_b^{*}(t/T-u)=K_b(t/T-u)(t/T-u)^i$, and given that $K_b(t/T-u)$ is of bounded variation we have that $K_b^{*}(t/T-u)$ is of bounded variation. Additionally, since we had shown in the proof of lemma 1 that the product process $X_{j,t,T}X_{k,t,T}$ is locally stationary and satisfies the weak dependence condition, we can use theorem 2.7 (iii) in \cite{dahlhaus2017towards} directly to obtain:
\begin{align*}
    (\ref{lem4a})\leq O(S_T^{-r/2+r\kappa/2+1}) +O(\exp(-S_T^{1-2\kappa}))
\end{align*}

Part (ii) can be handled similarly. For part (iii), we have that $\epsilon_{jk,t,T}=X_{j,t,T}-\alpha_{jk}(t/T)X_{k,t-h,T}$, therefore is locally stationary and its stationary approximation satisfies the weak dependence condition, and has $r$ finite moments. Therefore, we get the same result as for (iii). For (iv), note that by definition $\epsilon_{j,t,T}$ has $\min(r,q)$ finite moments, and is locally stationary. Now if we let $r_1=\min(r,q)$ we get the same result as for part (i) with $r_1$ instead of $r$
.
\end{proof}

\begin{proof}[Proof of Lemma \ref{lemma2them2}]
  $ $\newline

We mainly discuss the proof for part (i), the rest can be handled similarly. Note that $X_{k,t-h,T}=\bm{h}(\tilde{X}_j,\tilde{X}_k)\bm{Z}_{j,t-h,T}+\ddot{h}_1((\tilde{X}_j,\tilde{X}_k))(c)(t/T-u)^2+\epsilon_{jk,t,T}$. Where $\ddot{h}_1(\tilde{X}_j,\tilde{X}_k)(\cdot)$ refers to the second derivative of $h_1(\tilde{X}_j,\tilde{X}_k)(\cdot)$. Therefore we have:
\begin{align*}
    |\hat{\bm{h}}(X_{j,\cdot,T},X_{k,\cdot,T})-\bm{h}(\tilde{X}_j,\tilde{X}_k)|=\hat{\bm{A}}^{-1}*\hat{\bm{B}}+\hat{\bm{A}}^{-1}*\bm{\hat{C}}
\end{align*}
where
\begin{align*}
    &\hat{\bm{A}}^{-1}=\begin{bmatrix}%
    S_T^{-1}\sum_{t=T-S_T}^{T}X_{k,t-h,T}^2& S_T^{-1}\sum_{t=T-S_T}^{T}X_{k,t-h,T}^2(t/T-u)\\
    S_T^{-1}\sum_{t=T-S_T}^{T}X_{k,t-h,T}^2(t/T-u) & S_T^{-1}\sum_{t=T-S_T}^{T}X_{k,t-h,T}^2(t/T-u)^2
    \end{bmatrix}^{-1}\\
    \\
&\hat{\bm{B}}=\begin{bmatrix}
    S_T^{-1}\sum_{t=T-S_T}^{T}X_{k,t-h,T}(t/T-u)^2  \\S_T^{-1}\sum_{t=T-S_T}^{T}X_{j,t-h,T}(t/T-u)^3
\end{bmatrix}    \\
\\
&\hat{\bm{C}}=\begin{bmatrix}
    S_T^{-1}\sum_{t=T-S_T}^{T}X_{k,t-h,T}\epsilon_{jk,t-h,T}  \\S_T^{-1}\sum_{t=T-S_T}^{T}X_{j,t-h,T}\epsilon_{jk,t-h,T}(t/T-u)
\end{bmatrix}\\
\\    
   \text{And let:  } &\bm{A}^{-1}=\begin{bmatrix}%
    S_T^{-1}\sum_{t=T-S_T}^{T}E(X_{k,t-h,T}^2)& S_T^{-1}\sum_{t=T-S_T}^{T}E(X_{k,t-h,T}^2)(t/T-u)\\
    S_T^{-1}\sum_{t=T-S_T}^{T}E(X_{k,t-h,T}^2)(t/T-u) & S_T^{-1}\sum_{t=T-S_T}^{T}E(X_{k,t-h,T}^2)(t/T-u)^2
    \end{bmatrix}^{-1}\\
    \\
&\bm{B}=\begin{bmatrix}
    S_T^{-1}\sum_{t=T-S_T}^{T}E(X_{k,t-h,T})(t/T-u)^2  \\S_T^{-1}\sum_{t=T-S_T}^{T}E(X_{j,t-h,T})(t/T-u)^3
\end{bmatrix}    \\
\\
&\bm{C}=\begin{bmatrix}
    S_T^{-1}\sum_{t=T-S_T}^{T}E(X_{k,t-h,T}\epsilon_{jk,t-h,T})  \\S_T^{-1}\sum_{t=T-S_T}^{T}E(X_{j,t-h,T}\epsilon_{jk,t-h,T})(t/T-u)
\end{bmatrix}
\end{align*}
Given $\hat{\bm{A}}$ is a 2 by 2 matrix we can calculate its inverse directly:
\begin{align*}
    &\hat{\bm{A}}^{-1}=\hat{a}_0^{-1}\begin{bmatrix}%
    S_T^{-1}\sum_{t=T-S_T}^{T}X_{k,t-h,T}^2(t/T-u)^2& S_T^{-1}\sum_{t=T-S_T}^{T}-X_{k,t-h,T}^2(t/T-u)\\
    S_T^{-1}\sum_{t=T-S_T}^{T}-X_{k,t-h,T}^2(t/T-u) & S_T^{-1}\sum_{t=T-S_T}^{T}X_{k,t-h,T}^2
    \end{bmatrix}\\
    \\
    & \text{where } \hat{a}_0=[\hat{A}_{11}\hat{A}_{22}-\hat{A}_{12}\hat{A}_{21}]
\end{align*}

We first handle $\hat{h}_1(X_{j,\cdot,T},X_{k,\cdot,T})$, and $\hat{h}_2(X_{j,\cdot,T},X_{k,\cdot,T})$ can be handled similarly. From the above equations we obtain:
\begin{align*}
    &P(|\hat{h}_1(X_{j,\cdot,T},X_{k,\cdot,T})-h_1(X_{j,\cdot,T},X_{k,\cdot,T})| > S_T^{-\kappa}) \\
    &\leq P(|\hat{A}^{-1}_{11}\hat{B}_{1}+\hat{A}^{-1}_{12}\hat{B}_{2}-(A^{-1}_{11}B_{1}+A^{-1}_{12}B_{2})|> cS_T^{-\kappa})\\
    &\leq P(|\hat{A}^{-1}_{11}\hat{B}_{1}-A^{-1}_{11}B_{1}|> cS_T^{-\kappa})+ P(|\hat{A}^{-1}_{12}\hat{B}_{2}-A^{-1}_{12}B_{2}|> cS_T^{-\kappa})
\end{align*}
For the first term, we let:
\begin{align*}
\hat{Q}_1=\hat{B}_1* S_T^{-1}\sum_{t=T-S_T}^{T}X_{k,t-h,T}^2(t/T-u)^2, \text{ and } Q_1=B_1* S_T^{-1}\sum_{t=T-S_T}^{T}E(X_{k,t-h,T}^2)(t/T-u)^2
\end{align*}
we then have: $|\hat{A}^{-1}_{11}\hat{B}_{1}-A^{-1}_{11}B_{1}|=|\hat{Q}_{1}\hat{a}_0^{-1}-Q_{1}a_0^{-1}|=|(\hat{a}_0^{-1}-a_0^{-1})(\hat{Q}_1-Q_1)+(\hat{Q}_1-Q_1)a_0^{-1}+(\hat{a}_0^{-1}-a_0^{-1})Q_1|$. We have that $a_0=O(S_T^2/T^2)$, and $Q=O(S_T^4/T^4)$. Therefore:
\begin{align} P(|\hat{A}^{-1}_{11}\hat{B}_{1}-A^{-1}_{11}B_{1}|> cS_T^{-\kappa})&\leq P(|(\hat{a}_0^{-1}-a_0^{-1})(\hat{Q}_1-Q_1)|>c_2n^{-\kappa}/3)\label{triple1}\\
 &+P(|(\hat{Q}_1-Q_1)a_0^{-1}|>cS_T^{-\kappa}/3|)\label{triple2}\\
 &+P(|(\hat{a}_0^{-1}-a_0^{-1})Q_1|>cS_T^{-\kappa}/3).\label{triple3}
\end{align}
For the RHS of (\ref{triple1}), we obtain:
\begin{align*} P(|(\hat{a}_0^{-1}-a_0^{-1})(\hat{Q}_1-Q_1)|>cn^{-\kappa}/3) &\leq P(|\hat{Q}_1-Q_1|>CS_T^{-\kappa/2})\\
&+ P(|\hat{a}_0^{-1}-a_0^{-1}|>CS_T^{-\kappa/2}).\end{align*}
Therefore we can focus on the terms (\ref{triple2}),(\ref{triple3}). We can handle (\ref{triple2}) directly using the fact that $a_0=O(S_T^2/T^2)$ along with lemma \ref{lemma2them2}. For (\ref{triple3}), note that $Q=O(S_T^4/T^4)$, and $|\hat{a}_0^{-1}-a_0^{-1}|=(\hat{a}_0-a_0)/(\hat{a}_0a_0)$.  We then obtain:
\begin{align*}
    (\ref{triple3}) \leq P(|\hat{a}_0-a_0|> CS_T^{-\kappa}D_TS_T^2/T^2)+P(|\hat{a}_0|< D_T)
\end{align*}
We can now choose $D_T\leq \min_{k \leq p_T}a_0*(log(S_T))^{-1}$. And we obtain the bound by applying lemma \ref{lemma1thm2}. For 
$P(|\hat{A}^{-1}_{12}\hat{B}_{2}-A^{-1}_{12}B_{2}|> cS_T^{-\kappa})$ we obtain a bound in similar fashion. Applying the union bound gives us the result.
\end{proof}

The proof for lemma \ref{lemma3thm2} can be obtained similarly to the proof of lemma \ref{lemma2them2}, therefore we omit the details.

\begin{proof}[Proof of Lemma \ref{lemma4thm2}]
  $ $\newline

We have that:
  \begin{align}
  &\sup_j||\bm{Z}_{j,\cdot,T}\bm{\hat{h}}(\hat{R}_T^{m}f,X_{j,\cdot,T})-\inp[\big]{\tilde{R}^{m-1}f(\tilde{\bm{x}}_{T-h}(u))}{\tilde{X}_{j,T-h}(u)}\tilde{X}_{j,T-h}(u)||_{(T)}\\
  &\leq \sup_j ||X_{j,\cdot,T}\hat{h}_1(\hat{R}_T^{m}f,X_{j,\cdot,T})-\inp[\big]{\tilde{R}^{m-1}f(\tilde{\bm{x}}_{T-h}(u))}{\tilde{X}_{j,T-h}(u)}\tilde{X}_{j,T-h}(u)||_{(T)} \label{lemma4I}\\
  &+ \sup_j ||\hat{h}_2(\hat{R}_T^{m}f,X_j)X_{j,\cdot,T}(\cdot/T-u)||_{(T)}
  \label{lemma4II}
  \end{align}

We will deal with term (\ref{lemma4II}) later. We first have:
\begin{align*}
(\ref{lemma4I}) \leq & \textrm{ }\sup_j||X_{j,\cdot,T}||_{(T)}\sup_j|\hat{h}_1(\hat{R}_T^{m}f,X_{j,\cdot,T})-\inp[\big]{\tilde{R}^{m}f(\tilde{\bm{x}}_{T-h}(u))}{\tilde{X}_{j,T-h}(u)}| \\
&+ \sup_j|\inp[\big]{\tilde{R}^{m}f(\tilde{\bm{x}}_{T-h}(u))}{\tilde{X}_{j,T-h}(u)}|||\tilde{X}_{j,\cdot}(u)-X_{j,\cdot,T}||_{(T)}
\end{align*}
By local stationarity and using the same techniques as the previous lemmas, we can show the term $||\tilde{X}_{j,\cdot}(u)-X_{j,t,T}||_{(T)}=O(S_T/T)$ on the set $\mathcal{A}_T$. Therefore we focus on the term $A_T(m,j)=|\hat{h}_1(\hat{R}_T^{m}f,X_{j,\cdot,T})-\inp[\big]{\tilde{R}^{m}f(\tilde{\bm{x}}_{T-h}(u))}{\tilde{X}_{j,T-h}(u)}|$. Note that by definition\\  $\inp[\big]{\tilde{R}^{m}f(\tilde{\bm{x}}_{T-h}(u))}{\tilde{X}_{j,T-h}(u)}=h_1(\tilde{R}^{m}f(\tilde{\bm{x}}_{T-h}(u)),\tilde{X}_{j,T-h}(u))$.

Now using a similar expansion as in the proof of theorem 1(a), we obtain:
\begin{align*}
    A_T(m,j) &= A_T(m-1,j) - (\inp[\big]{\tilde{R}^{m-1}f}{\tilde{X}_{\widehat{\mathcal{S}}_m,T-h}})\bigg(\hat{h}_1(X_{\widehat{\mathcal{S}}_m,\cdot,T},X_{j,\cdot,T})-h_1(\tilde{X}_{\widehat{\mathcal{S}}_m,uT-h}(u),\tilde{X}_{j,uT-h}(u))\bigg)\\
    &-\bigg(\hat{h}_1(\hat{R}^{m-1}f,X_{\widehat{\mathcal{S}}_m,\cdot,T})-\inp[\big]{\tilde{R}^{m-1}f}{\tilde{X}_{\widehat{\mathcal{S}}_m,uT-h}}\bigg)\hat{h}_1(X_{\widehat{\mathcal{S}}_m,\cdot,T},X_{j,\cdot,T})\\
    &- \hat{h}_2(\hat{R}^{m-1}f,X_{\widehat{\mathcal{S}}_m,\cdot,T})\hat{h}_1(X_{\widehat{\mathcal{S}}_m,\cdot,T}(\cdot/T-u),X_{j,\cdot,T})\\
    &=A_T(m-1,j)-I_{T,m}(j)-II_{T,m}(j)-III_{T,m}(j)
\end{align*}

As previously,we have that $\sup_{j}|I_{T,m}(j)| \leq |f(\tilde{\bm{x}}_{T-h}(u))||\zeta_T$, by lemma \ref{lemma2them2}, and the norm reducing property of the remainder functions. Similarly, $\sup_{j}|II_{T,m}(j)| \leq (1+\zeta_T) \sup_{j}A_T(m-1,j)$. We have to deal with term $\sup_j III_{T,m}(j)$. Recall that by definition and on the set $\mathcal{A}_T$:
\begin{align*}
    \sup_j |A_T(0,j)| = \sup_{j} ||\hat{h}_1(Y_{\cdot,T},X_{j,\cdot,T})-\inp[\big]{\tilde{Y}}{\tilde{X}_{j,T-h}(u)}|| \leq \zeta_T
\end{align*}

On the set $\mathcal{A}_T$, we have $\hat{h}_1(X_{\widehat{\mathcal{S}}_m,\cdot,T}(\cdot/T-u),X_{j,\cdot,T})=O(S_T/T)$, so if we can obtain a bound for the term $\sup_j|\hat{h}_2(\hat{R}^{m-1}f,X_{j,\cdot,T})|$, we have a recursive relationship and we can then use the same procedure as in the proof of theorem 1 (a) to obtain the answer. 

To start, recall that:
\begin{align*}
    \hat{R}_T^{1}f(\bm{x}_{t-h,T})=f(\bm{x}_{t-h,T})-\hat{h}_1(f,X_{\hat{\mathcal{S}}_1,\cdot,T}){X_{\hat{\mathcal{S}}_1,t-h,T}}-\hat{h}_2(f,X_{\hat{\mathcal{S}}_1,\cdot,T}){X_{\hat{\mathcal{S}}_1,t-h,T}})(t/T-u)
\end{align*}
therefore:
\begin{align*}
    |\hat{h}_2(\hat{R}_T^1f,X_{\hat{\mathcal{S}}_2,\cdot,T})| &\leq  |\hat{h}_2(f,X_{\hat{\mathcal{S}}_2,\cdot,T})|+|\hat{h}_1(f,X_{\hat{\mathcal{S}}_1,\cdot,T})||\hat{h}_2(X_{\hat{\mathcal{S}}_1,\cdot,T},X_{\hat{\mathcal{S}}_2,\cdot,T})|\\
    &+|\hat{h}_2(f,X_{\hat{\mathcal{S}}_1,\cdot,T})||\hat{h}_2(X_{\hat{\mathcal{S}}_1,\cdot,T}(\cdot/T -u ),X_{\hat{\mathcal{S}}_2,\cdot,T})|
\end{align*}
On the set $\mathcal{A}_T$, we have that $\sup_j|\hat{h}_2(\hat{R}_T^1f,X_{j,\cdot,T})| \leq |C+\zeta_T|+|C+\zeta_T|^2 + |C+\zeta_T|^2 = O(|C+\zeta_T|^2)$ and we can show the same for $\sup_j|\hat{h}_1(\hat{R}_T^1f,X_{j,\cdot,T})|$.

Now for general $m$, we have the following:
\begin{align}
    \sup_j|\hat{h}_2(\hat{R}_T^{m}f,X_{j,\cdot,T})| &\leq  \sup_j|\hat{h}_2(\hat{R}_T^{m-1}f,X_{j,\cdot,T})|+\sup_j|\hat{h}_1(\hat{R}_T^{m-1}f,X_{j,\cdot,T})||\hat{h}_2(X_{\hat{\mathcal{S}}_m,\cdot,T},X_{j,\cdot,T})| \nonumber\\
    &+\sup_j|\hat{h}_2(\hat{R}_T^{m-1}f,X_{j,\cdot,T})||\hat{h}_2(X_{\hat{\mathcal{S}}_m,\cdot,T}(\cdot/T -u ),X_{j,\cdot,T})|\label{h2}
\end{align}

and additionally:
\begin{align}
    \sup_j|\hat{h}_1(\hat{R}_T^{m}f,X_{j,\cdot,T})| &\leq  \sup_j|\hat{h}_1(\hat{R}_T^{m-1}f,X_{j,\cdot,T})|+\sup_j|\hat{h}_1(\hat{R}_T^{m-1}f,X_{j,\cdot,T})||\hat{h}_1(X_{\hat{\mathcal{S}}_m,\cdot,T},X_{j,\cdot,T})|\nonumber\\
    &+\sup_j|\hat{h}_2(\hat{R}_T^{m-1}f,X_{j,\cdot,T})|\sup_j|\hat{h}_1(X_{\hat{\mathcal{S}}_m,\cdot,T}(\cdot/T -u ),X_{j,\cdot,T})|
    \label{h1}
\end{align}

On the set $\mathcal{A}_T$ we have\\
$\sup_j|\hat{h}_1(X_{\hat{\mathcal{S}}_m,\cdot,T},X_{j,\cdot,T})|, \sup_j|\hat{h}_2(X_{\hat{\mathcal{S}}_m,\cdot,T},X_{j,\cdot,T})|,\sup_j|\hat{h}_2(X_{\hat{\mathcal{S}}_m,\cdot,T}(\cdot/T -u ),X_{j,\cdot,T})| \leq C+\zeta_t$ \\
and $\sup_j|\hat{h}_1(X_{\hat{\mathcal{S}}_m,\cdot,T}(\cdot/T -u ),X_{j,\cdot,T})| \leq O(S_T/T)$. Therefore (\ref{h2}),(\ref{h1}) define a simple bivariate recurrence relation which gives us a loose bound on $|\hat{h}_2(\hat{R}_T^mf,X_{j,\cdot,T})| \leq (C+1/2)^m$. Using this loose bound allows us to bound $III_{T,m}(j)$ and $(\ref{lemma4II})$, and the rest of the proof is straightforward.

\end{proof}

\begin{proof}[Proof of Lemma \ref{lemma5thm2}]
  $ $\newline
Note that: 
\begin{align*}
   \bm{Z}_{j,t,T}\bm{\hat{h}}(\hat{R}_T^{m}f,X_j) &= \inp[\big]{\tilde{R}^{m-1}f(\tilde{\bm{x}}_{T-h}(u))}{\tilde{X}_{j,T-h}(u)}\tilde{X}_{j,T-h}(u)\\
   &+ \bigg(\bm{Z}_{j,t,T}\bm{\hat{h}}(\hat{R}_T^{m}f,X_j)-\inp[\big]{\tilde{R}^{m-1}f(\tilde{\bm{x}}_{T-h}(u))}{\tilde{X}_{j,T-h}(u)}\tilde{X}_{j,T-h}(u)\bigg)
\end{align*}
From which we obtain, on the set $\mathcal{A}_T$, (as defined in lemma \ref{lemma4thm2}): 
\begin{align}
    ||\bm{Z}_{\hat{\mathcal{S}}_m,\cdot,T}\bm{\hat{h}}(\hat{R}_T^{m}f,X_{\hat{\mathcal{S}}_m})||_{(T)} &= \sup_j ||\bm{Z}_{j,\cdot,T}\bm{\hat{h}}(\hat{R}_T^{m}f,X_{j})||_{(T)}\nonumber \\
    & \geq \sup_j ||\inp[\big]{\tilde{R}^{m-1}f(\tilde{\bm{x}}_{T-h}(u))}{\tilde{X}_{j,T-h}(u)}\tilde{X}_{j,\cdot}(u)||_{(T)} - C^m(\zeta_T + S_T/T) \label{triangle_lemma8}
\end{align}

Using the same procedure we obtain, on the set $\mathcal{A}_T$:

\begin{align*}
    ||\inp[\big]{\tilde{R}^{m-1}f(\tilde{\bm{x}}_{T-h}(u))}{\tilde{X}_{\hat{\mathcal{S}}_m,T-h}(u)}&\tilde{X}_{\hat{\mathcal{S}}_m,T-h}(u)||_{(T)} \geq 
    ||\bm{Z}_{\hat{\mathcal{S}}_m,\cdot,T}\bm{\hat{h}}(\hat{R}_T^{m}f,X_{\hat{\mathcal{S}}_m})||_{(T)}  - C^m(\zeta_T + S_T/T) \\
    & \geq \sup_j ||\inp[\big]{\tilde{R}^{m-1}f(\tilde{\bm{x}}_{T-h}(u))}{\tilde{X}_{j,T-h}(u)}\tilde{X}_{j,\cdot}(u)||_{(T)} - 2C^m(\zeta_T + S_T/T)
\end{align*}
where the last inequality follows from (\ref{triangle_lemma8}). Without loss of generality we assume that $||\tilde{X}_{j,\cdot}(u)||_{(T)}=1$, given that we are assuming $||\tilde{X}_{j,t}(u)||=1$, and in light of lemma 1.

Define $\mathcal{B}_T=\{\omega: \sup_j |\inp[\big]{\tilde{R}^{m-1}f(\tilde{\bm{x}}_{T-h}(u))}{\tilde{X}_{j,T-h}(u)}| > 4C^m(\zeta_T + S_T/T)\}$. Then on the set $\mathcal{A}_T \cup \mathcal{B}_T$ we have:
\begin{align*}
    |\inp[\big]{\tilde{R}^{m-1}f(\tilde{\bm{x}}_{T-h}(u))}{\tilde{X}_{\hat{\mathcal{S}}_m,T-h}(u)}&| \geq .5 \sup_j |\inp[\big]{\tilde{R}^{m-1}f(\tilde{\bm{x}}_{T-h}(u))}{\tilde{X}_{j,T-h}(u)}|
\end{align*}

The rest of the proof is very similar to the proof in \cite{Buhlmann2006}, therefore we omit the details here.

\end{proof}

\section{Online Appendix B: Additional Empirical Results}

In this section, we include additional empirical results referred to in the main text. We start by presenting the relative MSFE for the full out of sample period and the three subperiods we studied for horizons $h=6,3,1$.  

We next plot 
\begin{align}
    MSFE^{h}_{(i)}(T_1,T_2)=\frac{\sum_{t=T_1}^{T_2}\hat{\epsilon}^{2}_{t,(i)}}{\sum_{t=T_1}^{T_2}\hat{\epsilon}^{2}_{t,(AR)}}, \quad \quad \text{with} \quad T_2=2018:8
\end{align}
for horizon $h=3$. 

Recall that: 
\begin{align*}
    \text{L-MSFE}_{i}(t_0)=\frac{\sum_{t=t_0-\Delta}^{t_0+\Delta}\hat{\epsilon}_{t,(i)}^{2}}{\sum_{t=t_0-70}^{t_0+\Delta}\hat{\epsilon}_{t,(AR)}^{2}}, \quad\quad \text{RL-MSFE}_{i_1,i_2}(t_0)=\frac{\text{L-MSFE}_{i_1}(t_0)}{\text{L-MSFE}_{i_2}(t_0)}
\end{align*}

We plot the $\text{RL-MSFE}_{i_1,i_2}$ for the following models:
\begin{enumerate}
    \item $i_1$=LC-Boost vs $i_2$=Boost
    \item $i_1=$LC-Boost vs $i_2=$LC-Boost Factor
    \item $i_1=$Boost Factor estimated using a 10 year rolling window vs $i_2=$LC-Boost Factor
    \item $i_1=$LL-Boost Factor vs $i_2$=LC-Boost Factor
\end{enumerate}

Lastly we include the simulation results using $t_5$ innovations.

\begin{table}[]
\small
\centering
\caption{Relative MSFE $h=6$}
\label{forecast6}
\begin{tabular}{|c|c|c|c|c|c|c|c|c|}
\hline
\multicolumn{9}{|c|}{Full Out of Sample Period 1971:9-2018:8}                                             \\ \hline
                & IP  & PAYEMS & UNRATE & CLF & RPI & CPI & FF  & TB3MS \\ \hline
TV-AR           & 1.1 & 1.01   & 1.08   & .77 & 1   & .83 & 1   & .99   \\ \hline
DI            & .81 & .81    & .73    & .98 & .86 & .96 & .91 & .92   \\ \hline
Lasso           & .82 & .78    & .72    & .85 & .91 & .90 & .80 & .93   \\ \hline
Boost           & .79 & .77    & .73    & .79 & .90 & .87 & .86 & .93   \\ \hline
Boost Factor    & .79 & .83    & .69    & .92 & .81 & .89 & .79 & .87   \\ \hline
LC-Boost        & .76 & .76    & .69    & .79 & .89 & .89 & .94 & .98   \\ \hline
LC-Boost Factor & .73 & .74    & .67    & .73 & .80 & .80 & .78 & .90   \\ \hline
LL-Boost Factor & .88 & .95    & .84    & .79 & .79 & .79 & .97 & 1.37  \\ \hline
\end{tabular}
\begin{tabular}{|c|c|c|c|c|c|c|c|c|}
\hline
\multicolumn{9}{|c|}{``Pre-Great Moderation" 1971:9-1982:12}                                                 \\ \hline
                & IP   & PAYEMS & UNRATE & CLF  & RPI & CPI  & FF   & TB3MS \\ \hline
TV-AR           & 1.08 & 1      & 1.12   & .88  & .97 & 1.07 & 1.06 & 1.05  \\ \hline
DI            & .56  & .64    & .61    & 1.40 & .76 & .88  & .91  & .88   \\ \hline
Lasso           & .55  & .62    & .56    & 1.20 & .80 & 1.03 & .77  & .93   \\ \hline
Boost           & .55  & .64    & .63    & 1.20 & .78 & .95  & .79  & .83   \\ \hline
Boost Factor    & .59  & .71    & .64    & 1.14 & .75 & .88  & .76  & .86   \\ \hline
LC-Boost        & .55  & .64    & .59    & .93  & .98 & 1.12 & .91  & .90   \\ \hline
LC-Boost Factor & .60  & .76    & .66    & .85  & .82 & 1.02 & .78  & .94   \\ \hline
LL-Boost Factor & .73  & .99    & .88    & .88  & .80 & .94  & .95  & 1.01  \\ \hline
\end{tabular}
\begin{tabular}{|c|c|c|c|c|c|c|c|c|}
\hline
\multicolumn{9}{|c|}{``Great Moderation" 1983:1-2006:12}                                                            \\ \hline
                & IP   & PAYEMS & UNRATE & CLF & RPI  & CPI & FF   & TB3MS \\ \hline
TV-AR           & 1.08 & .98    & 1.04   & .81 & 1.02 & .93 & .86  & .88   \\ \hline
DI            & 1.07 & 1.03   & .82    & .89 & .96  & .96 & .86  & .96   \\ \hline
Lasso           & 1.06 & 1.11   & .88    & .90 & 1.03 & .86 & .87  & .93   \\ \hline
Boost           & 1.08 & 1.09   & .85    & .80 & 1    & .86 & 1.01 & 1.08  \\ \hline
Boost Factor    & 1.04 & .99    & .70    & .87 & .93  & .83 & .82  & .85   \\ \hline
LC-Boost        & 1.14 & .99    & .85    & .89 & .95  & .74 & 1.04 & 1.1   \\ \hline
LC-Boost Factor & .98  & .80    & .71    & .80 & .85  & .83 & .80  & .76   \\ \hline
LL-Boost Factor & 1.04 & .98    & .84    & .92 & .97  & .88 & 1.06 & 1.05  \\ \hline
\end{tabular}
\begin{tabular}{|c|c|c|c|c|c|c|c|c|}
\hline
\multicolumn{9}{|c|}{``Post Great Moderation" 2007:1-2018:8}                                \\ \hline
                & IP   & PAYEMS & UNRATE & CLF & RPI & CPI  & FF   & TB3MS \\ \hline
TV-AR           & 1.18 & 1.09   & 1.06   & .64 & .98 & .59  & .72  & .76   \\ \hline
DI            & 1.12 & 1      & .84    & .78 & .83 & 1.01 & 1.21 & 1.16  \\ \hline
Lasso           & 1.21 & .86    & .86    & .52 & .91 & .82  & 1    & 1.02  \\ \hline
Boost           & 1.05 & .77    & .79    & .45 & .77 & .82  & 1.53 & 1.43  \\ \hline
Boost Factor    & 1.01 & .98    & .78    & .79 & .76 & .94  & 1.37 & 1.12  \\ \hline
LC-Boost        & .89  & .80    & .70    & .54 & .77 & .58  & .93  & 1.32  \\ \hline
LC-Boost Factor & .82  & .61    & .67    & .54 & .75 & .64  & .69  & .95   \\ \hline
LL-Boost Factor & 1.07 & .80    & .74    & .53 & .62 & .62  & .75  & .92   \\ \hline
\end{tabular}
\end{table}

\begin{table}[]
\small
\centering
\caption{Relative MSFE $h=3$}
\label{forecast3}
\begin{tabular}{|c|c|c|c|c|c|c|c|c|}
\hline
\multicolumn{9}{|c|}{Full Out of Sample Period 1971:9-2018:8}                   
\\ \hline
                & IP   & PAYEMS & UNRATE & CLF  & RPI  & CPI & FF  & TB3MS \\ \hline
TV-AR           & 1.04 & 1.01   & 1.01   & .86  & 1.01 & .91 & .98 & .98   \\ \hline
DI            & .84  & .81    & .78    & .98  & .85  & .96 & .90 & .94   \\ \hline
Lasso           & .89  & .83    & .82    & 1.03 & .87  & .97 & .84 & .96   \\ \hline
Boost           & .89  & .78    & .75    & .89  & .86  & .87 & .93 & 1.04  \\ \hline
Boost Factor    & .80  & .83    & .76    & .96  & .81  & .90 & .82 & .89   \\ \hline
LC-Boost        & .79  & .77    & .73    & .87  & .85  & .85 & .88 & 1.02  \\ \hline
LC-Boost Factor & .77  & .78    & .77    & .87  & .80  & .87 & .79 & .89   \\ \hline
LL-Boost Factor & .84  & .82    & .82    & .90  & .81  & .87 & .92 & 1.05  \\ \hline
\end{tabular}
\begin{tabular}{|c|c|c|c|c|c|c|c|c|}
\hline
\multicolumn{9}{|c|}{``Pre-Great Moderation" 1971:9-1982:12}                                                 \\ \hline
                & IP   & PAYEMS & UNRATE & CLF  & RPI  & CPI  & FF  & TB3MS \\ \hline
TV-AR           & 1.05 & 1      & 1      & .97  & 1.01 & 1.01 & 1   & 1.01  \\ \hline
DI            & .72  & .72    & .72    & 1.29 & .87  & .87  & .89 & .88   \\ \hline
Lasso           & .73  & .74    & .74    & 1.44 & .85  & 1.09 & .84 & .98   \\ \hline
Boost           & .77  & .89    & .74    & 1.14 & .95  & .91  & .92 & 1     \\ \hline
Boost Factor    & .71  & .78    & .78    & 1.15 & .80  & .83  & .80 & .86   \\ \hline
LC-Boost        & .78  & .69    & .75    & 1.02 & .95  & .88  & .89 & 1.02  \\ \hline
LC-Boost Factor & .74  & .82    & .84    & 1.03 & .85  & .97  & .78 & .89   \\ \hline
LL-Boost Factor & .84  & .93    & .90    & 1.05 & .94  & .87  & .87 & 1     \\ \hline
\end{tabular}
\begin{tabular}{|c|c|c|c|c|c|c|c|c|}
\hline
\multicolumn{9}{|c|}{``Great Moderation" 1983:1-2006:12}                                                           \\ \hline
                & IP   & PAYEMS & UNRATE & CLF & RPI & CPI & FF   & TB3MS \\ \hline
TV-AR           & 1.06 & 1.06   & 1.03   & .90 & 1   & .83 & .85  & .88   \\ \hline
DI            & .92  & .95    & .86    & .91 & .87 & .97 & .88  & 1.16  \\ \hline
Lasso           & .96  & 1.04   & .91    & .98 & .92 & .90 & .78  & .88   \\ \hline
Boost           & .97  & .96    & .84    & .91 & .90 & .90 & .89  & 1.12  \\ \hline
Boost Factor    & .89  & .89    & .76    & .94 & .87 & .89 & .86  & .95   \\ \hline
LC-Boost        & .94  & .96    & .81    & .94 & .88 & .83 & .86  & 1     \\ \hline
LC-Boost Factor & .86  & .78    & .79    & .92 & .84 & .83 & .84  & .89   \\ \hline
LL-Boost Factor & .94  & .85    & .85    & .97 & .90 & .86 & 1.23 & 1.28  \\ \hline
\end{tabular}
\begin{tabular}{|c|c|c|c|c|c|c|c|c|}
\hline
\multicolumn{9}{|c|}{``Post Great Moderation" 2007:1-2018:8}                                               \\ \hline
                & IP   & PAYEMS & UNRATE & CLF & RPI & CPI & FF   & TB3MS \\ \hline
TV-AR           & 1    & .98    & 1.08   & .70 & 1   & .92 & .74  & .86   \\ \hline
DI            & 1.03 & .95    & .78    & .83 & .82 & 1   & 1.37 & 1.28  \\ \hline
Lasso           & 1.19 & .81    & .87    & .75 & .83 & .83 & .79  & .97   \\ \hline
Boost           & 1.07 & .84    & .68    & .66 & .79 & .84 & 1.36 & 1.41  \\ \hline
Boost Factor    & .92  & .90    & .70    & .83 & .76 & .95 & 1.45 & 1.20  \\ \hline
LC-Boost        & .66  & .74    & .60    & .65 & .77 & .83 & .65  & 1.06  \\ \hline
LC-Boost Factor & .73  & .66    & .64    & .68 & .75 & .84 & .78  & .93   \\ \hline
LL-Boost Factor & .76  & .67    & .63    & .68 & .68 & .89 & .87  & 1.05  \\ \hline
\end{tabular}
\end{table}

\begin{table}[]
\small
\centering
\caption{Relative MSFE $h=1$}
\label{forecast1}
\begin{tabular}{|c|c|c|c|c|c|c|c|c|}
\hline
\multicolumn{9}{|c|}{Full Out of Sample Period 1971:9-2018:8}                   
\\ \hline
                & IP   & PAYEMS & UNRATE & CLF  & RPI  & CPI & FF  & TB3MS \\ \hline
TV-AR           & 1.04 & 1.04   & 1.01   & .96  & 1.06 & .96 & .99 & 1.02  \\ \hline
DI            & .92  & .87    & .87    & 1.04 & .92  & .93 & .92 & .95   \\ \hline
Lasso           & 1.06 & .92    & .94    & 1.26 & .94  & 1   & .88 & .98   \\ \hline
Boost           & .91  & .88    & .82    & 1.03 & .98  & .94 & .86 & 1     \\ \hline
Boost Factor    & .86  & .91    & .84    & 1.02 & .90  & .90 & .85 & .88   \\ \hline
LC-Boost        & .94  & .91    & .85    & 1.1  & .96  & .95 & .84 & .99   \\ \hline
LC-Boost Factor & .85  & .87    & .85    & 1.02 & .89  & .91 & .84 & .84   \\ \hline
LL-Boost Factor & .92  & .95    & .88    & 1.04 & .98  & .93 & .91 & .88   \\ \hline
\end{tabular}
\begin{tabular}{|c|c|c|c|c|c|c|c|c|}
\hline
\multicolumn{9}{|c|}{``Pre-Great Moderation" 1971:9-1982:12}                                                 \\ \hline
                & IP   & PAYEMS & UNRATE & CLF  & RPI  & CPI  & FF  & TB3MS \\ \hline
TV-AR           & 1.08 & 1.02   & 1.06   & .99  & 1.03 & .99  & 1   & 1.03  \\ \hline
DI            & .86  & .76    & .75    & 1.18 & 1.01 & .97  & .90 & .94   \\ \hline
Lasso           & 1.21 & .83    & 1.05   & 1.95 & 1.15 & 1.22 & .89 & .97   \\ \hline
Boost           & .90  & .78    & .77    & 1.19 & 1.12 & 1.10 & .84 & .97   \\ \hline
Boost Factor    & .80  & .86    & .80    & 1.15 & .97  & .91  & .82 & .81   \\ \hline
LC-Boost        & .96  & .82    & .77    & 1.30 & 1.08 & 1.14 & .82 & .96   \\ \hline
LC-Boost Factor & .81  & .86    & .79    & 1.19 & .96  & 1.02 & .82 & .78   \\ \hline
LL-Boost Factor & .95  & .96    & .87    & 1.20 & 1.12 & 1.04 & .85 & .75   \\ \hline
\end{tabular}
\begin{tabular}{|c|c|c|c|c|c|c|c|c|}
\hline
\multicolumn{9}{|c|}{``Great Moderation" 1983:1-2006:12}                                                            \\ \hline
                & IP  & PAYEMS & UNRATE & CLF  & RPI  & CPI & FF   & TB3MS \\ \hline
TV-AR           & 1   & 1.1    & .99    & .98  & 1.08 & .92 & .92  & .97   \\ \hline
DI            & .94 & .96    & .96    & .98  & .93  & .89 & 1.01 & .98   \\ \hline
Lasso           & .91 & 1      & .93    & .98  & .92  & .94 & .82  & 1.11  \\ \hline
Boost           & .88 & .98    & .90    & 1    & .94  & .89 & 1.02 & 1.25  \\ \hline
Boost Factor    & .91 & .97    & .91    & .99  & .91  & .93 & 1.03 & 1.31  \\ \hline
LC-Boost        & .91 & 1.04   & .93    & .97  & .90  & .79 & .98  & 1.24  \\ \hline
LC-Boost Factor & .91 & .95    & .94    & .97  & .88  & .84 & 1    & 1.30  \\ \hline
LL-Boost Factor & .90 & .99    & .94    & 1.01 & .96  & .84 & 1.50 & 1.84  \\ \hline
\end{tabular}
\begin{tabular}{|c|c|c|c|c|c|c|c|c|}
\hline
\multicolumn{9}{|c|}{``Post Great Moderation" 2007:1-2018:8}                                                \\ \hline
                & IP   & PAYEMS & UNRATE & CLF & RPI  & CPI & FF   & TB3MS \\ \hline
TV-AR           & 1.04 & .94    & .99    & .87 & 1.05 & .97 & .84  & .96   \\ \hline
DI            & .98  & 1.05   & .90    & .95 & .89  & .95 & 1.67 & 1.33  \\ \hline
Lasso           & 1.05 & 1.02   & .81    & .93 & .90  & .88 & .77  & .96   \\ \hline
Boost           & .94  & .93    & .75    & .92 & 1.01 & .85 & 1.35 & 1.24  \\ \hline
Boost Factor    & .87  & .95    & .79    & .94 & .87  & .86 & 1.46 & 1.35  \\ \hline
LC-Boost        & .96  & .83    & .79    & 1.1 & 1    & .95 & .89  & .94   \\ \hline
LC-Boost Factor & .84  & .71    & .79    & .89 & .87  & .89 & .94  & .96   \\ \hline
LL-Boost Factor & .88  & .75    & .78    & .88 & .98  & .94 & .85  & 1.06  \\ \hline
\end{tabular}
\end{table}

\begin{table}[]
\centering
\small
\caption{$\text{DGP  1-14}:$ Relative MSFE, $t_5$ Innovations}
\label{tdist}
\begin{tabular}{|c|c|c|c|c|c|c|}
\hline
DGP & AR (3) & Rolling AR (3) & Rolling Boost & LC-Boost & LL-Boost & Lasso \\ \hline
1   & 1.67   & 1.84           & 1.80          & 1.06     & 1.21     & 1.03  \\ \hline
2   & 1.24   & 1.26           & 1.26          & 1.12     & 1.12     & 1.18  \\ \hline
3   & 1.13   & 1.22           & 1.10          & .76      & .77      & 1.09  \\ \hline
4   & .96    & 1.04           & .94           & .72      & .68      & 1.02  \\ \hline
5   & .77    & .83            & .76           & .87      & .68      & .92   \\ \hline
6   & 4.23   & 4.78           & 1.85          & 1.00     & 1.03     & 1.05  \\ \hline
7   & 3.58   & 3.85           & 1.33          & .90      & .83      & 1.09  \\ \hline
8   & 3.73   & 4.19           & 1.92          & .72      & .70      & 1.11  \\ \hline
9   & 1.81   & 1.97           & 1.03          & .71      & .49      & 1.18  \\ \hline
10  & 1.67   & 1.82           & 1.24          & .92      & .68      & 1.20  \\ \hline
11  & 2.20   & 2.37           & 1.18          & .85      & .75      & 1.10  \\ \hline
12  & 1.08   & 1.21           & .42           & .76      & .21      & 1.13  \\ \hline
13  & 1.99   & 2.05           & 1.06          & .75      & .54      & 1.21  \\ \hline
14  & .81    & .89            & .85           & .92      & .71      & 1.02  \\ \hline
\end{tabular}
\end{table}

\begin{figure}
    \centering
    \includegraphics[scale=.75]{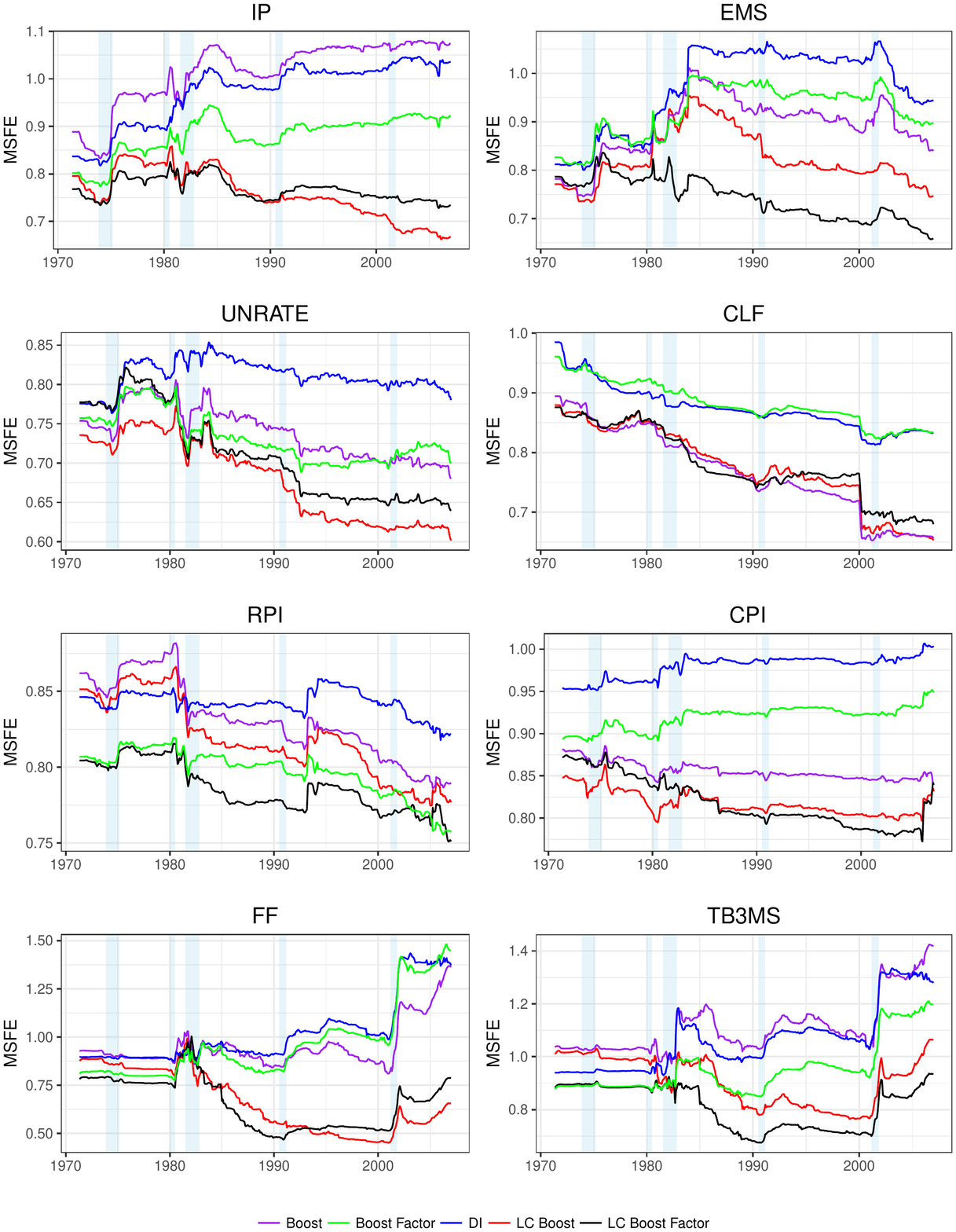}
    \caption{\footnotesize{\textbf{MSFE by start date of out of sample period. Horizon $h=3$}. More specifically we plot: $
    MSFE^{3}_{(i)}(T_1,T_2)=\sum_{t=T_1}^{T_2}\hat{\epsilon}^{2}_{t,(i)}/\sum_{t=T_1}^{T_2}\hat{\epsilon}^{2}_{t,(AR)}$, where we $T_1$ vary from 1971:9 until 2006:12,  with $T_2$=2018:8. Shaded regions represent NBER recession dates.}}
    \label{MSFE3}
\end{figure}

\begin{figure}
    \centering
    \includegraphics[scale=.75]{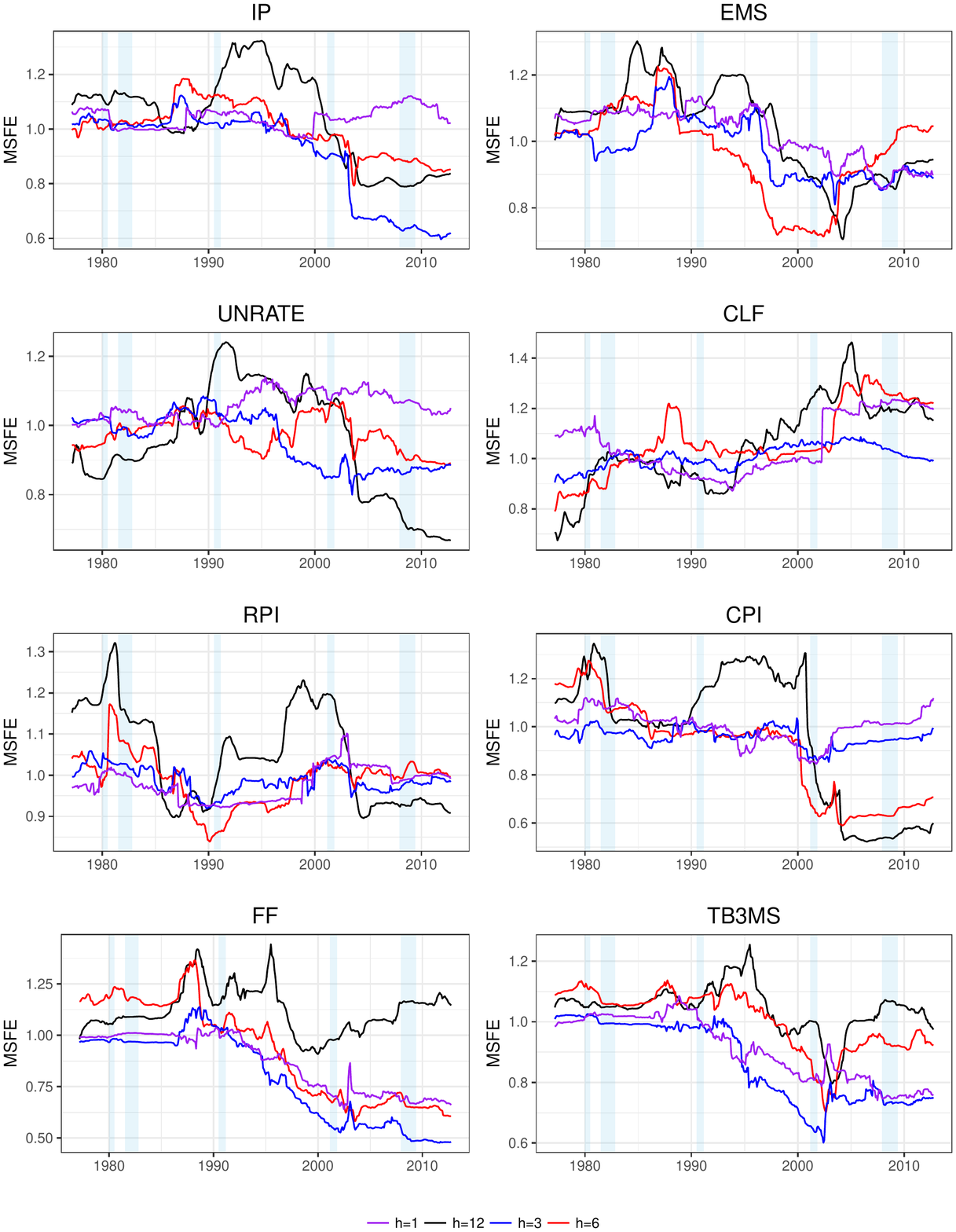}
    \caption{\footnotesize{\textbf{Local MSFE of LC-Boost relative to Local MSFE of Boost}: To obtain the local MSFE of a model we use a rolling mean with a window size of 70 observations, which gives us: $
    \text{L-MSFE}_{i}(t_0)=\sum_{t=t_0-70}^{t_0+70}\hat{\epsilon}_{t,(i)}^{2}/\sum_{t=t_0-70}^{t_0+70}\hat{\epsilon}_{t,(AR)}^{2}$
with the convention that $\hat{\epsilon}_{t,(i)}=0$ for $t\leq 0, t\geq T$. We then plot $\text{L-MSFE}_{LC Boost}(t)/\text{L-MSFE}_{Boost}(t)$, for $t=\text{1977:3},\ldots,\text{2012:10}$}}
    \label{localMSFE3}
\end{figure}

\begin{figure}
    \centering
    \includegraphics[scale=.75]{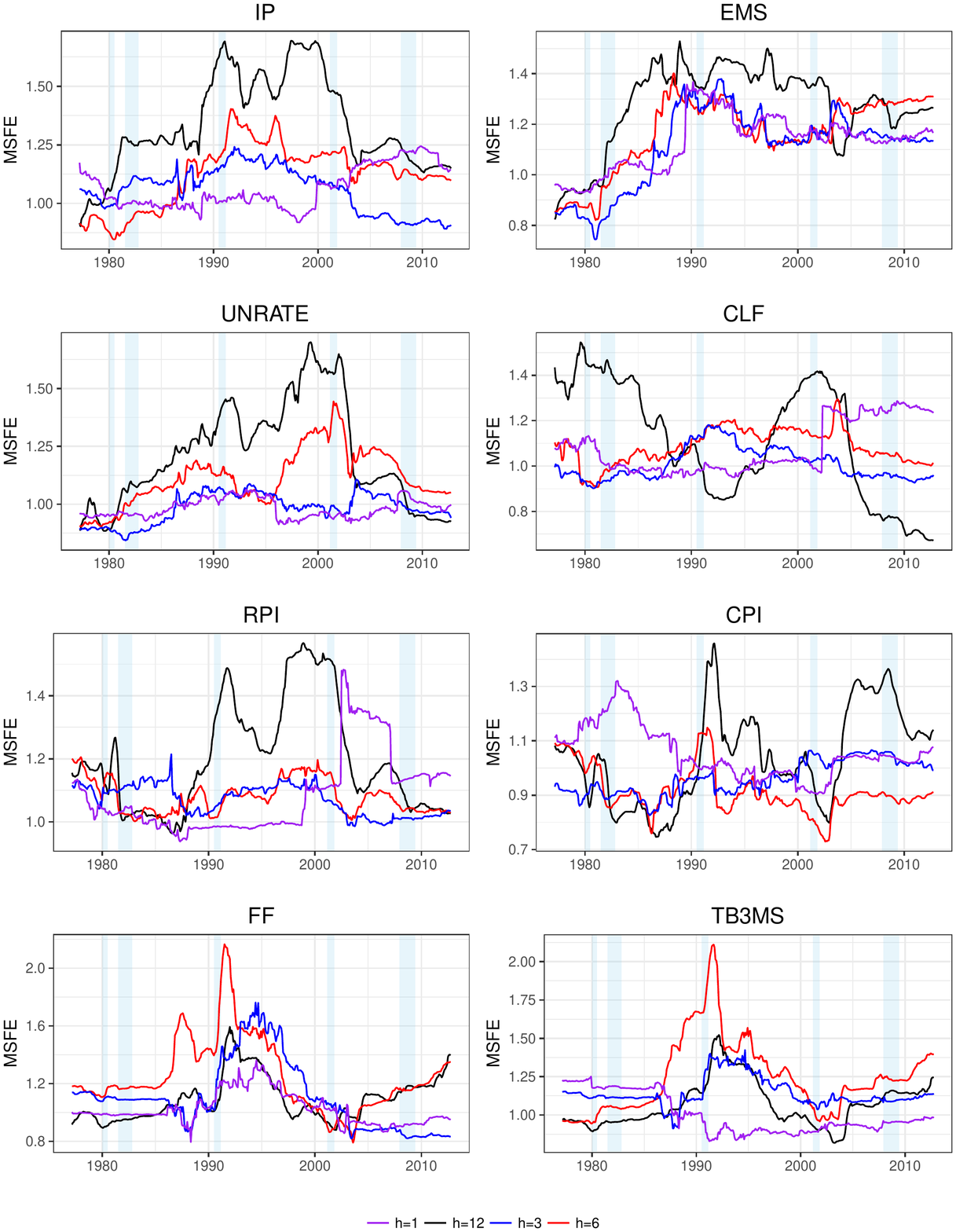}
    \caption{\footnotesize{\textbf{L-MSFE of LC-Boost relative to L-MSFE of LC-Boost Factor}: This figure uses a window size of 70 observations to calculate the L-MSFE, see notes to figure \ref{localMSFE3}}}
\end{figure}

\begin{figure}
    \centering
    \includegraphics[scale=.75]{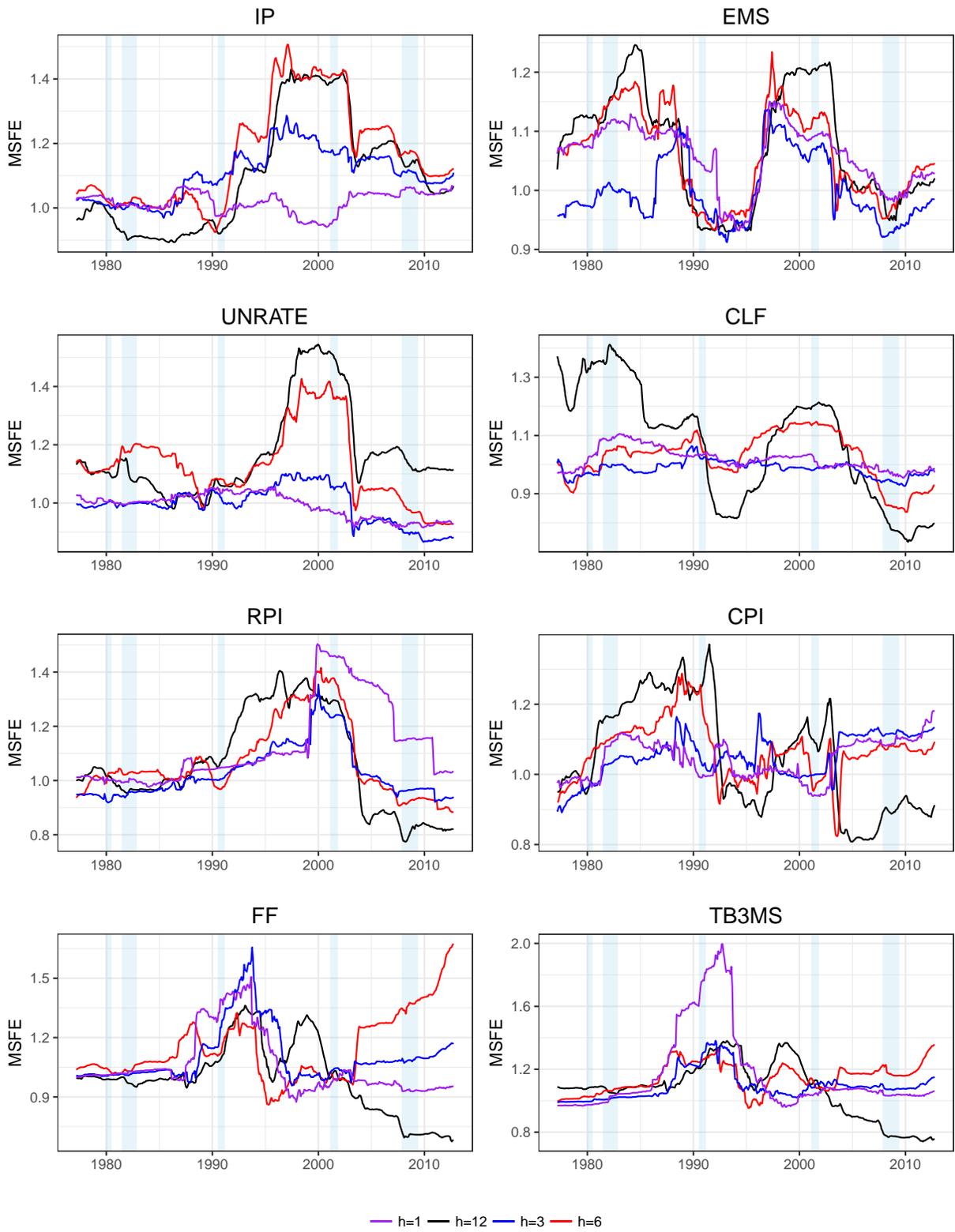}
    \caption{\footnotesize{\textbf{L-MSFE of Boost Factor using 10 year rolling window relative to L-MSFE of LC-Boost Factor}: This figure uses a window size of 70 observations to calculate the L-MSFE, see notes to figure \ref{localMSFE3}}}

\end{figure}

\begin{figure}
    \centering
    \includegraphics[scale=.75]{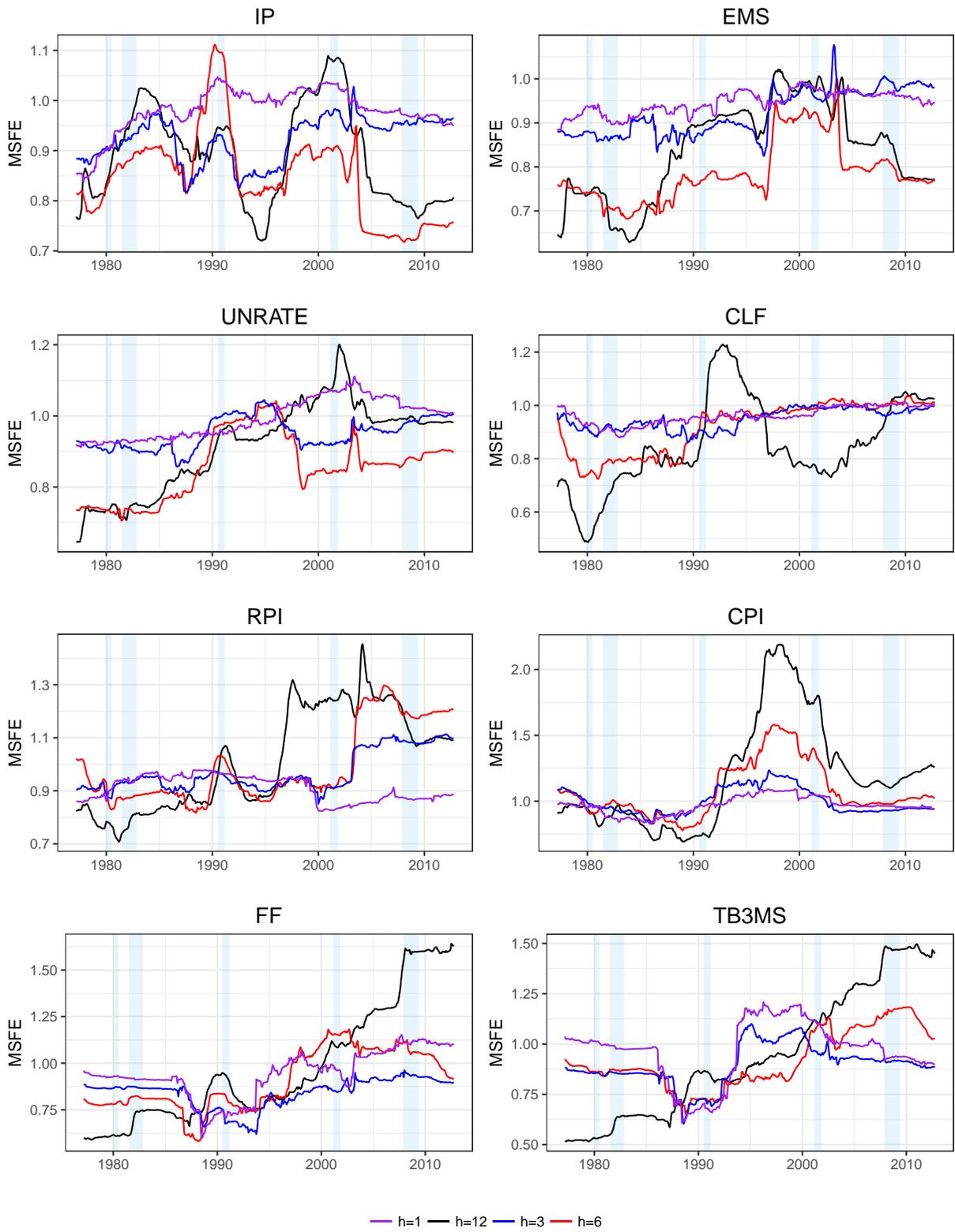}
    \caption{\footnotesize{\textbf{L-MSFE of LC-Boost Factor relative to L-MSFE of LL-Boost Factor}: We use a window size of $70$ observations, see notes to figure \ref{localMSFE3} for details. Colored lines represent the different horizons.}}
    \label{localMSFE2}
\end{figure}

\clearpage

\end{document}